\newcommand{\cmark}{\textcolor{green!60!black}{\ding{51}}} 
\newcommand{\xmark}{\textcolor{red}{\ding{55}}}            
\newtheorem{theorem}{Theorem}[section]
\newtheorem{lemma}{Lemma}[section]
\newtheorem{corollary}{Corollary}[section]
\newtheorem{definition}{Definition}[section]
\newtheorem{claim}{Claim}[section]
\newcounter{linecounter}
\newcommand{\linenumbering}{\ifthenelse{\value{linecounter}<10}{(0\arabic{linecounter})}{(\arabic{linecounter})}}
\renewcommand{\line}[1]{\refstepcounter{linecounter}\label{#1}\linenumbering}
\newcommand{\resetline}[1]{\setcounter{linecounter}{0}#1}
\renewcommand{\thelinecounter}{\ifnum \value{linecounter} >
9\else 0\fi \arabic{linecounter}}
\newcommand {\yes} {{\sf YES}\xspace}
\newcommand {\no} {{\sf NO}\xspace}
\newcommand {\maybe} {{\sf MAYBE}\xspace}
\newcommand {\true} {{\sf true}\xspace}
\newcommand {\false} {{\sf false}\xspace}
\newcommand {\snap} {{\sf Snapshot}\xspace}
\newcommand {\adv} {{\mathbb A}\xspace}
\newcommand {\advt} {{\mathbb A}^\tau\xspace}
\newcommand {\sd} {{\sf SD}\xspace}
\newcommand {\wad} {{\sf WAD}\xspace}
\newcommand {\wod} {{\sf WOD}\xspace}
\newcommand {\wde} {{\sf WD}\xspace}
\newcommand {\psd} {{\sf PSD}\xspace}
\newcommand {\pwd} {{\sf PWD}\xspace}
\begin{document}

\title{Asynchronous Fault-Tolerant Language Decidability for
Runtime Verification of Distributed Systems}

\author{Armando Casta\~neda}
\affiliation{
  \institution{Instituto de Matem\'aticas, Universdad Nacional Aut\'onoma de M\'exico}
  \city{Mexico City}
  \country{M\'exico}
}
\email{armando.castaneda@im.unam.mx}

\author{Gilde Valeria Rodr\'iguez}
\affiliation{
  \institution{Posgrado en Ciencia e Ingenier\'ia de la Computaci\'on,
      Universidad Nacional Aut\'onoma de M\'exico}
  \city{Mexico City}
  \country{M\'exico}
  }
\email{gildevroji@gmail.com}

\renewcommand{\shortauthors}{Casta\~neda and Rodr\'iguez}

\begin{abstract}
 Implementing correct distributed systems is an error-prone task.
Runtime Verification (RV) offers a lightweight formal method 
to improve reliability by monitoring system executions against correctness properties.
However, applying RV in distributed settings—where no 
process has global knowledge—poses
fundamental challenges, particularly under full asynchrony and fault tolerance.
This paper addresses the Distributed Runtime Verification (DRV) problem under such conditions.
In our model, each 
process in a distributed monitor
receives a fragment 
of the input \emph{word} describing 
system behavior and must decide whether this word belongs to the 
\emph{language} representing the correctness property being verified. 
Hence, the goal is to decide languages in a distributed fault-tolerant manner. 
We propose several decidability definitions, study the relations among them,
and prove possibility and impossibility results. One of our main results is a 
characterization of the correctness properties that can be decided asynchronously. 
Remarkably, it applies to \emph{any} language decidability definition.
Intuitively, the characterization is that only properties with \emph{no real-time order constraints} can be
decided in asynchronous fault-tolerant settings.
These results expose the expressive limits of DRV in realistic systems, 
as several properties of practical interest 
rely on reasoning about real-time order of events in executions.
To overcome these limitations, we introduce a weaker model where the
system under inspection is verified \emph{indirectly}.
Under this weaker model
we define \emph{predictive decidability}, 
a decidability definition that turn some real-time sensitive correctness properties verifiable.
Our framework unifies and 
extends existing DRV theory and sharpens the boundary of 
runtime monitorability under different assumptions.
\end{abstract}

\maketitle

\section{Introduction}

Implementing correct distributed systems is an error-prone task. 
Faults often stem from subtle interactions between components and
 the need to reason about exponentially many possible executions,
  due to communication delays or process failures. Even after extensive testing, 
  flaws can persist in production systems.

Runtime verification~\cite{BF18, FHR13, HG05, LS09} addresses this
 challenge by dynamically checking system correctness through 
 \emph{monitors}—algorithms that determine whether the current 
 execution satisfies a correctness criterion. Monitors can be 
 applied to centralized or distributed systems, in both hardware and software.

In \emph{distributed runtime verification}, the monitors themselves 
are distributed systems. The central challenge~\cite{BFRT16} 
is that processes have only \emph{partial views} of the execution 
but must reach \emph{consistent global decisions}—a problem 
exacerbated in \emph{asynchronous fault-tolerant} environments
due to the impossibility of consensus~\cite{FLP85}. This work 
focuses precisely on that setting: fault-tolerant, asynchronous 
distributed runtime verification of distributed systems.

Prior research has focused on synchronous or semi-synchronous 
settings (e.g.,\cite{AFIMP20,BGKS20,GXJLSBH22,RF22}) or 
asynchronous systems without faults (e.g.,\cite{CGNM13,FFY08,SS14,ETQ05}), 
using either message-passing or shared-memory models. A few asynchronous, 
fault-tolerant shared-memory monitors exist under strong timing 
assumptions about the system being verified~\cite{BFRR22,FRT13,FRT20}, 
hence these are \emph{not fully} asynchronous (see Section~\ref{sec:related-work}).

However, a key issue has been largely overlooked: many 
correctness properties—such as \emph{linearizability}~\cite{HW90},
 \emph{sequential consistency}~\cite{L79}, and some
  \emph{eventual consistency} definition~\cite{ShapiroPBZ11,Vogels09}—require 
  reasoning about \emph{real-time order} of events,
  which in general is impossible to detect under full asynchrony~\cite{L78}.
  Recent work~\cite{podc23,rv24} 
  introduced the first fully asynchronous,  fault-tolerant monitors for linearizability,
  that address this real-time order problem for the first time.

This paper generalizes those insights. We study the distributed
 runtime verification of distributed systems under full asynchrony
  and crash faults, explicitly addressing the problem of distinguishing 
  real-time order. We extend prior work~\cite{podc23,rv24,BFRR22,FRT20}
   to broader correctness criteria and illustrate our results on objects
    such as ledgers, counters, and registers.
 
We propose a unified framework for distributed runtime 
verification in asynchronous,  shared-memory, fault-tolerant settings 
(Sections~\ref{sec:dist-lang} to~\ref{sec:asynch-decidability}). 
The central insight is to formalize the system under verification as an asynchronous
 adversary ($\adv$), which captures nondeterministic delays and crashes. 

In our model, each process receives a subsequence with invocations and responses of the input \emph{word} describing 
the system behavior, and must decide whether this word belongs to the 
\emph{language} representing the correctness property that is verified. Hence, the goal
 is to \emph{decide} a language in a distributed, wait-free manner. We 
 formalize this through a notion of decidability, where each execution 
 leads processes to emit a verdict (e.g., $\yes$, $\no$, $\maybe$).
 We  primarily focus on decidability with two-valued verdicts,  $\yes$ and $\no$,
 and  define \emph{strong decidability} as a property
  where all processes eventually and correctly accept or reject the 
  input—capturing classical soundness and completeness requirements in runtime verification~\cite{E-HF18, MB15, NFBB17}.
 We also define \emph{weak decidability}, where verdicts 
 need not be unanimous or immediate.

 We first prove that linearizability, sequential consistency and \emph{strong eventual counters}~\cite{AlmeidaB19}
  are in general neither strongly decidable nor weakly decidable 
  (Lemma~\ref{lemma:no-wd-lin-sc}, 
  Corollaries~\ref{coro:no-sd-lin-sc}, \ref{coro:no-sd-lin-sc-2} and~\ref{coro:no-wd-lin-sc}).
   We also show that
\emph{weak eventual counters}~\cite{AlmeidaB19}
 are not strongly decidable 
 (Lemma~\ref{lemma:no-sd-evc}) 
 but are weakly decidable (Lemma~\ref{lemma:wd-evc}).
 This establishes that strong decidability is strictly contained in weak decidability (Theorem~\ref{theo:separation-sd-wd}).

We then present one of our main results: a characterization
stating that only properties without real-time constraints can be decided under $\adv$ (Theorem~\ref{theo:rt-oblivious}). 
Remarkably, this result applies to \emph{any} decidability definition.
The characterization thus proves \emph{inherent limitations} for asynchronous, fault-tolerant runtime verification:  
it is simply impossibly to verify real-time sensitive correctness properties such as linearizability, 
regardless of the decidability definition. 

To circumvent this limitation, we propose a relaxed model where the 
adversary $\adv$ is replaced by a \emph{timed} adversary ($\advt$), which
 is obtained from $\adv$ by attaching \emph{timestamps} to  $\adv$'s events
(Section~\ref{sec:timed-adversaries}). 
The weaker adversary $\advt$ enables new forms of verification.
Prior work~\cite{podc23,rv24} already explored this \emph{indirect verification} regime
to weakly verify linearizability in asynchronous, fault-tolerant settings,
under a decidability definition that we call here \emph{predictive strong decidability}.
The key idea is to leverage 
 \emph{indistinguishability}: even if $\adv$ cannot be verified,
  the timestamps from $\advt$ allow processes to reason about time intervals 
  in which operations of the verified system occur in an execution.
Predictive strong decidability allows false negatives—incorrect 
executions however never go undetected—but ensures that any false negative verdict
 is justified by a valid timestamp-based explanation  that justifies the existence
 of an execution of $\advt$ that deviates from the correctness property that is verified.

Unlike~\cite{podc23,rv24}, we explore more in detail what can be verified under $\advt$.
We extend this framework by showing that, while in general linearizability 
is predictively strongly decidable, already shown in~\cite{podc23,rv24}, strong eventual counters 
are not (Lemma~\ref{lemma:no-psd-evc}). Then, we introduce
\emph{predictive weak decidability}, a weaker notion, and prove that strong
 eventual counters fall into this relaxed class 
 (Lemma~\ref{lemma:couner-in-pwd}). This implies 
 a strict hierarchy: predictive strong decidability
  is strictly contained in predictive weak decidability
   (Theorem~\ref{theo:psd-in-pwd}). Still, we show that
    some objects—such as the \emph{eventual ledger}~\cite{ledger}—remain 
    undecidable even under this relaxed decidability notion (Lemma~\ref{lemma:counter-no-pwd}).
    Table~\ref{tab:detailed-results} summarizes the results discussed or proved in the paper, except the characterization.

  \begin{table}[hb]
    \caption{Results summary.}
    \label{tab:detailed-results}
    \centering
    \small
    \begin{tabular}{l|cc|cc}
      \multirow{2}{*}{\textbf{Language / Property}} & 
      \multicolumn{2}{c|}{\textbf{($\adv$)}} & 
      \multicolumn{2}{c}{\textbf{($\advt$)}} \\
       & \textbf{SD} & \textbf{WD} & \textbf{PSD} & \textbf{PWD} \\
      \hline
      Linearizable register (LIN\_REG)           & \xmark & \xmark & \cmark~\cite{podc23} & \cmark \\
      Seq. consistent register (SC\_REG)        & \xmark & \xmark & \cmark~\cite{podc23} & \cmark \\
      Linearizable ledger (LIN\_LED)             & \xmark & \xmark & \cmark~\cite{podc23} & \cmark \\
      Seq. consistent ledger (SC\_LED)           & \xmark & \xmark & \cmark~\cite{podc23} & \cmark \\
      Event. consistent ledger (EC\_LED)           & \xmark & \xmark & \xmark & \xmark \\
      Weak event. counter (WEC\_COUNT)               & \xmark & \cmark & \xmark & \cmark \\
      Strong event. counter (SEC\_COUNT)             & \xmark & \xmark & \xmark & \cmark \\
    \end{tabular}
  \end{table}

Our possibility results use only read/write registers, 
hence can be simulated in asynchronous message-passing systems 
tolerating crash faults in less than half the processes~\cite{ABD95}.
 Our impossibility results hold under operations with arbitrarily
  high \emph{consensus number}~\cite{H91}.

\subsection{Related Work}
\label{sec:related-work}

The notions of soundness (no false positives) and completeness
 (no false negatives) in runtime verification originate from Pnueli and Zaks~\cite{PZ06},
 who formalized monitorability by requiring that, for every finite
 execution prefix, there exists a continuation that determines the 
 property’s satisfaction or violation (i.e., positively or negatively determines the
 infinitary property). 
  Their work inspired generalized definitions of monitorability, such as
   Bauer et al.~\cite{BF16}, who use good/bad prefixes and four-valued truth domains with four 
   variable veredicts. These adaptations extend to 
distributed systems, where decidability hinges on consistent 
   verdicts, ranging from binary (correct/incorrect) to richer $k$-valued classifications~\cite{BFRT16,BFRR22}.
However, prior work often overlooked the semantic gap between system behavior 
and detection capability—particularly in asynchronous settings. 
For instance,~\cite{E-HF18} observed inconsistent verdicts 
(e.g., flipping between ‘correct’ and ‘incorrect’) when verifying
 the same Java program across different runtime monitors. 
 This occurs because these tools ignore real-time operation order, a critical factor in distributed systems.
 Our work addresses this by explicitly modeling asynchrony’s impact on detection.

 Fraigniaud, Rajsbaum, and Travers~\cite{FRT13} pioneered the study
 of asynchronous, fault-tolerant, distributed runtime verification.
 They introduced a \emph{static} model in which the
   distributed system under inspection is in a quiescent state. 
   Each process in the distributed monitor receives a sample of 
   this system state and, after communication via read/write shared 
   memory (excluding stronger primitives), outputs a binary 
   decision—either ${\sf true}$ or ${\sf false}$. Their model 
   has two main limitations, as mostly oriented to prove impossibility results: 
   (1) it is not fully asynchronous,
    as the verified system does not evolve during verification, 
    and (2) it lacks the ability to capture real-time order of
     events, which is essential for properties like linearizability.
      In this model, properties are defined as sets of sets of samples, 
      and a property is decidable when all processes decide ${\sf true}$ 
      if and only if the input set belongs to the property. Our notion of 
      strong decidability extends this to our setting.
      This model was extended in~\cite{FRT20}, which generalized the verification framework in two key ways: 
      outputs were expanded from binary decisions to multiple \emph{opinions} (possible verdicts), and 
      decidability was abstracted to require only that monitors produce distinct 
          verdicts for executions with different property memberships.
      This aligns with our notion of $\mathsf{P}$-decidability in Theorem~\ref{theo:rt-oblivious}. 
       Crucially, they introduced the \emph{alternation number} $k$ of a property, 
        proving that any property with such alternation number can be verified with at most
       $k+1$ opinions, a result we contextualize in Section~\ref{sec:characterization}.
 
 Bonakdarpour, Fraigniaud, Rajsbaum, Rosenblueth, and Travers~\cite{BFRR22}
  extended this work to a \emph{dynamic} setting, albeit with a strong synchrony 
  assumption: the system does not progress until the monitor completes 
  its verification. Again, the model is not fully asynchronous. They 
  generalized the alternation number and proved that any property with
   alternation number $k$ can be verified using at most $2k+4$ opinions.
 
 Fully asynchronous and fault-tolerant monitors were introduced by
  Rodríguez~\cite{thesis-valeria} and Castañeda and Rodríguez~\cite{podc23, rv24}, 
  targeting runtime verification of linearizability in shared-memory object 
  implementations. In their asynchronous model, monitor processes interact directly with 
  the implementation under inspection, invoking operations and receiving responses, 
  with the ability to report $\no$ as soon as incorrect behavior is detected. 
  Crucially, their model supports verification while the system continues 
  evolving—i.e., without requiring quiescence.
 
 They showed that for some objects (e.g., queues and stacks), no 
 monitor can be both sound and complete—i.e., one that outputs 
 $\no$ if and only if the execution is incorrect. This impossibility
  result is captured in our framework by the concept of strong decidability.
   To cope with the impossibility, Rodríguez~\cite{thesis-valeria} proposed 
   a relaxation of completeness, enabling partial verification of linearizability.
    However, the relaxation permits false positives, making it unreliable for 
    detecting incorrect behaviors, which is central to runtime verification.
 
 Castañeda and Rodríguez~\cite{podc23} introduced a complementary relaxation
  of soundness. They showed how to transform any implementation into a related 
  one such that either both are linearizable or neither is. Under this 
  transformation, monitors can be allowed to produce false negatives, 
  provided they eventually exhibit a non-linearizable execution as proof.
   This indirect verification approach enables runtime verification of 
   linearizability for any object, albeit in a relaxed form. They also 
   demonstrated that the original, non-relaxed problem cannot be solved,
    even under this transformation. Their verification algorithms were
     further optimized in~\cite{rv24}, improving step complexity and
      moving closer to practical deployability. Nonetheless, the scope 
      of properties addressed remains limited to linearizability and some of its variants.

\section{Distributed Languages}
\label{sec:dist-lang}

A \emph{distributed alphabet} $\Sigma$ is the union of $n \geq 2$ disjoint \emph{local alphabets}, 
$\Sigma_1, \hdots, \Sigma_n$, with each local alphabet $\Sigma_i$ being the union of two disjoint possibly-infinite alphabets, 
$\Sigma^{<}_i$, the \emph{invocation} alphabet, and $\Sigma^{>}_i$, 
the \emph{response} alphabet.\footnote{Possibly-infiniteness assumption is just by conveniency. }

A \emph{word} over $\Sigma$ is a sequence of symbols in $\Sigma$.
We say that $x$ is a \emph{$\omega$-word} if it has infinitely many symbols. 
The \emph{length} of $x$, denoted $|x|$, is the number of symbols in $x$.
The \emph{local word of $\Sigma_i$} in $x$, denoted $x|i$, is the projection of $x$ over the local alphabet $\Sigma_i$.

\begin{definition}[Well-formed $\omega$-words]
A $\omega$-word $x$ is \emph{well-formed} if for every $x|i$:
\begin{enumerate}
	\item Reliability: $x|i$ is a $\omega$-word.
	\item Sequentiality: $x|i$ alternates symbols in $\Sigma^{<}_i$ and $\Sigma^{>}_i$, 
	starting with $\Sigma^{<}_i$.
	\item Fairness: for every $k \geq 1$, there is a finite prefix of $x$ containing the first $k$ symbols of $x|i$.
\end{enumerate}
The set with all well-formed $\omega$-words over $\Sigma$ is denoted $\Sigma^\omega$.
\end{definition}

\begin{definition}[Distributed languages]
A \emph{distributed language} $L$ over a distributed alphabet~$\Sigma$ is a subset of $\Sigma^\omega$.
\end{definition}

A word of a language models a \emph{concurrent history} where invocations to and responses from 
a \emph{distributed service} (e.g. a shared-memory or message-passing implementation of an object) are interleaved. 
Given a $\omega$-word $x \in \Sigma^\omega$, in every local word $x_i$,
each invocation symbols $v \in \Sigma^{<}_i$ is immediately succeeded 
by a response symbols $w \in \Sigma^{>}_i$ that \emph{matches} $v$.
We call such pair $(v,w)$ an \emph{operation of} $p_i$ in $x$. 
An operation $op$ \emph{precedes} an operation $op'$ in $x$, denoted $op \prec_x op'$, 
if and only if  the response symbols of $op$ appears before the invocation symbol of $op$'. 
The operations are \emph{concurrent}, denoted $op ||_x op'$, if neither $op \prec_x op'$ nor $op' \prec_x op$ hold.
For a finite prefix $x'$ of $x$, an operation is \emph{complete in} $x'$ if and only if both
its invocation and response symbols appear in $x'$, and otherwise it is \emph{pending} in $x'$.

 \paragraph{Example 1.}
Let us consider a \emph{register}, one of the simplest sequential objects, 
which provides two operations: $write(x)$ that writes $x$ in the register
and $read()$ that returns the current value of the register.
The initial state of the register is~0.

We are interested in the linearizable and sequentially consistent concurrent histories of the register.
A finite concurrent history $H$ is \emph{sequentially consistent}~\cite{L79} if and only if responses to pending operation
can be appended to $H$, and the rest of pending operations removed, so that the operations 
of the resulting history $H'$ can be ordered
in a sequential history $S$ that respects process-order and is valid for the register. 
The history $H$ is \emph{linearizable}~\cite{HW90}
if additionally $S$ preserves real-time, namely, if an operation $op$ completes before another operation
$op'$ in $H'$, that order is preserved in $S$.

We model such concurrent histories as a distributed language as follows.
 For each process $p_i$, the local alphabets are $\Sigma^<_i = \{<_i, <^0_i, <^1_i, <^2_i, \hdots \}$ and
$\Sigma^>_i = \{>_i, >^0_i, >^1_i >^2_i, \hdots \}$.
The symbols in $\Sigma^<_i$ and $\Sigma^>_i$ are identified with invocation and responses of $p_i$ as follows:

\begin{itemize}
\item $<^x_i$ is identified with invocation to $write(x)$ of $p_i$;
\item $>_i$ is identified with the response to $write(x)$ of $p_i$ (returning nothing).
\item $<_i$ is identified with invocation to $read()$ of $p_i$;
\item $>^x_i$ is identified with response to $read()$ of $p_i$, returning $x$.
 \end{itemize}
 
 Given this identification, we consider linearizability and sequential consistency of finite words over $\Sigma_\omega$
 to define the corresponding distributed languages for the register. 
  
\begin{definition}[Sequential consistent register]
The language $SC\_REG$ contains every word of $\Sigma^\omega$
such that every finite prefix of it is sequentially consistent with respect to the sequential register.
\end{definition}
  
\begin{definition}[Linearizable register]
The language $LIN\_REG$ contains every word of $\Sigma^\omega$
such that every finite prefix of it is linearizable with respect to the sequential register.
\end{definition}

\paragraph{Example 2.}

In our second example, we consider the \emph{ledger} object in~\cite{ledger},
which is a formalization of the ledger functionality in blockchain systems.
It is an object whose state is a list of items $S$, initially empty, and provides two operations,
$append(r)$ that appends $r \in U$ to $S$, where $U$ is the possibly-infinite universe of \emph{records}
that can be appended, and $get()$ that returns $S$.

We model the concurrent histories of a ledger object as follows.
The invocation and response alphabets of $p_i$ are 
$\Sigma^<_i = \{<_i\} \cup \{<^r_i | r \in U\}$ and
$\Sigma^>_i = \{>_i\} \cup \{ >^s_i | \hbox{$s$ is a finite word over $U$}\}$. 
The symbols in $\Sigma^<_i$ and $\Sigma^>_i$ are identified with invocation and responses of $p_i$ as follows:

\begin{itemize}
\item $<^r_i$ is identified with invocation to $append(r)$ of $p_i$;
\item $>_i$ is identified with the response to $append(r)$ of $p_i$ (returning nothing).
\item $<_i$ is identified with invocation to $get()$ of $p_i$;
\item $>^s_i$ is identified with response to $get()$ of $p_i$, returning string $s$.
 \end{itemize}

\begin{definition}[Sequential consistent ledger]
The language $SC\_LED$ contains every word of $\Sigma^\omega$
such that every finite prefix of it is sequentially consistent with respect to the sequential ledger.
\end{definition}
  
\begin{definition}[Linearizable ledger]
The language $LIN\_LED$ contains every word of $\Sigma^\omega$
such that every finite prefix of it is linearizable with respect to the sequential ledger.
\end{definition}

\paragraph{Example 3.}
We now consider the case of the counter, a sequential object that provides two operations:
$inc()$ that increments by one the current value of the counter, and $read()$ that returns the current value.
The initial state of the counter is~$0$.

We are interested in concurrent histories of the counter that provide only \emph{eventual} guarantees.
There have been proposed different definitions of what an eventual counter is (see for example~\cite{AlmeidaB19,ShapiroPBZ11, Vogels09}). 
Here we consider the following two motivated by~\cite{AlmeidaB19}.

An infinite concurrent history $H$ of a counter is \emph{weakly-eventual consistent} if:
\begin{enumerate}
\item every $read$ operation $op$ of a process returns a value that is at least the
number of $inc$ operations of the same process that precede $op$,

\item every $read$ operation of a process returns a value that is at least
the value returned by the immediate previous $read$ operation of the same process, and

\item for every finite prefix $\alpha$ 
such that the (infinite) suffix $\beta$ has only $read$ operations, eventually all operations in $\beta$
return the number of $inc$ operations in $\alpha$.
\end{enumerate}

For each process $p_i$, the local alphabets are $\Sigma^<_i = \{<_i, <^+_i\}$ and
$\Sigma^>_i = \{>_i, >^0_i >^1_i, \hdots \}$.
The symbols in $\Sigma^<_i$ and $\Sigma^>_i$ are identified with invocation and responses of $p_i$ as follows:

\begin{itemize}
\item $<^+_i$ is identified with invocation to $inc()$ of $p_i$;
\item $>_i$ is identified with the response to $inc()$ of $p_i$ (returning nothing);
\item $<_i$ is identified with invocation to $read()$ of $p_i$;
\item $>^x_i$ is identified with response to $read()$ of $p_i$, returning $x$.
 \end{itemize}

Given this identification, we consider the language that corresponds to the weakly-eventual consistent counter:

\begin{definition}[Weakly-eventual consistent counter]
The language $WEC\_COUNT$ contains every word of $\Sigma^\omega$ that is weakly-eventual consistent 
with respect to the counter.
\end{definition}

An infinite concurrent history $H$ of a counter is \emph{strongly-eventual consistent} if
it satisfies the three properties of the weakly-eventual counter and:

\begin{enumerate}
\item[(4)] every $read$ operation of a process returns a value that is at most to the number
of $inc$ operations that precede or are concurrent to the operation.
\end{enumerate}

Observe that the fourth property is related to the real-time order of operations in $H$.

\begin{definition}[Strongly-eventual consistent counter]
The language $SEC\_COUNT$ contains every word of $\Sigma^\omega$ that is strongly-eventual consistent 
with respect to the counter.
\end{definition}

\paragraph{Example 4.}

In our last example, we consider an eventual consistent ledger object~\cite{ledger}.
An infinite concurrent history $H$ of the ledger objects is \emph{eventually consistent} if for each 
finite prefix $\alpha$ of it: 
\begin{enumerate}
\item it is possible to append response symbols to $\alpha$ to make all operations complete
so that  there is a permutation of the operations  giving a sequential history that is valid for the ledger, and
\item eventually, every $get$ operation in $H$ returns a string that contains the input record of every $append$ in $\alpha$.
\end{enumerate}

Using the identification above, we define the language of the eventual consistent ledger

\begin{definition}[Eventual consistent ledger]
The language $EC\_LED$ contains every word of $\Sigma^\omega$ such that each of its finite prefixes is 
eventually consistent.
\end{definition}

\section{The Computation Model}

We consider a standard concurrent asynchronous system (e.g. \cite{H91,R13}) with $n$ crash-prone processes, 
$p_1, p_2, \hdots, p_n$, each being a state machine, 
possibly with infinitely many states. 
It is assumed that at most $n-1$ processes crashes in an execution of the system.
The processes communicate each other by applying \emph{atomic} operations on a shared memory,
such as simple $read$ and $write$, or more complex and powerful operations such as $test\&set$ or $compare\&swap$.

A \emph{local algorithm} $V_i$ for a process $p_i$ specifies the local or shared memory operations  
$p_i$ executes as a result of its current local state.
A \emph{distributed algorithm} $V$ is a collection of local algorithms, one for each process. 
An operation performed by a process is called \emph{step}. For a step $e$, $p(e)$ denotes the process that performs~$e$.
A \emph{configuration} $C$ is a collection $(s_1, \hdots, s_n, sm)$, where $s_i$ is a state of $p_i$ and $sm$ is a state of the shared memory.
An \emph{initial} configuration has initial processes and shared memory states. 
An \emph{execution} $E$ of $V$ is an infinite sequence $C_0, e_1, C_1, e_2, \hdots$,
where $C_0$ is an initial configuration, and, for every $k \geq 0$, $e_k$ is the step specified by $V_{p(e_k)}$
when $p(e_k)$ is in the state specified in $C_k$, and configuration $C_{k+1}$ reflects the new state of $p(e_k)$ and~the shared memory.
Since the system is asynchronous, 
there is no bound on the number of steps of other processes between consecutive
steps of the same process.

We are interested in distributed algorithms that interact with a distributed service $\adv$
(e.g., a concurrent shared-memory implementation)
in order to runtime verify it. Namely, in every interaction, $\adv$ exhibits one of its possible behaviors,
and the aim is to  determine if the behavior is correct, with respect to a given correctness condition.
Correctness conditions are defined as distributed languages, hence
the ultimate goal of an algorithm is to determine if the current history of $\adv$ belongs to the distributed language,
namely, deciding the language in a distributed manner.

In our model, motivated to the model in~\cite{podc23}, inputs to algorithms
are obtained through an \emph{infinite interaction} between the processes and $\adv$.
Intuitively, each process 
$p_i$ \emph{sends} an invocation symbols in $\Sigma^<_i$ to $\adv$ and, at a later time, $\adv$
\emph{replies} to $p_i$ a response symbol in $\Sigma^>_i$, and this loop repeats infinitely often.
We conceive $\adv$ as a powerful \emph{adversary} that determines the 
invocation symbols processes send to it, the responses it sends to the processes,
and the times when all events happen in an execution.

\begin{figure}[ht]
\centering{ \fbox{
\begin{minipage}[t]{150mm}
\scriptsize
\renewcommand{\baselinestretch}{2.5} \resetline
\begin{tabbing}
aaaa\=aaa\=aaa\=aaa\=aaa\=aaa\=aaa\=\kill 

{\bf Shared Variables:}\\

$~~$  Shared memory $M$\\ \\

{\bf Local algorithm $V_i$ for process $p_i$:}\\

 \> {\bf while} {\sf true} {\bf do}\\

\line{Gen1} \>\> Non-deterministically pick an invocation symbol $v_i \in \Sigma^<_i$\\

\line{Gen2} \>\> Exchange information using $M$\\

\line{Gen3} \>\> Send $v_i$ to the adversary $\adv$\\

\line{Gen4} \>\> Receive response symbol $w_i \in \Sigma^>_i$ from the adversary $\adv$\\

\line{Gen5} \>\> Exchange information using $M$\\

\line{Gen6} \>\> Report a value (e.g. $\yes$, $\no$ or $\maybe$)

\end{tabbing}
\end{minipage}
  }
\caption{Generic structure of algorithms interacting with the adversary $\adv$.}
\label{fig-generic-algo}
}
\end{figure}

More specifically, in an algorithm $V$ interacting with $\adv$, 
each process runs a local algorithm following the generic structure that appears in Figure~\ref{fig-generic-algo}.
In Lines~\ref{Gen2},~\ref{Gen3},~\ref{Gen5} and~\ref{Gen6}, $p_i$ executes \emph{wait-free}~\cite{H91} blocks of code, 
namely, crash-tolerant codes where $p_i$ cannot block because of delays or failures  of other processes.
 It is assumed that in every iteration $p_i$ reports one value
 in the block in Line~\ref{Gen6}.

In the decidability notions we propose in the next section,
we will focus on the fair failure-free executions of $V$.
A failure-free execution $E$ is \emph{fair} if for each process $p_i$  and each integer $k \geq 1$, 
there is a finite prefix of $E$ containing $k$ steps of $p_i$.
 
The \emph{input} to algorithm $V$ in an execution 
$E$ is the subsequence of $E$ with the invocations sent to and responses from $\adv$.
Thus, the input is a $\omega$-word
that is determined by the ``times'' when Lines~\ref{Gen3} and~\ref{Gen4} of processes occur in~$E$, 
which are \emph{local} to the processes and decided by the adversary~$\adv$. 
If $E$ is fair and failure-free, then $x(E) \in \Sigma^\omega$, 
namely, it is a well-formed $\omega$-word.
Since $\adv$ is asynchronous,
for the processes it is impossible to predict when exactly their invocations and responses occur,
hence one of their main target is to communicate each other, in Lines~\ref{Gen2},~\ref{Gen3} and~\ref{Gen6},
 in order to ``figure out'' the input~$x(E)$.
 
 It is assumed that $\adv$ is a \emph{black-box}, namely, there is no information about the internal functionality of $\adv$
 that $V$ could try to exploit in order to runtime verify it. Therefore, we assume that $\adv$ \emph{can exhibit any possible behavior}. 
 More specifically, the set of all possible concurrent histories of $\adv$ is~$\Sigma^\omega$.
 This assumption implies:

\begin{claim}
\label{claim:execs}
For every algorithm $V$ interacting with $\adv$, and every $\omega$-word $x \in \Sigma^\omega$,
there is a failure-free fair execution $E$ of $V$ such that~$x(E) = x$. 
\end{claim}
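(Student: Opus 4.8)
The plan is to construct, for an arbitrary target word $x \in \Sigma^\omega$, a specific failure-free fair execution $E$ of $V$ by letting the adversary $\adv$ make all the nondeterministic choices in the generic algorithm of Figure~\ref{fig-generic-algo} agree with $x$. Concretely, I would fix a fair round-robin-like interleaving of the processes' steps (so that every process takes infinitely many steps, guaranteeing fairness), and then specify the behavior of $\adv$ so that (i) whenever process $p_i$ reaches Line~\ref{Gen1} in its $k$-th iteration, the nondeterministically chosen invocation symbol $v_i$ equals the $k$-th invocation symbol of $x|i$, and (ii) whenever $p_i$ reaches Line~\ref{Gen4} in its $k$-th iteration, $\adv$ replies with the $k$-th response symbol of $x|i$ (which is well-defined and matches, since $x$ is well-formed, so $x|i$ alternates invocations and responses starting with an invocation). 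The wait-free blocks in Lines~\ref{Gen2},~\ref{Gen3},~\ref{Gen5},~\ref{Gen6} proceed deterministically according to $V$ and terminate because they are wait-free and we schedule every process infinitely often.

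The key steps, in order, are: first, recall that by the black-box assumption $\adv$ can exhibit any concurrent history in $\Sigma^\omega$, which is exactly what licenses us to dictate its invocation/response choices; second, define the interleaving of steps — I would schedule processes so that between any two consecutive ``send'' or ``receive'' events the relative order of local events is whatever is forced by the wait-free blocks, and arrange the global order of the Line~\ref{Gen3} and Line~\ref{Gen4} events so that the induced precedence order on operations matches that of $x$; third, verify by induction on the prefixes of $E$ that the input $x(E)$ built up so far coincides with the corresponding prefix of $x$, using well-formedness of $x$ to see that $\adv$'s prescribed response at each step is a legal matching response; fourth, check that $E$ is fair (every process scheduled infinitely often) and failure-free (no crashes scheduled), so that indeed $x(E) \in \Sigma^\omega$; and finally conclude $x(E) = x$.

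I expect the main obstacle to be not the existence of \emph{some} execution realizing the given local words, but making sure the \emph{global interleaving} of the send/receive events can be chosen so that the real-time precedence relation $\prec$ among operations in $x(E)$ is exactly the one prescribed by $x$. This requires observing that the adversary controls the ``times'' when Lines~\ref{Gen3} and~\ref{Gen4} occur (as emphasized in the paragraph preceding the claim), so $\adv$ can delay a response or hasten an invocation to produce any admissible interleaving consistent with the fixed per-process order; the only constraint is sequentiality within each $x|i$, which $x$ already satisfies by well-formedness, and the fairness condition, which we meet by the round-robin schedule. A secondary subtlety is ensuring the wait-free code blocks always return control (they do, since they are wait-free and every process keeps taking steps), so that each process completes infinitely many iterations and contributes an $\omega$-word to its local projection, matching the reliability and fairness clauses of well-formedness for $x$.
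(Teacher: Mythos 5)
Your proposal is correct and follows essentially the same route as the paper: the paper's proof also builds the execution inductively, symbol by symbol of $x$, scheduling the owning process to execute Lines~1--3 (for an invocation, chosen nondeterministically to match $x$) or Lines~4--6 (for a response, supplied by the black-box adversary), so that the send/receive events occur in exactly the order of $x$, with fairness coming from well-formedness of $x$. The only cosmetic difference is that the paper sequentializes the whole construction directly from $x$ rather than starting from a round-robin schedule and then adjusting the order of the Line~\ref{Gen3}/\ref{Gen4} events, but the underlying idea is identical.
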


\begin{proof}
Since the system is fully asynchronous and $x$ is well-formed, $V$ admits a sequential execution,
constructed inductively as follows, where 
$x(j)$ denotes the symbol of $x$ at its $j$-th position, and
$\ell\_ind(x,j) = i$ if and only if $x(j) \in \Sigma_i$:

\begin{itemize}
\item Base. Process $p_{\ell\_ind(x,1)}$ executes Lines 1 to 3, where $x(1) \in \Sigma^<_{\ell\_ind(x,1)}$ 
is the invocation symbol it picks in Line 2.

\item Inductive step. For $k \geq 2$, the execution is extended as follows, depending on the symbol $x(k)$:

\begin{itemize}

\item if $x(k)$ is an invocation symbol. Process $p_{\ell\_ind(x,k)}$ executes Lines 1 to 3, where 
$x(k) \in \Sigma^<_{\ell\_ind(x,k)}$ is the invocation symbol it picks in Line 2;

\item if $x(k)$ is a response symbol. Process $p_{\ell\_ind(x,k)}$ executes Lines 4 to 6, where 
$x(k) \in \Sigma^>_{\ell\_ind(x,k)}$ is the response symbol it receives from the adversary in Line 4.

\end{itemize}

\end{itemize}

By construction, the input in the execution is $x$.
\end{proof}

\paragraph{Atomic snapshots.}
In some of the algorithms below, processes \emph{atomically} read all
entries of shared arrays using the well-known \emph{snapshot}
operation, that can be read/write wait-free implemented~\cite{AADGMS93}. 
Usage of snapshots is for simplicity, as, unless stated otherwise, 
the same results can be obtained through the weaker \emph{collect}
operation, possibly at the cost of more complex local computations.
Differently from snapshots,
this operation reads asynchronously, one by one, in an arbitrary order, the entries the shared array.

\section{Asynchronous Decidability}
\label{sec:asynch-decidability}

\subsection{Three definitions for distributed decidability}

First, we study three definitions for asynchronous decidability of distributed languages,
where it is assumed that processes can report only two possible values: $\yes$ or $\no$.
The first definition, strong decidability, is basically the distributed runtime verification
problem in~\cite{podc23}[Definition 3.1] adapted to our setting,
which in turn is motivated by the soundness and completeness requirements
for runtime verifications solutions considered in the literature (e.g.,~\cite{E-HF18, MB15, NFBB17}). 
The other two definitions, weak-all decidability and weak-one decidability, 
follow a B\"uchi-style $\omega$-words acceptance criteria~\cite{buchi},
and are suitable for eventual properties that can only be tested to the infinity (e.g. $WEC\_COUNT$ or $EC\_LED$).

As anticipated, in the decidability definitions, we focus on the fair failure-free executions of algorithms.
Given an execution $E$ of an algorithm~$V$, for each process $p$,
$\no(E,p)$ and $\yes(E,p)$ will denote the number of times $p$ reports $\no$ and $\yes$, respectively, in $E$.
Since $E$ is failure-free, at least one of $\no(E,p)$ and $\yes(E,p)$ is $\infty$, for every process $p$.

\begin{definition}[Strong Decidability]
An algorithm $V$ \emph{strongly decides} a language $L$ if in every execution $E$,
$$x(E) \in L \iff \forall p, \no(E,p) = 0.$$

Alternatively, we say that $L$ is \emph{strongly decidable}.
The class of strongly decidable languages is denoted $\sd$.
\end{definition}

\begin{definition}[Weak-All Decidability]
An algorithm $V$ \emph{weakly-all decides} a language $L$ if in every execution $E$,
$$x(E) \in L \iff \forall p, \no(E,p) < \infty.$$

Alternatively, we say that $L$ is \emph{weakly-all decidable}.
The class of weakly-all decidable languages is denoted $\wad$.
\end{definition}

\begin{definition}[Weak-One Decidability]
An algorithm $V$ \emph{weakly-one decides} a language $L$ if in every execution $E$,
$$x(E) \in L \iff \exists p, \no(E,p) < \infty.$$

Alternatively, we say that $L$ is \emph{weakly-one decidable}.
The class of weakly-one decidable languages is denoted $\wod$.
\end{definition}

\paragraph{About the restriction to failure-free executions:} Since blocks of code of algorithms interacting with~$\adv$ 
are wait-free and the system is asynchronous, 
even in failure-free executions processes are ``forced'' to make decisions (i.e., reporting values) without waiting
to ``hear'' from other processes, which is arguably the main challenge in 
asynchronous fault-tolerant systems. 
Therefore, focusing on failure-free executions is just for simplicity. 

\subsection{Basic properties}

This section shows stability properties of algorithms for the three decidability notions,
and that these properties imply that $\wad$ and $\wod$ are actually equivalent,
and hence they are just defined as weak decidability ($\wde$). 
Also, it is shown that $\sd$ is included in $\wde$.

\begin{lemma}
\label{lemma:basic-sd}
For every $L \in \sd$, there is an algorithm that strongly decides $L$ and satisfies
the following property, in every execution $E$:
if $x(E) \notin L$, eventually every process always reports~\no.
\end{lemma}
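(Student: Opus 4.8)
The plan is to take an arbitrary algorithm $V$ that strongly decides $L$ and transform it into an algorithm $V'$ that still strongly decides $L$ but additionally has the ``stabilizing to \no'' property on every execution with $x(E) \notin L$. The key observation is that strong decidability already guarantees that in an execution $E$ with $x(E) \notin L$, \emph{some} process reports \no at least once (in fact, infinitely often, since a process that never crashes and is blocked from ever reporting \yes must report \no infinitely often, and the definition forces $\no(E,p) \neq 0$ for \emph{all} $p$ — so actually every process already reports \no at least once). What is missing is that a process might keep alternating between \yes and \no forever. The transformation must make a \no ``latch'': once a process has seen evidence that the input is bad, it should report \no forever after.

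The mechanism I would use is a shared ``alarm'' array (or a single multi-writer bit) $Bad[1..n]$, initially all $\false$, accessed via snapshots in the wait-free blocks of Lines~\ref{Gen2} and~\ref{Gen5}. Algorithm $V'$ runs $V$ in the background. Whenever the simulated $V$ at process $p_i$ would report \no, before doing so $p_i$ sets $Bad[i] := \true$. In Line~\ref{Gen6}, $p_i$ takes a snapshot of $Bad$: if any entry is $\true$, $p_i$ reports \no; otherwise $p_i$ reports whatever $V$ told it to report. Since the added steps are just register writes and a snapshot, the blocks remain wait-free, and since $V'$ feeds $\adv$ exactly the same invocation symbols as $V$ (the extra steps are purely in the shared-memory exchange blocks, Lines~\ref{Gen2},~\ref{Gen5}, and the local report block, Line~\ref{Gen6}), by Claim~\ref{claim:execs} the executions of $V'$ realize exactly the same inputs as those of $V$, and $x(E') = x(E)$ for the naturally corresponding executions.

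Now I would verify the two required properties. First, correctness (that $V'$ still strongly decides $L$): if $x(E) \in L$, then in the corresponding execution of $V$ no process ever reports \no, so no process ever sets its $Bad$ entry, so $Bad$ stays all-$\false$ forever, so every process in $V'$ reports exactly what $V$ reports, i.e., always \yes; hence $\forall p,\ \no(E,p)=0$. Conversely, if $x(E) \notin L$, then in $V$ some process $p_j$ reports \no at least once, hence in $V'$ that $p_j$ sets $Bad[j] := \true$ at some finite point; by fairness every process $p_i$ takes infinitely many snapshots of $Bad$ after that point, each of which sees $Bad[j] = \true$, so every $p_i$ reports \no in all but finitely many iterations, giving $\no(E,p_i) = \infty > 0$ for all $i$. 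That establishes the $\iff$ of strong decidability and simultaneously gives the extra property: when $x(E) \notin L$, eventually every process always reports \no.

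\textbf{Main obstacle.} The only delicate point is the interleaving/timing bookkeeping: I must make sure the $Bad$-writes and snapshots are genuinely placed inside the already-wait-free blocks so that no new blocking is introduced and the interaction pattern with $\adv$ (Lines~\ref{Gen3},~\ref{Gen4}) is untouched, so that Claim~\ref{claim:execs} transfers the set of realizable inputs verbatim. A secondary subtlety is the ``eventually always'' phrasing: a process might have several iterations in flight before it first reads $Bad[j]=\true$, but fairness guarantees it performs infinitely many subsequent snapshots, so only finitely many non-\no reports can occur — which is exactly ``eventually always reports \no.'' Everything else is routine.
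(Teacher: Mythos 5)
Your proposal is correct and takes essentially the same route as the paper: the paper's proof latches a single shared \texttt{FLAG} register that a process sets when it is about to report \no, after which (by fairness) every process reads it and reports \no forever — your $Bad$ array with snapshots is the same latching mechanism, with the same correspondence argument between executions of the modified and original algorithms. One incidental slip worth noting: negating the strong-decidability condition only yields that \emph{some} process reports \no at least once (not every process, as your parenthetical claims), but your actual verification only uses the existential statement, so the argument is unaffected.
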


\begin{proof}
Let $V$ be an algorithm that strongly decides~$L$.
Algorithm $V$ has the generic structure described above. 
Without loss of generality, we assume that in Line~\ref{Gen6} of each $V_i$,
$p_i$ reports a value in the \emph{last} step of that block of code.
Consider the algorithm $W$ obtained by modifying Line~\ref{Gen6} of each local algorithm $V_i$ of $V$ as it appears in Figure~\ref{fig-basic-sd}.
The idea is that once a process is to report \no in $V$, in $W$ it sets a shared variable to remember this fact,
and reports \no in every subsequent iteration.

\begin{figure}[ht]
\centering{ \fbox{
\begin{minipage}[t]{150mm}
\scriptsize
\renewcommand{\baselinestretch}{2.5} \resetline
\begin{tabbing}
aaaa\=aaa\=aaa\=aaa\=aaa\=aaa\=aaa\=\kill 

{\bf Additional shared variables in $W$:}\\

$~~$  $FLAG:$ read/write register initialized to $\false$\\ \\

{\bf Modified Local algorithm $W_i$ for process $p_i$:}\\

 \> {\bf while} {\sf true} {\bf do}\\

\line{BSD1} \>\> Non-deterministically pick an invocation symbol $v_i \in \Sigma^<_i$\\

\line{BSD2} \>\> Block of code in Line~\ref{Gen2} of $V_i$\\

\line{BSD3} \>\> Send $v_i$ to the adversary $\adv$\\

\line{BSD4} \>\> Receive response symbol $w_i \in \Sigma^>_i$ from the adversary $\adv$\\

\line{BSD5} \>\> Block of code in Line~\ref{Gen5} of $V_i$\\

\line{BSD6} \>\> $d_i \rightarrow$ value in Line~\ref{Gen6} of $V_i$ that is $p_i$ to report\\

\>\> {\bf if} $FLAG.read() == \true$ {\bf then} {\bf report} $\no$\\

\>\> {\bf else}\\

\>\>\> {\bf if} $d_i == \no$ {\bf then} $FLAG.write(\true)$\\

\>\>\> {\bf report} $d_i$

\end{tabbing}
\end{minipage}
  }
\caption{From $V$ to $W$ in proof of Lemma~\ref{lemma:basic-sd}.}
\label{fig-basic-sd}
}
\end{figure}

Consider any execution $E_W$ of $W$. Note that we can obtain an execution $E_V$ of $V$
by replacing in $E_W$ every local computation of each process $p_i$ corresponding to Line~\ref{BSD6} of $W$ with
the corresponding local computation in Line~\ref{Gen6} of $V$,
namely, $p_i$ simply reports $d_i$ with no further computation.
Note that the inputs to both executions, $x(E_W)$ and $x(E_V)$, are the same. 
By definition of strong decidability, if $x(E_V) \in L$, then no process ever reports $\no$ in $E_V$,
and hence no process ever reports $\no$ in $E_W$.
If $x(E_V) \notin L$, there is a process $p_i$ that reports at least one time $\no$ in $E_V$, by definition
of strong decidability, hence $p_i$ sets $FLAG$ to $true$ in $E_W$,
which implies that, eventually every process reports $\no$ forever, since executions are fair.
Thus, $W$ strongly decides $L$, and has the desired stability property.
\end{proof}

\begin{lemma}
\label{lemma:basic-wad}
For every $L \in \wad$, there is an algorithm that weakly-all decides $L$ and satisfies
the following property, in every execution $E$:
if $x(E) \notin L$, every process reports $\no$ infinitely often.
\end{lemma}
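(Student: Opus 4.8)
The plan is to mimic the transformation used in Lemma~\ref{lemma:basic-sd}, adapting it from the ``once-and-for-all'' flag mechanism to a mechanism that keeps re-triggering $\no$ reports forever, but without destroying the defining equivalence of weak-all decidability. Let $V$ be an algorithm that weakly-all decides $L$. As before, assume without loss of generality that each $V_i$ reports its value in the last step of the Line~\ref{Gen6} block. I will construct $W$ from $V$ by adding bookkeeping so that whenever process $p_i$ would report $\no$ in $V$, in $W$ it additionally records this event in shared memory; and in every iteration, before reporting, $p_i$ checks whether \emph{any} process has recorded a $\no$ since $p_i$'s last report, and if so it reports $\no$ (otherwise it reports the value $d_i$ dictated by $V$).

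A clean way to implement the re-triggering is to have a shared array $\mathit{CNT}[1..n]$ of integer registers, where $p_i$ increments $\mathit{CNT}[i]$ each time it is about to report $\no$ in $V$; and to have each $p_i$ keep a local snapshot $\mathit{last}_i$ of $\mathit{CNT}$ from its previous iteration. In each iteration $p_i$ takes a fresh snapshot (or collect) $s$ of $\mathit{CNT}$; if $s \neq \mathit{last}_i$ it reports $\no$, else it reports $d_i$; then it updates $\mathit{last}_i \gets s$, performing its own increment to $\mathit{CNT}[i]$ at the appropriate point if $d_i = \no$. The key steps of the argument are then: (1) given any execution $E_W$ of $W$, extract the corresponding execution $E_V$ of $V$ by deleting the added bookkeeping steps, noting $x(E_W) = x(E_V)$; (2) if $x(E_V) \in L$, then by weak-all decidability each process reports $\no$ only finitely often in $E_V$, so each $\mathit{CNT}[j]$ stabilizes, so eventually every process's snapshots stop changing and $W$ stops emitting the extra $\no$'s, hence each process reports $\no$ finitely often in $E_W$ as well — so $W$ still weakly-all decides (the $\Rightarrow$ direction) and moreover this is consistent with the stated property (vacuously, since the hypothesis $x(E)\notin L$ fails); (3) if $x(E_V) \notin L$, then some process $p_i$ reports $\no$ infinitely often in $E_V$, hence increments $\mathit{CNT}[i]$ infinitely often, so by fairness every other process sees $\mathit{CNT}$ change in infinitely many of its iterations and thus reports $\no$ infinitely often — giving both the $\Leftarrow$ direction of weak-all decidability and exactly the desired stability property.

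I would then just check that $W$ indeed has the generic structure (all added blocks are wait-free: increments and snapshots/collects on the shared array are wait-free) and that reporting exactly one value per iteration is preserved. The main obstacle I anticipate is subtle: in direction (3) I need that $p_i$'s infinitely many increments are actually \emph{observed} by each other process $p_j$ in infinitely many of $p_j$'s iterations. Fairness guarantees $p_j$ takes infinitely many snapshots and $\mathit{CNT}[i]$ grows without bound, but I must make sure the comparison is against $p_j$'s \emph{own previous} snapshot rather than some global value, so that a monotonically increasing coordinate forces infinitely many ``change detected'' events at $p_j$ — this is where using a per-process remembered snapshot $\mathit{last}_j$, compared coordinatewise, does the job, and I should state that monotonicity-plus-unboundedness of one coordinate implies the sequence of snapshots seen by $p_j$ changes infinitely often. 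A secondary point to handle carefully is the interleaving of $p_i$'s own increment (when $d_i=\no$) with its snapshot, to avoid $p_i$ spuriously reporting $\no$ based only on its own pending increment or, conversely, missing it; placing the self-increment after the snapshot-and-report but recording it into $\mathit{last}_i$ resolves this, and I would note that even a spurious self-triggered $\no$ is harmless since in that case $p_i$ was already reporting $\no$ infinitely often in $E_V$.
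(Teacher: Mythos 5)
Your proposal is correct and follows essentially the same route as the paper's proof: a shared counter array incremented on each would-be $\no$ of $V$, a per-process remembered snapshot compared against a fresh snapshot each iteration, and the fairness/monotonicity argument to propagate infinitely many $\no$ reports to all processes when $x(E)\notin L$. The only deviations (reporting $d_i$ rather than $\yes$ when no change is detected, and placing the self-increment after the snapshot) are cosmetic and do not affect correctness.
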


\begin{proof}
Let $V$ be an algorithm that weakly-all decides~$L$.
Without loss of generality, we assume that in Line~\ref{Gen6} of each $V_i$,
$p_i$ reports a value in the \emph{last} step of that block of code.
Consider the algorithm $W$ obtained by modifying Line~\ref{Gen6} of 
each local algorithm $V_i$ of $V$ as shown in Figure~\ref{fig-basic-wad}.
In $W$, $p_i$ records in a new shared array $C$ the number of times it has obtained $\no$ from $V$
so far, then reads all entries of $C$, and finally
it reports $\no$ if there is an entry with a larger value than $p_i$ was aware of in the previous iteration, 
otherwise it reports $\yes$.

\begin{figure}[ht]
\centering{ \fbox{
\begin{minipage}[t]{150mm}
\scriptsize
\renewcommand{\baselinestretch}{2.5} \resetline
\begin{tabbing}
aaaa\=aaa\=aaa\=aaa\=aaa\=aaa\=aaa\=\kill 

{\bf Additional shared variables in $W$:}\\

$~~$  $C[1,\hdots,n]:$ shared array of read/write registers, each initialized to $0$\\ \\

{\bf Modified Local algorithm $W_i$ for process $p_i$:}\\

 \> $prev_i[1, \hdots, n] \leftarrow [0, \hdots, 0]$ \%\% additional local variable \\

 \> {\bf while} {\sf true} {\bf do}\\

\line{BWAD1} \>\> Non-deterministically pick an invocation symbol $v_i \in \Sigma^<_i$\\

\line{BWAD2} \>\> Block of code in Line~\ref{Gen2} of $V_i$\\

\line{BWAD3} \>\> Send $v_i$ to the adversary $\adv$\\

\line{BWAD4} \>\> Receive response symbol $w_i \in \Sigma^>_i$ from the adversary $\adv$\\

\line{BWAD5} \>\> Block of code in Line~\ref{Gen5} of $V_i$\\

\line{BWAD6} \>\> $d_i \leftarrow$ value in Line~\ref{Gen6} of $V_i$ that is $p_i$ to report\\

\>\> {\bf if} $d_i == \no$ {\bf then} $C[i].write(prev[i]+1)$\\

\>\> $snap_i \leftarrow \snap(C)$\\

\>\> {\bf if} $\exists j, snap_i[j] > prev_i[j]$ {\bf then} {\bf report} $\no$\\

\>\> {\bf else} {\bf report} $\yes$\\

\>\> $prev_i \leftarrow snap_i$

\end{tabbing}
\end{minipage}
  }
\caption{From $V$ to $W$ in proof of Lemma~\ref{lemma:basic-wad}.}
\label{fig-basic-wad}
}
\end{figure}

Consider any execution $E_W$ of $W$. As in the proof of Lemma~\ref{lemma:basic-sd},
observe that we obtain an execution $E_V$ of $V$
by replacing every local and shared computations of each process $p_i$ corresponding to Line~\ref{BWAD6} of $W$ with
the corresponding local computation in Line~\ref{Gen6} of $V$
(namely, $p_i$ simply reports $d_i$).
Note that $x(E_W) = x(E_V)$.
By definition of weakly-all decidability, if $x(E_V) \in L$, then every process reports $\no$ only finitely many times in $E_V$,
and hence eventually all values in $C$ stabilize in $E_W$, from which follows that every process reports $\no$ finitely many times in $E_W$.
Now, if $x(E_V) \notin L$, there is a process $p_i$ that reports $\no$ infinitely many times in $E_V$, by definition
of weakly-all decidability. This implies that $C[i]$ never stabilizes in $E_W$,
and hence every process infinitely often reads that $C[i]$ is increasing (recall that $E_W$ is fair), 
and as a consequence all processes report $\no$ infinitely often in $E_W$. 
Thus, $W$ weakly-one decides $L$, and has the desired property.
\end{proof}

\begin{lemma}
\label{lemma:basic-wod}
For every $L \in \wod$, there is
an algorithm that weakly-one decides $L$ and satisfies
the following property, in every execution $E$:
if $x(E) \in L$, eventually every process always reports $\yes$.
\end{lemma}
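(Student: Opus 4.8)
The lemma for weak-one decidability ($\wod$) is the "mirror image" of Lemma~\ref{lemma:basic-wad}: where weak-all wanted every process to agree on rejection, weak-one wants eventual unanimity on acceptance. So the natural approach is to dualize the construction in Figure~\ref{fig-basic-wad}. Let me think about what the transformation should do.

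Let me reconsider. In $\wod$, $x(E) \in L \iff \exists p, \no(E,p) < \infty$. So if the word is in $L$, at least one process reports $\no$ finitely often (equivalently $\yes$ infinitely often). We want to strengthen this so that *every* process eventually *always* reports $\yes$.

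\textbf{Plan for proving Lemma~\ref{lemma:basic-wod}.}
The plan is to dualize the transformation used in Lemma~\ref{lemma:basic-wad}. Start from any algorithm $V$ that weakly-one decides $L$, again assuming without loss of generality that $p_i$ reports its value of Line~\ref{Gen6} in the last step of that block. Build $W$ by adding a shared array $C[1,\dots,n]$ of read/write registers, each initialized to $0$, where $C[i]$ holds the number of times $p_i$ has so far obtained $\no$ from the underlying $V$; at the end of each iteration $p_i$ updates $C[i]$, takes $snap_i\leftarrow\snap(C)$, and then decides between $\yes$ and $\no$ according to a condition described below. As in the previous two lemmas, erasing the Line~\ref{Gen6}-replacement block of $W$ recovers an execution $E_V$ of $V$ with $x(E_W)=x(E_V)$, and the relevant facts are: $x(E)\in L$ iff some entry of $C$ eventually stabilizes (because some $p$ reports $\no$ finitely often in $V$), and $x(E)\notin L$ iff \emph{every} entry of $C$ grows without bound.

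The difficulty — and the reason we cannot simply mirror Figure~\ref{fig-basic-wad} by swapping $\yes$ and $\no$ — is that the "good" event to be detected is now a \emph{stabilization} (liveness) property of $C$, which is not observable from finitely many snapshots: a process cannot conclude from two equal consecutive snapshots that an entry has stabilized, and conversely it may never see all entries of $C$ increase \emph{simultaneously} between two of its own (possibly very fast) consecutive snapshots, even when every entry does grow unboundedly. So the detection condition must \emph{accumulate} evidence across iterations rather than compare adjacent snapshots. Concretely, $p_i$ keeps a local "baseline" snapshot $base_i$ (initially all $0$) and a local boolean array $seen_i[1,\dots,n]$ (initially all $\false$); in each iteration, after taking $snap_i$, it sets $seen_i[j]\leftarrow\true$ for every $j$ with $snap_i[j]>base_i[j]$; if at that point $seen_i[j]=\true$ for all $j$, it reports $\no$, resets $base_i\leftarrow snap_i$ and $seen_i\leftarrow[\false,\dots,\false]$ (closing an "epoch"), and otherwise it reports $\yes$.

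For correctness, consider first $x(E)\notin L$: every $C[j]$ grows unboundedly, so within any epoch of $p_i$ and for every $j$, by fairness $p_i$ eventually observes $C[j]$ exceed the epoch's baseline and sets $seen_i[j]$; hence $p_i$ closes infinitely many epochs and reports $\no$ infinitely often, so $W$ still weakly-one decides $L$ in this case. Now consider $x(E)\in L$: some $C[j^\star]$ stabilizes forever at a value $v^\star$; take any epoch of $p_i$ that begins after this stabilization, so that $base_i[j^\star]\geq v^\star$ while $C[j^\star]$ never exceeds $v^\star$ thereafter — then $p_i$ never sets $seen_i[j^\star]$ again, never closes this epoch, and reports $\yes$ in every subsequent iteration; thus every process eventually always reports $\yes$, which is the desired stability property and also shows $\no(E,p_i)<\infty$ for each $p_i$, consistent with weak-one decidability. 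The main obstacle to watch is precisely the fairness argument in the "$\notin L$" case — that each slow entry is eventually seen to increase \emph{inside} each epoch — together with checking that the epoch-reset bookkeeping does not let a stale $seen_i[j^\star]$ survive past the relevant epoch boundary in the "$\in L$" case; both follow from fairness of $E_W$ and from resetting $base_i$ and $seen_i$ together at each $\no$-report.
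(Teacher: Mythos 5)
Your construction is correct, but it is not the one the paper uses, and the difference is substantive. The paper's transformation (Figure~\ref{fig-basic-wod}) also maintains the counters $C[1,\dots,n]$, but its report rule compares only \emph{consecutive} snapshots: $p_i$ reports $\yes$ if some entry of $C$ is unchanged since its previous snapshot, and $\no$ otherwise; the $x(E)\in L$ direction then follows because a stabilized entry is eventually seen unchanged in every comparison, and for $x(E)\notin L$ the paper argues that since no entry of $C$ ever stabilizes, every process reports $\no$ infinitely often. Your epoch-based rule (baseline snapshot plus a $seen_i$ array, reporting $\no$ only once \emph{every} entry has been observed to increase since the last $\no$, then resetting) is a more conservative detector that accumulates evidence across iterations rather than across one iteration. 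What your version buys is precisely the point you raise: with the consecutive-snapshot rule, unbounded growth of all entries does not by itself guarantee that any single process ever sees \emph{all} entries change between two of its own snapshots, so the inference used in the paper's $x(E)\notin L$ case needs the scheduler to cooperate, whereas your fairness argument (each entry eventually exceeds the epoch baseline, hence every epoch closes) is unconditional. Your analysis of both directions — epochs closing infinitely often when every $C[j]$ is unbounded, and the stabilized entry $C[j^\star]$ freezing the first epoch that starts after stabilization so that $p_i$ reports $\yes$ forever (after at most one more epoch closure for the epoch already in progress) — is sound, and it yields both the stability property and weak-one (indeed weak-all style) decidability of the transformed algorithm.
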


\begin{proof}
Let $V$ be an algorithm that weakly-one decides~$L$.
Without loss of generality, we assume that in Line~\ref{Gen6} of each $V_i$,
$p_i$ reports a value in the \emph{last} step of that block of code.
Consider the algorithm $W$ obtained by modifying Line~\ref{Gen6} of 
each local algorithm $V_i$ of $V$ as shown in Figure~\ref{fig-basic-wod}.
In $W$, each process $p_i$ records in a new shared array $C$ the number of times it has obtained $\no$ from $V$
so far, then reads all entries of $C$, and
it reports $\yes$ if there is an entry whose value has not changed, otherwise it reports $\no$.

\begin{figure}[ht]
\centering{ \fbox{
\begin{minipage}[t]{150mm}
\scriptsize
\renewcommand{\baselinestretch}{2.5} \resetline
\begin{tabbing}
aaaa\=aaa\=aaa\=aaa\=aaa\=aaa\=aaa\=\kill 

{\bf Additional shared variables in $W$:}\\

$~~$  $C[1,\hdots,n]:$ shared array of read/write registers, each initialized to $0$\\ \\

{\bf Modified Local algorithm $W_i$ for process $p_i$:}\\

 \> $prev_i[1, \hdots, n] \leftarrow [0, \hdots, 0]$ \%\% additional local variable \\

 \> {\bf while} {\sf true} {\bf do}\\

\line{BWOD1} \>\> Non-deterministically pick an invocation symbol $v_i \in \Sigma^<_i$\\

\line{BWOD2} \>\> Block of code in Line~\ref{Gen2} of $V_i$\\

\line{BWOD3} \>\> Send $v_i$ to the adversary $\adv$\\

\line{BWOD4} \>\> Receive response symbol $w_i \in \Sigma^>_i$ from the adversary $\adv$\\

\line{BWOD5} \>\> Block of code in Line~\ref{Gen5} of $V_i$\\

\line{BWOD6} \>\> $d_i \leftarrow$ value in Line~\ref{Gen6} of $V_i$ that is $p_i$ to report\\

\>\> {\bf if} $d_i == \no$ {\bf then} $C[i].write(prev[i]+1)$\\

\>\> $snap_i \leftarrow \snap(C)$\\

\>\> {\bf if} $\exists j, snap_i[j] == prev_i[j]$ {\bf then} {\bf report} $\yes$\\

\>\> {\bf else} {\bf report} $\no$\\

\>\> $prev_i \leftarrow snap_i$

\end{tabbing}
\end{minipage}
  }
\caption{From $V$ to $W$ in proof of Lemma~\ref{lemma:basic-wod}.}
\label{fig-basic-wod}
}
\end{figure}

Consider any execution $E_W$ of $W$. We obtain an execution $E_V$ of $V$
by replacing every local and shared computations of each process $p_i$ corresponding to Line~\ref{BWOD6} of $W$ with
the corresponding local computation in Line~\ref{Gen6} of $V$
(namely, $p_i$ simply reports $d_i$ with out any further computations). We have that $x(E_W) = x(E_V)$.
By definition of weakly-one decidability, if $x(E_V) \in L$, then there is a process $p_i$ that reports $\no$ finitely many times in $E_V$,
and hence eventually $C[i]$ stabilize in $E_W$, from which follows that every process reports $\no$ finitely many times in $E_W$.
If $x(E_V) \notin L$, all processes report $\no$ infinitely many times in $E_V$, by definition
of weakly-all decidability, which implies that no entry of $C$ ever stabilizes in $E_W$,
and hence all processes report $\no$ infinitely often in $E_W$. 
Thus, $W$ weakly-one decides $L$, and has the desired stability property.
\end{proof}

The previous lemmas imply:

\begin{theorem}
\label{theo:contentions}
$\sd \subseteq \wad = \wod$.
\end{theorem}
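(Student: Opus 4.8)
The plan is to establish the theorem in three parts: $\wad \subseteq \wod$, $\wod \subseteq \wad$, and $\sd \subseteq \wad$ (equivalently $\sd \subseteq \wde$), each reduction being a small modification of the deciding algorithm in the style of Lemmas~\ref{lemma:basic-sd}--\ref{lemma:basic-wod}. The key mechanism in all cases is a shared array $C[1,\hdots,n]$ whose $i$-th entry is monotonically increased by $p_i$ each time the underlying algorithm's verdict is $\no$; processes then snapshot $C$ and translate the pattern of stabilizing/non-stabilizing entries into the verdict required by the target decidability notion. The correctness pivot is always the dichotomy: either some entry of $C$ grows without bound (which, by fairness, every process will observe infinitely often) or all entries eventually stabilize (which every process will eventually observe).

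First I would prove $\wad \subseteq \wod$. Given $L \in \wad$, invoke Lemma~\ref{lemma:basic-wad} to get an algorithm $V$ that weakly-all decides $L$ with the stability property that $x(E) \notin L$ implies \emph{every} process reports $\no$ infinitely often. Wrap $V$ in the $C$-array construction of Figure~\ref{fig-basic-wod}: each $p_i$ writes $prev_i[i]+1$ to $C[i]$ when $V$ says $\no$, snapshots, and reports $\yes$ iff some entry of the snapshot is unchanged from the previous snapshot. If $x(E) \in L$, then in the corresponding execution of $V$ every process reports $\no$ only finitely often, so all entries of $C$ stabilize, so eventually every process sees an unchanged entry and reports $\yes$ forever; in particular some (indeed every) process has $\no(E,p) < \infty$. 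If $x(E) \notin L$, the strengthened stability property gives that every entry $C[i]$ keeps increasing, so every snapshot differs from the previous one in every coordinate, and every process reports $\no$ in every iteration; thus no process has $\no(E,p) < \infty$, so $\exists p, \no(E,p)<\infty$ fails. Hence the wrapped algorithm weakly-one decides $L$. Symmetrically, for $\wod \subseteq \wad$, start from Lemma~\ref{lemma:basic-wod} (so $x(E) \in L$ implies eventually \emph{every} process reports $\yes$, equivalently $C[i]$ grows unboundedly for no $i$ — more precisely, arrange the construction so that $x(E)\in L$ forces all entries to stabilize and $x(E)\notin L$ forces all entries to grow), then apply the Figure~\ref{fig-basic-wad} wrapper which reports $\no$ iff some snapshot entry strictly increased; one checks that $x(E)\in L$ yields finitely many $\no$'s at every process and $x(E)\notin L$ yields infinitely many $\no$'s at every process, giving weak-all decidability.

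Finally, $\sd \subseteq \wad$: given $L \in \sd$, take the algorithm of Lemma~\ref{lemma:basic-sd}, which already satisfies that $x(E)\notin L$ implies eventually every process always reports $\no$. If $x(E)\in L$ it never reports $\no$, so trivially $\forall p,\ \no(E,p) = 0 < \infty$; if $x(E)\notin L$, every process reports $\no$ from some point on, so $\forall p,\ \no(E,p) = \infty$, and in particular $\forall p,\ \no(E,p) < \infty$ fails. This is exactly the weak-all decidability condition, so $L \in \wad$. Combining the three inclusions with $\wad \subseteq \wod \subseteq \wad$ yields $\sd \subseteq \wad = \wod$.

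I expect the only delicate point to be getting the \emph{exact} stability guarantees from Lemmas~\ref{lemma:basic-wad} and~\ref{lemma:basic-wod} to line up with what the $C$-array wrappers need — specifically, that in the ``bad'' case \emph{all} (not merely some) entries of $C$ grow unboundedly, so that \emph{every} process, not just one, produces the target verdict infinitely often. This is why the argument routes through the strengthened algorithms rather than arbitrary deciding algorithms: the ``eventually every process'' clauses in those lemmas are precisely what propagates a single process's divergent $\no$-count into a global, observable-by-all phenomenon via fairness of $E_W$ and atomicity of the snapshot. Once those alignments are in place, each verification is the same two-case (in $L$ / not in $L$) bookkeeping already rehearsed in the basic lemmas, so no further obstacles arise.
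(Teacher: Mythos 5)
Your decomposition ($\sd \subseteq \wad$, $\wad \subseteq \wod$, $\wod \subseteq \wad$) and your reliance on the stability Lemmas~\ref{lemma:basic-sd}--\ref{lemma:basic-wod} are exactly the paper's strategy, and your $\sd \subseteq \wad$ argument coincides with the paper's. The gap is in $\wad \subseteq \wod$: after invoking Lemma~\ref{lemma:basic-wad} you add a \emph{second} wrapper (the Figure~\ref{fig-basic-wod} construction) and justify it, in the case $x(E)\notin L$, by the claim that ``every entry $C[i]$ keeps increasing, so every snapshot differs from the previous one in every coordinate, and every process reports $\no$ in every iteration.'' That inference fails under asynchrony. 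The Figure~\ref{fig-basic-wod} rule makes $p_i$ report $\no$ only when \emph{all} $n$ entries have changed between its two consecutive snapshots, and unbounded growth of every entry does not guarantee this happens even infinitely often for a given process: a fast $p_i$ can take many snapshots between two increments of a slow $p_j$ (so $C[j]$ is unchanged in most of its comparisons), and $C[i]$ itself is unchanged at every iteration in which $p_i$'s underlying verdict $d_i$ is $\yes$ --- which Lemma~\ref{lemma:basic-wad} permits infinitely often, since it guarantees $\no$ infinitely often but not eventually-always. So your wrapped algorithm may leave some process with $\no(E,p)<\infty$ on a word outside $L$, violating the weak-one condition; the step as written does not close this, and you yourself flag this alignment as the delicate point without actually resolving it for this direction.

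The repair is to drop the second wrapper, which is precisely what the paper does: the strengthened algorithm of Lemma~\ref{lemma:basic-wad} \emph{already} weakly-one decides $L$ (if $x(E)\in L$ then $\forall p,\ \no(E,p)<\infty$ by the definition of weak-all deciding, hence $\exists p$; if $x(E)\notin L$ then $\forall p,\ \no(E,p)=\infty$ by the stability property, hence no such $p$ exists), and symmetrically the strengthened algorithm of Lemma~\ref{lemma:basic-wod} already weakly-all decides $L$. Your $\wod \subseteq \wad$ paragraph is essentially sound, since the ``some entry increased'' detector of Figure~\ref{fig-basic-wad} does fire infinitely often at every process when some entry grows unboundedly, but there too the extra wrapper is superfluous for the theorem; only the conclusions of the three lemmas plus the definitions are needed.
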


\begin{proof}
For any $L \in \sd$, consider an algorithm $V$ that strongly decides $L$ with the property stated in Lemma~\ref{lemma:basic-sd}.
In every execution $E$ of $V$, we have that if $x(E) \in L$, then $\no(E,p) = 0$, for every process $p$, 
and if  $x(E) \notin L$, $\no(E,p) = \infty$, for every process $p$. Thus, $V$ weakly-all decides $L$, and hence $\sd \subseteq \wad$.

For any $L \in \wad$, consider an algorithm $V$ that weakly-all decides $L$ with the property stated in Lemma~\ref{lemma:basic-wad}.
In every execution $E$ of $V$, if $x(E) \in L$, then for every process $p$, $\no(E,p) < \infty$, 
and if $x(E) \notin L$, $\no(E,p) = \infty$, for every process $p$. 
Thus, $V$ weakly-one decides $L$, and hence $\wad \subseteq WOD$.

For any $L \in WOD$, consider an algorithm $V$ that weakly-one decides $L$ with the property stated in Lemma~\ref{lemma:basic-wod}.
In every execution $E$ of $V$, if $x(E) \in L$, then for every process $p$, $\no(E,p) < \infty$, 
and if $x(E) \notin L$, $\no(E,p) = \infty$, for every process $p$. 
Thus, $V$ weakly-all decides $L$, and hence $\wod \subseteq \wad$.
\end{proof}

Given the equivalence above, we can alternatively define the clases $\wad$ and $\wod$ as follows:

\begin{definition}[Weak Decidability]
An algorithm $V$ \emph{weakly decides} a distributed language $L$ in every execution $E$,
\begin{itemize}
\item[]
$x(E) \in L \implies \forall p, \no(E,p) < \infty,$ 

\item[]
$x(E) \notin L \implies \forall p, \no(E,p) = \infty.$
\end{itemize}

Alternatively, we say that $L$ is \emph{weakly decidable}.
The class of weakly decidable languages is denoted~$\wde$.
\end{definition}

\section{Solvability results}

\subsection{Separation results}

We first show that linearizability and sequential consistency are in general neither 
strongly decidable nor weakly decidable.
The proof of the next impossibility result uses the same line of reasoning of that
in the proof of Theorem 5.1 in~\cite{podc23}, that exploits real-time order of events, that are unaccessible to the processes.

Given an algorithm $V$, two of its executions $E$ and $E'$ are \emph{indistinguishable to $p$}, 
denoted $E \equiv_p E'$, if $p$ passes through the same sequence of 
local states in both executions. If $E$ and $E'$ are indistinguishable to every process, 
we just say that they are \emph{indistinguishable}, denoted~$E \equiv E'$.
{The next proof, and others, use the fact that 
it could be $x(E) \neq x(E')$, despite $E \equiv E'$.}

\begin{lemma}
\label{lemma:no-wd-lin-sc}
$LIN\_REG, SC\_REG \notin \wde$.
\end{lemma}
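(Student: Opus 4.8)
The plan is to prove impossibility by a classic indistinguishability argument that exploits the fact that the processes cannot detect real-time order. I would fix $n = 2$ processes, $p_1$ and $p_2$ (the general case follows by letting the other processes run trivially), and suppose, for contradiction, that some algorithm $V$ weakly decides $LIN\_REG$; by Lemma~\ref{lemma:basic-wad} we may assume $V$ has the stabilization property that on any input \emph{not} in the language every process reports $\no$ infinitely often, and on any input in the language every process reports $\no$ only finitely often. The strategy is to build two well-formed $\omega$-words, $x \in LIN\_REG$ and $y \notin LIN\_REG$, together with two executions $E$ and $E'$ of $V$ with $x(E) = x$ and $x(E') = y$ (which exist by Claim~\ref{claim:execs}) such that $E \equiv E'$ — and then derive a contradiction from the stabilization property, since indistinguishable executions force identical sequences of reports at every process.

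The core construction is the pair of words. Using the register alphabet from Example~1, I would take a $p_1$-operation $\mathit{write}(1)$ and a $p_2$-operation $\mathit{read}()$ that returns $0$. In the word $x$, I interleave so that these two operations are \emph{concurrent} (the $\mathit{read}$'s invocation precedes the $\mathit{write}$'s response and vice versa); the $\mathit{read}$ returning $0$ is then linearizable (place the $\mathit{read}$ before the $\mathit{write}$ in the linearization). In the word $y$, I interleave so that the $\mathit{write}(1)$ \emph{completes before} the $\mathit{read}()$ is invoked; now a $\mathit{read}$ returning $0$ violates real-time order and $y \notin LIN\_REG$. After this initial gadget, both words continue with a fixed well-formed infinite tail (say each process alternately invokes and completes trivial $\mathit{read}$s that return the ``correct'' value, chosen so the tail keeps $x$ inside the language). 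The key point, exactly as in the cited Theorem~5.1 of~\cite{podc23}, is that $x|1 = y|1$ and $x|2 = y|2$: each process's local word is identical in the two words, because the only difference between $x$ and $y$ is the cross-process interleaving of the invocation/response symbols, which is invisible in the projections. I would then schedule $E$ and $E'$ so that each process takes exactly the same steps in the same order relative to its own code — the adversary hands $p_i$ the same invocation and response symbols at the ``same'' local points in both executions, and the only difference is the global ordering of the two processes' steps around the gadget, including the ordering of their accesses to the shared memory $M$ in the blocks of Lines~\ref{Gen2} and~\ref{Gen5}. Here one must be careful: to keep $E \equiv E'$ one arranges that $p_1$ completes its entire gadget iteration (all of Lines~\ref{Gen1}--\ref{Gen6}, including its shared-memory exchanges) before $p_2$ begins its gadget iteration in \emph{both} executions, so neither process ever observes the other's shared-memory state during the gadget; the only thing that changes is whether, in real time, $p_2$'s $\mathit{read}$-invocation happened before or after $p_1$'s $\mathit{write}$-response, which affects $x(E)$ vs.\ $x(E')$ but not any local state.

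With $E \equiv E'$ established, the contradiction is immediate: since $x(E) = x \in LIN\_REG$, weak decidability forces $\no(E, p_1) < \infty$, so $p_1$ eventually stops reporting $\no$ in $E$; but $\no(E', p_1) = \infty$ since $x(E') = y \notin LIN\_REG$, so $p_1$ reports $\no$ infinitely often in $E'$ — impossible, because $p_1$ passes through the identical infinite sequence of local states in $E$ and $E'$ and hence reports the identical infinite sequence of values. This proves $LIN\_REG \notin \wde$. For $SC\_REG$ I would reuse the same two words but I must re-examine membership: the tail must be chosen so that $x \in SC\_REG$ (easy) and, more delicately, so that $y \notin SC\_REG$ — this is where sequential consistency, which ignores real-time order, needs a different gadget, since the simple concurrent-vs-precedes trick above makes $y$ sequentially consistent. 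The honest fix is: for $SC\_REG$, make $y$ violate \emph{process order} at a single process rather than real-time order across processes — e.g.\ have $p_2$ do $\mathit{write}(1)$ then later $\mathit{read}()$ returning $0$, which is not sequentially consistent, while $x$ has $p_2$'s $\mathit{read}$ returning $1$; but then $x|2 \ne y|2$ and the indistinguishability breaks. So in fact the right structure, and what I expect the authors do, is to exhibit for $SC\_REG$ a separate pair of words exploiting that even sequential consistency's ``completion of pending operations'' clause can be forced to differ only in a cross-process ordering invisible to projections, together with a second reader whose own return value is the same in both words but whose linearization-feasibility flips. The main obstacle, then, is not the indistinguishability machinery (that is routine once the words are fixed) but constructing, for each of $LIN\_REG$ and $SC\_REG$, an in-language word and an out-of-language word with \emph{identical per-process projections}; for $LIN\_REG$ the concurrent/precedes real-time gadget does it, and for $SC\_REG$ I would look for a two-reader gadget where the shared ``observed'' values are fixed but the global interleaving determines whether a consistent sequential witness exists. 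I would present the $LIN\_REG$ argument in full detail and then indicate the analogous $SC\_REG$ gadget, deferring to the later corollaries for the strong-decidability consequences via Theorem~\ref{theo:contentions}.
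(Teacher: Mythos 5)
Your $LIN\_REG$ argument follows the same indistinguishability route as the paper, but as written it has two real problems. The first, and the main gap, is $SC\_REG$: you correctly notice that your gadget (write$(1)$ completed before a read returning the stale value $0$) yields a word that \emph{is} sequentially consistent, and then you leave the needed gadget unconstructed ("I would look for a two-reader gadget\dots"). So the lemma's second half is simply not proved. The paper avoids a separate construction by choosing the gadget differently: in the good execution $E$, in every iteration $p_1$ writes $r$ and immediately afterwards $p_2$ reads $r$; in the swapped execution $F$ the read returning $r$ occurs \emph{before the invocation of} write$(r)$ even appears in the word. A finite prefix of $x(F)$ cut right after that read contains a read returning a value for which no write operation exists at all, and sequential consistency only allows completing pending operations, not inventing new ones--so the same pair $E,F$ kills both $LIN\_REG$ and $SC\_REG$ at once. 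The essential point you missed is to make the read return the value being written (a "future" value in the swapped order) rather than a stale value; a stale read only violates real-time order, which sequential consistency ignores.

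The second problem is your scheduling of $E$ and $E'$. You propose to keep $E \equiv E'$ by letting $p_1$ complete its \emph{entire} gadget iteration (Lines~\ref{Gen1}--\ref{Gen6}) before $p_2$ begins, in both executions. But then $p_1$'s write-response (Line~\ref{Gen4}) precedes $p_2$'s read-invocation (Line~\ref{Gen3}) in real time in \emph{both} executions, so both inputs equal your out-of-language word $y$ and you never obtain the in-language execution $E$ with the two operations concurrent. Moreover, earlier in the same passage you allow the order of the shared-memory accesses in Lines~\ref{Gen2} and~\ref{Gen5} to differ between $E$ and $E'$, which in general destroys indistinguishability. The correct arrangement, as in the paper, is to fix one global order of all the shared-memory blocks (e.g., $p_1$'s Line~\ref{Gen2}, then $p_2$'s Line~\ref{Gen2}, \dots, then $p_1$'s Lines~\ref{Gen5}--\ref{Gen6}, then $p_2$'s) identically in both executions, and to swap only the purely local send/receive events of Lines~\ref{Gen3}--\ref{Gen4}; since those events touch no shared memory, the two executions are indistinguishable while their inputs differ exactly in the real-time order you need. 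With that repair and the paper's choice of read values, your concluding contradiction (indistinguishable executions force identical report sequences, contradicting $\no(F,p)=\infty$) goes through.
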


\begin{proof}
We focus on the case $n=2$, but the argument below can be extended to any $n$.
By contradiction, suppose that there is an algorithm $V$ that weakly decides $LIN\_REG$.
Algorithm $V$ has the structure described in Figure~\ref{fig-generic-algo}. 
Consider the following execution $E$ of $V$, where $p_1$ and $p_2$ execute the loop iterations ``almost synchronously''
as described next. For every $r \geq 1$, their $r$-th iterations execute in the following order: 
\begin{enumerate}
\item $p_1$ picks $<^r_1$ (i.e. invocation to $write(r)$ by $p_1$) in Line~\ref{Gen1} and executes its computations in Line~\ref{Gen2} until completion.
\item $p_2$ picks $<_2$ (i.e. invocation to $read()$ by $p_2$) in Line~\ref{Gen1} and executes its computations in Line~\ref{Gen2} until completion.
\item $p_1$ sends $<^r_1$ to $\adv$ in Line~\ref{Gen3} and then receives $>_1$ from $\adv$ in Line~\ref{Gen4}.
\item $p_2$ sends  $<_2$ to $\adv$ in Line~\ref{Gen3} and then receives $>^r_2$ from $\adv$ in Line~\ref{Gen4} (namely, $p_2$ reads $r$ from the register).
\item $p_1$ executes its computations in Line~\ref{Gen5} and~\ref{Gen6} until completion.
\item $p_2$ executes its computations in Line~\ref{Gen5} and~\ref{Gen6} until completion.
\end{enumerate}

Observe that every prefix $x(E)$ is a linearizable history of register, where $p_1$ writes $r$ and immediately after $p_2$ reads $r$.
Thus, $x(E) \in LIN\_REG$, and hence, $\no(E,p_1), \no(E,p_2) <~\infty$.
Now consider the execution $F$ of $V$ obtained as $E$ except that items (3) and (4) above are swapped.
Note that $x(F)$ is not linearizable as $p_2$ reads $r$ before that value is written in the register. 
Hence, $x(F) \notin LIN\_REG$. Also, note that $p_1$ and $p_2$ cannot distinguish between the two executions (i.e., $E \equiv F$)
as the events in Lines~\ref{Gen3} and~\ref{Gen4} are local, hence it is impossible for them to know the order they are executed.
Thus, in $F$, $p_1$ and $p_2$ report the same sequence of values as in $E$, and hence $\no(F,p_1), \no(F,p_2)<~\infty$,
which is a contradiction as $x(F) \notin LIN\_REG$ and $V$ supposedly weakly decides $LIN\_REG$. 
Therefore, $LIN\_REG \notin~\wde$.

It is not difficult to verify that the very same argument proves that $SC\_REG \notin \wde$. 
The lemma follows.
\end{proof}

The previous lemma and Theorem~\ref{theo:contentions} imply:

\begin{corollary}
\label{coro:no-sd-lin-sc}
$LIN\_REG, SC\_REG \notin \sd$.
\end{corollary}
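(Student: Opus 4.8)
The plan is to derive Corollary~\ref{coro:no-sd-lin-sc} directly from the two results already established in the excerpt: Lemma~\ref{lemma:no-wd-lin-sc}, which states $LIN\_REG, SC\_REG \notin \wde$, and Theorem~\ref{theo:contentions}, which gives $\sd \subseteq \wad = \wod = \wde$. Since the definition of $\wde$ was introduced precisely as the common reformulation of $\wad$ and $\wod$ after proving their equivalence, a language outside $\wde$ is outside $\wad$; combined with the containment $\sd \subseteq \wad$, any strongly decidable language is weakly(-all) decidable, so a language that is not weakly decidable cannot be strongly decidable. This is purely a contrapositive argument and requires no new construction.

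Concretely, I would argue as follows. Suppose for contradiction that $LIN\_REG \in \sd$. By Theorem~\ref{theo:contentions}, $\sd \subseteq \wad$, and since $\wad = \wde$ (by the same theorem together with the definition of weak decidability), we get $LIN\_REG \in \wde$. This contradicts Lemma~\ref{lemma:no-wd-lin-sc}, which asserts $LIN\_REG \notin \wde$. The identical reasoning applies verbatim to $SC\_REG$, using the second half of Lemma~\ref{lemma:no-wd-lin-sc}. Hence neither language lies in $\sd$.

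There is essentially no obstacle here: the corollary is flagged in the excerpt as following immediately from "the previous lemma and Theorem~\ref{theo:contentions}", so the only care needed is to invoke the right inclusion direction ($\sd \subseteq \wad$, not the reverse) and to note that $\wad$ and $\wde$ denote the same class. If one wanted a self-contained one-liner rather than a chain through $\wde$, one could instead cite Lemma~\ref{lemma:basic-sd}: an algorithm strongly deciding $L$ with the stability property of that lemma satisfies, in every execution $E$, that $x(E) \in L$ implies $\no(E,p)=0 < \infty$ for all $p$, and $x(E) \notin L$ implies $\no(E,p) = \infty$ for all $p$ (since eventually every process always reports $\no$), which is exactly weak decidability — again contradicting Lemma~\ref{lemma:no-wd-lin-sc}. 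Either phrasing is a two-sentence proof.

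\begin{proof}
Suppose, for contradiction, that $LIN\_REG \in \sd$. By Theorem~\ref{theo:contentions} we have $\sd \subseteq \wad$, and since $\wad = \wde$, it follows that $LIN\_REG \in \wde$, contradicting Lemma~\ref{lemma:no-wd-lin-sc}. The same argument applied to $SC\_REG$ shows $SC\_REG \notin \sd$.
\end{proof}
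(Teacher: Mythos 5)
Your proposal is correct and matches the paper exactly: the paper derives this corollary immediately from Lemma~\ref{lemma:no-wd-lin-sc} together with the containment $\sd \subseteq \wad = \wod$ of Theorem~\ref{theo:contentions}, which is precisely your contrapositive argument. Nothing is missing.
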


We now argue that the eventual counters defined above are not strongly decidable, 
but the weak version is weakly decidable. 

\begin{lemma}
\label{lemma:no-sd-evc}
$WEC\_COUNT, SEC\_COUNT \notin \sd$.
\end{lemma}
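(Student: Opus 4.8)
\textbf{Proof plan for Lemma~\ref{lemma:no-sd-evc}.}

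The plan is to reuse the indistinguishability idea from the proof of Lemma~\ref{lemma:no-wd-lin-sc}, but now aimed at the weaker conclusion required for strong decidability: I only need to produce one bad execution whose companion is indistinguishable from a good one, so that some process is forced to output $\no$ in a word that belongs to the language. First I would assume, for contradiction, an algorithm $V$ that strongly decides $WEC\_COUNT$, and focus on $n=2$ as before, with $p_1$ performing $inc()$ operations and $p_2$ performing $read()$ operations, run ``almost synchronously'' in the style of the previous lemma. In the good execution $E$, in round $r$ process $p_1$ sends its $r$-th $<^+_1$ to $\adv$ in Line~\ref{Gen3} and receives $>_1$ in Line~\ref{Gen4}, and then $p_2$ sends $<_2$ and receives $>^r_2$ (i.e., reads the value $r$). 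Every finite prefix of $x(E)$ is weakly-eventual consistent — each $p_2$-read returns at least the number of preceding $p_1$-increments, reads are non-decreasing, and the third clause is vacuous since there are infinitely many $inc$ operations — so $x(E) \in WEC\_COUNT$ and therefore $\no(E,p_i)=0$ for $i=1,2$. Now form $F$ by swapping, in each round $r$, the $p_1$-block of Lines~\ref{Gen3}--\ref{Gen4} with the $p_2$-block, so that $p_2$ still receives $>^r_2$ but now the $r$-th increment of $p_1$ has not yet reached $\adv$; then $p_2$'s $r$-th read returns $r$ while only $r-1$ increments of $p_1$ precede it, violating clause~(1), hence $x(F)\notin WEC\_COUNT$. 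Since Lines~\ref{Gen3} and~\ref{Gen4} are local events whose relative order the processes cannot observe, $E\equiv F$, so $p_1$ and $p_2$ report exactly the same (all-$\yes$) sequences in $F$ as in $E$; thus $\no(F,p_i)=0$ for all $p_i$, contradicting that $V$ strongly decides $WEC\_COUNT$ on a word not in the language.

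For $SEC\_COUNT$, since every strongly-eventual-consistent word is in particular weakly-eventual consistent, the same $E$ works: in $E$ each $p_2$-read returns $r$, which is exactly the number of $inc$ operations that precede or are concurrent to it, so clause~(4) also holds and $x(E)\in SEC\_COUNT$. The same swapped execution $F$ makes $p_2$'s $r$-th read return $r$ while only $r-1$ increments precede or are concurrent to it (the $r$-th increment of $p_1$ is sent strictly after $p_2$'s read returns), so $x(F)\notin SEC\_COUNT$, and the identical indistinguishability argument yields the contradiction. Alternatively, once $WEC\_COUNT\notin\sd$ is established and one observes that $\sd$ is closed in the relevant way, the claim for $SEC\_COUNT$ could be derived, but the direct argument is cleaner and self-contained.

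I expect the only delicate point to be the bookkeeping that makes $E$ and $F$ genuinely indistinguishable while $x(E)$ and $x(F)$ differ in language membership: one must schedule the two processes' local Line~\ref{Gen2},~\ref{Gen5},~\ref{Gen6} blocks so that neither process's sequence of local states depends on the swapped order of the remote send/receive events, exactly as in Lemma~\ref{lemma:no-wd-lin-sc}, and one must check that in $F$ the increment really is ordered after the read in real time (so that it is neither preceding nor concurrent), which is what breaks clause~(1) (and clause~(4) for $SEC\_COUNT$). Everything else is a routine adaptation of the register argument, with $write(r)$ replaced by the $r$-th $inc()$ and the read returning the running count instead of the last written value.
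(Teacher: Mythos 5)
Your $SEC\_COUNT$ half is fine, but the $WEC\_COUNT$ half—the part the paper proves directly—has a genuine gap: the swapped execution $F$ does not leave the language. Clause~(1) of weak eventual consistency is a \emph{lower} bound and refers only to the $inc$ operations \emph{of the same process}; in your schedule all reads are by $p_2$ and all increments by $p_1$, so a read returning $r$ with only $r-1$ of $p_1$'s increments preceding it violates nothing (the only upper-bound constraint in $WEC\_COUNT$ is the eventual clause~(3), which is vacuous here because there are infinitely many increments, and clause~(2) still holds since the reads return $1,2,3,\dots$). Hence $x(F)\in WEC\_COUNT$ and no contradiction arises. This is not an accident of your particular schedule: $WEC\_COUNT$ is real-time oblivious and even weakly decidable (Lemma~\ref{lemma:wd-evc}), so any argument based on two indistinguishable executions that differ only in the real-time order of send/receive events would, if it worked, also prove $WEC\_COUNT\notin\wde$, contradicting the paper's own positive result. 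The obstruction to strong decidability of $WEC\_COUNT$ is temporal, not real-time order.

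The paper's proof is instead a ``premature \no'' prefix-extension argument: take $x\notin WEC\_COUNT$ (one increment, then all reads return $0$ forever), use Claim~\ref{claim:execs} to get a sequential execution $E$ with $x(E)=x$; strong decidability forces some process to report \no at some finite prefix $F$ of $E$; then extend the finite input $x(F)$ to a word $x'\in WEC\_COUNT$ in which all subsequent reads return $1$, and observe (again via the construction of Claim~\ref{claim:execs}) that $F$ is a prefix of an execution $E'$ with $x(E')=x'$, so the \no report persists on a word in the language—contradicting the requirement that members of $L$ receive no \no at all. Your swap argument for $SEC\_COUNT$ does work, because clause~(4) is genuinely real-time sensitive (indeed it shows the stronger fact $SEC\_COUNT\notin\wde$, in line with Theorem~\ref{theo:rt-oblivious}), so that half could stand on its own; but as written the lemma's $WEC\_COUNT$ claim is unproved. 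Also, be careful with your fallback remark: $\sd$ is not closed under passing to sub-languages in general (the empty language is strongly decidable and is contained in $WEC\_COUNT$), so deriving the $SEC\_COUNT$ claim from the $WEC\_COUNT$ one needs the specific witness words to work for both languages, not a closure property.
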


\begin{proof}
We focus on the case $n=2$, but the argument below can be extended to any $n$.
By contradiction, suppose that there is an algorithm $V$ that strongly decides $WEC\_COUNT$.
Consider the $\omega$-word $x$ that corresponds to a \emph{sequential} history
where $p_1$ executes $add()$,
and then $p_2$ and $p_1$ alternatively execute infinitely many $read()$ operations returning 0.
Specifically, $x$~is:
$$<^+_1 \hbox{ } >^{\,}_1 \hbox{    } <^{\,}_2 \hbox{ }  >^0_2 \hbox{    } <^{\,}_1 \hbox{ }  >^0_1
 \hbox{    }  <^{\,}_2 \hbox{ }  >^0_2 \hbox{    }  <^{\,}_1 \hbox{ }  >^0_1 \hbox{    } \hdots$$

Clearly $x \notin WEC\_COUNT$. By Claim~\ref{claim:execs},
there is an execution $E$ of $V$ such that $x(E) = x$.
The proof of Claim~\ref{claim:execs} shows that we can assume that in $E$ each process
atomically executes Lines~\ref{Gen1}-~\ref{Gen3} and Lines~\ref{Gen4}-~\ref{Gen6}, respectively, in every iteration of the loop.
Since $x \notin WEC\_COUNT$, at least one of $p_1$ and $p_2$ reports $\no$ in $E$.
Let $F$ be the shortest (finite) prefix of $E$ in which a process reports $\no$.
Without loss of generality, assume that such process reporting $\no$ is $p_2$.
Thus, at the end of $F$, $p_2$ executes its block of code corresponding to Line~\ref{Gen6}, reporting $\no$
(for the first time in $F$).
Let us consider the finite input $x(F)$ in prefix $F$ (namely, the projection of symbols in $\Sigma$).
Observe that $x(F)$ is a finite prefix of $x$ that ends with $<^{\,}_2 \hbox{ }  >^0_2$.
Consider the $\omega$-word $x'$:
$$x(F) \hbox{    }  <^{\,}_1 \hbox{ }  >^1_1 \hbox{    }  <^{\,}_2 \hbox{ }  >^1_2 \hbox{    }  <^{\,}_1 \hbox{ }  >^1_1 \hbox{    }  <^{\,}_2 \hbox{ }  >^1_2 \hdots$$
Note that $x' \in WEC\_COUNT$. By Claim~\ref{claim:execs},
there is an execution $E'$ of $V$ such that $x(E) = x'$.
Following the proof of Claim~\ref{claim:execs}, we can see that $F$ is prefix of $E'$,
and hence $p_2$ reports $\no$ in~$E'$.
But this is a contradiction as  $x' \in WEC\_COUNT$.
Therefore, $WEC\_COUNT \notin \sd$. 

Finally, since $SEC\_COUNT \subset WEC\_COUNT$, we also have that $SEC\_COUNT \notin \sd$.
\end{proof}

Using a similar line of reasoning, we can provide ad hoc arguments showing that none of the languages
$LIN\_LED, SC\_LED$ and $EC\_LED$ does not belong to $\wde$ or $\sd$, and that $SEC\_COUNT$ is not in $\wde$.
Instead, we will prove these impossibility results through a characterization in the next subsection.

\begin{lemma}
\label{lemma:wd-evc}
$WEC\_COUNT \in \wde$.
\end{lemma}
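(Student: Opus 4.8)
The plan is to exhibit an explicit algorithm $V$ that weakly decides $WEC\_COUNT$ and then verify the two implications required by the definition of $\wde$. The key observation is that the three defining conditions of a weakly-eventual consistent history are all \emph{local} to a single process: condition (1) compares a $read$ of $p_i$ to the number of $inc$'s of $p_i$ preceding it; condition (2) compares a $read$ of $p_i$ to the immediately preceding $read$ of $p_i$; and condition (3), for the suffix $\beta$ consisting only of $read$'s, asserts that eventually every $read$ of $p_i$ in $\beta$ returns exactly the number of $inc$ operations appearing in the finite prefix $\alpha$ — but ``the number of $inc$ operations in $\alpha$'' is a global quantity. So the algorithm must let processes share, via the shared memory, how many $inc$ operations each has performed.

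Concretely, I would have each $p_i$ maintain a shared register $INC[i]$ holding the number of $inc$ operations it has completed so far (written in the Line~\ref{Gen5} block, right after receiving the response to an $inc$), and in the Line~\ref{Gen2} / Line~\ref{Gen6} blocks each process takes a snapshot of $INC[1..n]$. Using its own history plus the snapshot, $p_i$ checks: (a) whether its current $read$ violates condition (1) relative to its own completed $inc$'s — if so, report \no and set a sticky \no-flag (as in Lemma~\ref{lemma:basic-sd}) so it reports \no forever; (b) whether its current $read$ violates condition (2) — same treatment; and (c) a ``staleness'' test for condition (3): if $p_i$'s current operation is a $read$ returning value $r$, and the snapshot of $INC$ sums to $s > r$, then $p_i$ reports \no \emph{this round} (but does not set the sticky flag, because a later $read$ returning the up-to-date count is still allowed). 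If none of these fire, report \yes.

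For correctness I would argue the two implications separately. If $x(E)\notin WEC\_COUNT$, some process violates (1), (2), or (3). A violation of (1) or (2) is detected by the offending process at the step of the bad $read$ and makes it report \no forever, so $\no(E,p)=\infty$ for that $p$; then, exactly as in the proof of Lemma~\ref{lemma:basic-wad}, one propagates this via a shared counter so that \emph{every} process reports \no infinitely often. A violation of (3) means there is a finite prefix $\alpha$ after which only $read$'s occur, yet infinitely many of them return a value strictly less than the (now fixed) total $s$ of all $inc$'s; since that total is eventually reflected in $INC$ and stays constant, the staleness test (c) fires infinitely often at the offending process, again yielding infinitely many \no's, which is then propagated to all processes. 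Conversely, if $x(E)\in WEC\_COUNT$: conditions (1) and (2) hold everywhere, so no sticky flag is ever set and tests (a),(b) never fire; condition (3) guarantees that in any suffix of only $read$'s, eventually every $read$ returns the correct total, so after the point where all $INC$ entries have stabilized, test (c) fires only finitely often (only on the finitely many ``lagging'' $read$'s before stabilization), whence $\no(E,p)<\infty$ for every $p$. Combining, $V$ weakly decides $WEC\_COUNT$.

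The main obstacle I anticipate is making the staleness test (c) simultaneously \emph{complete} (it must catch every genuine violation of condition (3)) and \emph{not too aggressive} (it must fire only finitely often on good words). The subtlety is that on a good word there can be a $read$ that legitimately returns a value smaller than the current content of $INC$ — e.g.\ a slow $read$ concurrent with many $inc$'s — so test (c) as stated could misfire; but since on a good word condition (3) forces only finitely many such lagging reads once $inc$'s have ceased, finitely many misfires are harmless for $\wde$ (unlike for $\sd$, which is exactly why $WEC\_COUNT\notin\sd$ by Lemma~\ref{lemma:no-sd-evc}). Pinning down that ``once the $INC$ snapshot has stabilized, only finitely many reads of any process can return a strictly smaller value'' — using fairness of $E$ and condition (3) — is the one place where the argument needs care rather than bookkeeping; everything else follows the template already established in Lemmas~\ref{lemma:basic-sd} and~\ref{lemma:basic-wad}.
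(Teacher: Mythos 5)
Your overall strategy is the same as the paper's: a shared per-process array of increment counts read via snapshots, a sticky \no-flag for violations of the two local conditions (1)--(2), a non-sticky \no for the eventual condition (3), and an appeal to Lemma~\ref{lemma:basic-wad} to upgrade weak-all decidability to $\wde$ (this is exactly how the paper concludes, via the algorithm of Figure~\ref{fig-algo-wec-count}).

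However, there is a genuine gap in your staleness test (c): it is one-sided, firing only when the current $read$ returns $r$ strictly smaller than the snapshot sum $s$. A violation of condition (3) need not be of this form. Conditions (1) and (2) do not bound $read$ values from above, so consider the word in which $p_1$ performs a single $inc()$ and thereafter every $read()$ of every process returns $2$ forever: properties (1) and (2) hold, but property (3) fails (the read-only suffix must eventually return $1$), so the word is not in $WEC\_COUNT$. On the corresponding execution your tests (a), (b) never fire, and test (c) never fires either since $s=1 < r=2$, so every process reports \yes forever, contradicting the requirement that some process reports \no infinitely often. This is precisely why the paper's clause compares for \emph{inequality} ($curr\_read_i \neq curr\_incs_i$, together with detecting growth of the increments array) rather than only for lagging reads; your characterization ``a violation of (3) means infinitely many reads return a value strictly less than the total'' is where the error enters, and the fix is to test $r \neq s$. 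A secondary remark: your good-word analysis tacitly assumes the $INC$ entries stabilize, i.e.\ that $inc$ operations eventually cease; words with infinitely many $inc$'s satisfy (3) vacuously and are not covered by your argument (the paper's own proof makes the same tacit assumption, so I do not count this against you, but it deserves a sentence if you write the argument out in full).
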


\begin{proof}
We argue that the read/write algorithm $V$ in Figure~\ref{fig-algo-wec-count} weakly decides $WEC\_COUNT$.
In the algorithm, before interacting with $\advt$,
process $p_i$ announces in $M[i]$ if in its current iteration-loop it sends to $\advt$ an invocation to $inc()$.
After interacting with $\advt$, in the block in Line~\ref{EC5}, $p_i$ reads all increments announced in $INCS$ so far,
and records the returned value in $w_i$, in case it is a response to $read()$.
In the block in Line~\ref{EC6}, $p_i$ reports $\no$ if $flag_i$ encodes that 
$p_i$ already detected that one of the first two properties in the definition of $WEC\_COUNT$ has been violated;
next $p_i$ checks if in the current iteration $p_i$ witnesses that one of the first two properties
does not hold, and if so, it encodes that in $flag_i$ and reports $\no$;
then $p_i$ checks if the current iteration violates the third property in the definition of $WEC\_COUNT$,
and if so it only reports $\no$;
and finally, if none of the previous cases hold, $p_i$ reports $\yes$.

\begin{figure}[ht]
\centering{ \fbox{
\begin{minipage}[t]{150mm}
\scriptsize
\renewcommand{\baselinestretch}{2.5} \resetline
\begin{tabbing}
aaaa\=aaa\=aaa\=aaa\=aaa\=aaa\=aaa\=\kill 

{\bf Shared variables in $V$:}\\

$~~$  $INCS[1,\hdots,n]:$ shared array of read/write registers, each initialized to $0$\\ \\

{\bf Local algorithm $V_i$ for process $p_i$:}\\

 \> $prev\_read_i \leftarrow 0$\\
 
 \> $prev\_incs_i \leftarrow 0$\\
 
  \> $count_i \leftarrow 0$\\
 
 \> $flag_i \leftarrow \false$\\

 \> {\bf while} {\sf true} {\bf do}\\

\line{EC1} \>\> Non-deterministically pick an invocation symbol $v_i \in \Sigma^<_i$\\

\line{EC2} \>\> {\bf if} $v_i$ is an invocation to $inc()$ {\bf then} \\

 \>\>\> $count \leftarrow count + 1$\\
 
  \>\>\> $INCS[i].write(count)$ \\

\line{EC3} \>\> Send $v_i$ to the adversary $\adv$\\

\line{EC4} \>\> Receive response symbol $w_i \in \Sigma^>_i$ from the adversary $\adv$\\

\line{EC5} \>\> $snap_i \leftarrow \snap(INCS)$\\

\>\> $curr\_incs_i \leftarrow snap_i[1] + \hdots + snap_i[n]$\\

\>\> {\bf if} $w_i$ is a response to $read()$ {\bf then} $curr\_read_i \leftarrow$ return value in $w_i$\\

\line{EC6} \>\> {\bf if} $flag_i == \true$ {\bf then}\\ 

\>\>\> {\bf report} $\no$\\

\>\> {\bf elseif} $curr\_read_i < snap_i[i] \, \vee \, curr\_read_i < prev\_read_i$ {\bf then}\\

\>\>\> $flag_i \leftarrow \true$\\

\>\>\> {\bf report} $\no$\\

\>\> {\bf elseif} $curr\_read_i \, != curr\_incs_i \, \vee \, prev\_incs_i < curr\_incs_i$ {\bf then}\\

\>\>\> {\bf report} $\no$\\

\>\> {\bf else}\\ 

\>\>\> {\bf report} $\yes$\\

 \>\> $prev\_read_i \leftarrow curr\_read_i$\\
 
 \>\> $prev\_incs_i \leftarrow curr\_incs_i$   

\end{tabbing}
\end{minipage}
  }
\caption{Weakly deciding $WEC\_COUNT$.}
\label{fig-algo-wec-count}
}
\end{figure}

Let $E$ be any execution of $V$. We have two cases:

\begin{itemize}

\item $x(E) \in WEC\_COUNT$. Eventually every process always picks $read$ operations in Line~1,
and hence eventually all values in array $INCS$ stabilize. This implies that $curr\_inc_i$ of each process $p_i$ stabilizes too.
Since we assume fair executions of $V$, every invocation to $inc$
in $x(E)$ eventually is reflected in $INCS$. 
Moreover, since $x(E) \in WEC\_COUNT$, $read$ operations monotonically increase returned values,
and each is at least the the number of previous $inc$ operations of the process.
Thus, $flag_i$ of each process is never set to $\true$.
These observation imply that, for all processes, the first two clauses in Line~\ref{EC6} are never satisfied in the execution,
and eventually the third clause is never satisfied.
Thus, every process reports $\no$ only finitely many times in $E$.

\item $x(E) \notin WEC\_COUNT$. First observe that if $x(E)$ has a $read$ operation
of a process $p_i$ that does not satisfy one of the first two properties of the definition of the weakly-eventual consistent counter,
then this process eventually sets its variable $flag_i$ to $\true$, and hence the process reports $\no$ infinitely many times in $E$.
And if $x(E)$ has infinitely many $inc$ operations, then the values in $INCS$ never stabilize, 
and thus the third clause in Line~\ref{EC6} is satisfied infinitely many
times for all processes, which implies that every process reports $\no$ infinitely many times in~$E$.
The case that remains to be considered is that $x(E)$ satisfies the first two properties 
of the definition of the weakly-eventual consistent counter, and it has finitely many $inc$ operations.
Since, $x(E) \notin WEC\_COUNT$, it must be that every infinite suffix of $x(E)$ with only $read$ operations,
has a $read$ operation that returns a value that is distinct from the number of $inc$ operations in $E$.
Using a similar reasoning as above, it can be argued that eventually the values in $INCS$ stabilize
and every $inc$ is reflected in $INCS$. 
Let $S$ denote the number of $inc$ operations in $E$.
The observations we have made imply that in $E$
there are infinitely many read operation that return a value distinct from $S$,
and hence there is at least one process that reports $\no$ infinitely many times in $E$.
\end{itemize}

From the arguments so far, we have that 

\begin{itemize}
\item[]
$x(E) \in L \implies \forall p, NO(E,p) < \infty,$ 
\item[]
$x(E) \notin L \implies \exists p, NO(E,p) = \infty.$
\end{itemize}

Thus, $V$ weakly-all decides $WEC\_COUNT$, hence $WEC\_COUNT \in \wad$.
By Lemma~\ref{lemma:basic-wad}, $V$ can be transformed into 
an algorithm that weakly decides $WEC\_COUNT$, from which follows that $WEC\_COUNT \in \wde$.
\end{proof}

From the previous results, we obtain the following separation result:

\begin{theorem}
\label{theo:separation-sd-wd}
$\sd \subset \wde$.
\end{theorem}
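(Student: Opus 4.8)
The plan is to establish the two directions of strict containment separately. For the containment $\sd \subseteq \wde$, I would simply invoke Theorem~\ref{theo:contentions}, which already gives $\sd \subseteq \wad$, together with the equivalence $\wad = \wde$ coming from the alternative characterization of $\wde$ after Theorem~\ref{theo:contentions}. So the inclusion is immediate and requires no new work; I would state it in one sentence.

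The substance of the proof is the strictness, i.e.\ exhibiting a language in $\wde \setminus \sd$. The natural witness is $WEC\_COUNT$: by Lemma~\ref{lemma:wd-evc} we have $WEC\_COUNT \in \wde$, and by Lemma~\ref{lemma:no-sd-evc} we have $WEC\_COUNT \notin \sd$. Putting these together with the inclusion gives $\sd \subsetneq \wde$. So the whole proof is essentially a two-line assembly of three already-proved facts: Theorem~\ref{theo:contentions} (plus the $\wad=\wde$ renaming) for $\subseteq$, and Lemmas~\ref{lemma:wd-evc} and~\ref{lemma:no-sd-evc} for the separating example.

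Since all the heavy lifting is done in the cited lemmas, there is no real obstacle here; the only thing to be careful about is making sure the quantifier conventions line up — in particular that the version of $\wde$ used (the "weakly decides" formulation with $x(E)\in L \implies \forall p\,\no(E,p)<\infty$ and $x(E)\notin L \implies \forall p\,\no(E,p)=\infty$) is indeed what Lemma~\ref{lemma:wd-evc} delivers after the transformation of Lemma~\ref{lemma:basic-wad}, which it does. I would write:

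\begin{proof}
By Theorem~\ref{theo:contentions} and the equivalence of $\wad$, $\wod$ and $\wde$, we have $\sd \subseteq \wde$. For strictness, Lemma~\ref{lemma:wd-evc} shows $WEC\_COUNT \in \wde$, while Lemma~\ref{lemma:no-sd-evc} shows $WEC\_COUNT \notin \sd$. Hence $\sd \subsetneq \wde$.
\end{proof}
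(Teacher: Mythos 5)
Your proposal is correct and matches the paper's own (implicit) argument exactly: the paper derives this theorem "from the previous results," i.e., the inclusion $\sd \subseteq \wde$ from Theorem~\ref{theo:contentions}, and strictness from $WEC\_COUNT$, which lies in $\wde$ by Lemma~\ref{lemma:wd-evc} but not in $\sd$ by Lemma~\ref{lemma:no-sd-evc}. Your attention to the quantifier form of $\wde$ delivered by Lemma~\ref{lemma:basic-wad} is a fine point but introduces nothing beyond the paper's route.
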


\subsection{Characterization of asynchronous decidability}
\label{sec:characterization}

We now present one of our main results,
a characterization of the languages that can be decided against $\adv$. 
The characterization considers a \emph{generic} decidability notion defined through a \emph{decidability predicate} $\sf P$.
The predicate describes a property that
reported values satisfy in any (fair failure-free) execution $E$ whose input $x(E)$ is in the language that is decided.
Remarkably, and differently from {the predicates in the definitions of $\sd$ and $\wde$},
predicate $\sf P$ might involve more than only $\yes$ and $\no$ report values (e.g., $\yes$, $\no$ and $\maybe$);
furthermore, $\sf P$ might allow infinitely many distinct report values.
Decidability with respect to $\sf P$ is stated as follows:

\begin{definition}[$\sf P$-decidability]
Given a decidability predicate $\sf P$, a language $L$ is $\sf P$-\emph{decidable} if and only if there exists an algorithm $V$
such that in every execution $E$, $x(E) \in L \Longleftrightarrow {\sf P}(E) = \true$.
\end{definition}

\begin{definition}
Let $x_1 \shuffle \hdots \shuffle x_m$ denote
the \emph{shuffle} of words $x_1, \hdots, x_m$, namely, the set with all interleavings of $x_1, \hdots, x_m$.
\end{definition}

\begin{definition}[Real-Time Oblivious Languages]
\label{def:rt-oblivious}
A language $L$ is \emph{real-time oblivious} if for every $\alpha \beta \in L$ with $\alpha$ finite,
$\alpha' \beta \in L$, for every $\alpha' \in \alpha|1 \shuffle \hdots \shuffle \alpha|n$.
\end{definition}

Intuitively, a real-time oblivious language describes a distributed service where the sequence of responses of a process
\emph{do not} depend of previous invocations and responses of other processes, because 
its responses remain correct in all possible interleavings.
It is not difficult to verify that $WEC\_COUNT$ is real-time oblivious, but $SEC\_COUNT$ is not due to
the fourth property in its definition.

{The proof of Theorem~\ref{theo:rt-oblivious} below follows a strategy similar to
impossibility proofs in the literature (e.g.,~\cite{FLP85}), 
where it is constructed a sequence of executions such that at least one process 
does not distinguish between every pair of consecutive executions in the sequence,
with the aim to argue the decisions in the first and last executions are somehow linked. 
The proof exploits indistinguishability implied by real-time order of events.
Roughly speaking, it constructs a sequence of executions $E_0, E_1, E_2, \hdots, E_{2x}$
such that, for each $0 \leq k \leq x-1$, 
(1) $E_k$ and $E_{k+1}$ are indistinguishable to all processes, 
but inputs in the executions are the different (due to real-time order unaccessible to the processes),
hence the report values in both executions are the same, and 
(2) $E_{k+1}$ and $E_{k+2}$ might be distinguishable to some processes,
but inputs in the executions are the same, hence in both executions
the decidability predicate $\sf P$ either holds or does not hold.
In this way, it can be concluded that $x(E_0) \in L \iff x(E_{2x}) \in L$.
}

\begin{theorem}
\label{theo:rt-oblivious}
For every decidability predicate $\sf P$, 
$L$ is $\sf P$-decidable $\implies L \hbox{ is real-time oblivious}$.
\end{theorem}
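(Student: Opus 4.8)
The plan is to prove the contrapositive: if $L$ is \emph{not} real-time oblivious, then $L$ is not $\sf P$-decidable for any decidability predicate $\sf P$. So suppose $L$ is not real-time oblivious. Then there is a witness: a word $\alpha\beta \in L$ with $\alpha$ finite, together with an interleaving $\alpha' \in \alpha|1 \shuffle \cdots \shuffle \alpha|n$ such that $\alpha'\beta \notin L$. The point is that $\alpha$ and $\alpha'$ have exactly the same local projections for every process, i.e. $\alpha|i = \alpha'|i$ for all $i$; they differ only in the real-time interleaving of events belonging to different processes. I would then assume toward a contradiction that some algorithm $V$ witnesses $\sf P$-decidability of $L$, and build two executions $E$ and $E'$ of $V$ with $x(E) = \alpha\beta$ and $x(E') = \alpha'\beta$, arranged so that $E \equiv E'$ (indistinguishable to every process). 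Since $\sf P$ is a predicate on the sequences of local states of the processes, $E \equiv E'$ forces ${\sf P}(E) = {\sf P}(E')$; but $x(E) \in L$ and $x(E') \notin L$, contradicting $\sf P$-decidability.

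The core construction is the sequence of executions sketched in the remark preceding the theorem. I would order $\alpha'$ and $\alpha$ so that one is obtained from the other by a sequence of adjacent transpositions of symbols belonging to distinct processes (a standard bubble-sort-style argument, since both have identical projections). This gives a chain of finite words $\alpha = \gamma_0, \gamma_1, \dots, \gamma_m = \alpha'$ where consecutive $\gamma_k, \gamma_{k+1}$ differ by swapping two adjacent symbols of different processes. For each $\gamma_k$ I build an execution $E_k$ of $V$ with input $\gamma_k\beta$ using Claim~\ref{claim:execs} — in fact, as in the proof of Claim~\ref{claim:execs}, I can take each $E_k$ to run the processes sequentially, executing Lines~\ref{Gen1}--\ref{Gen3} atomically for an invocation symbol and Lines~\ref{Gen4}--\ref{Gen6} atomically for a response symbol, in the order the symbols appear in $\gamma_k\beta$. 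The key indistinguishability observation: swapping two adjacent symbols of \emph{different} processes $p_i$ and $p_j$ in the input does not change, for either $p_i$ or $p_j$ (or anyone else), the sequence of local states, because the intervening communication steps (Lines~\ref{Gen2},~\ref{Gen5}) that each process performs through the shared memory $M$ are unaffected — the swap is purely a real-time reordering of events that are local to the processes involved, exactly the phenomenon exploited in Lemma~\ref{lemma:no-wd-lin-sc}. Hence $E_k \equiv E_{k+1}$ for every $k$, so $E_0 \equiv E_m$ by transitivity of $\equiv$, i.e. $E \equiv E'$.

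Actually, a small subtlety I would need to handle: after the finite prefix $\gamma_k$ comes the common (possibly infinite) suffix $\beta$, and I need $\beta$ to be schedulable in each $E_k$ in a way that preserves fairness and keeps the executions failure-free and indistinguishable. Since $\gamma_k|i = \gamma_k'|i$ for all $i$, the state of every process at the end of its portion of $\gamma_k$ is the same across all $k$, and the state of shared memory $M$ after $\gamma_k$ — which depends only on the \emph{order of shared-memory steps}, not the real-time order of the local Line~\ref{Gen3}/\ref{Gen4} events — can be made identical by choosing the same interleaving of Line~\ref{Gen2} and Line~\ref{Gen5} blocks in each $E_k$; I would simply be careful to schedule those shared blocks in a fixed order consistent across the whole chain. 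Then $\beta$ is appended identically in all $E_k$, and fairness holds because $\gamma_k\beta$ is a well-formed $\omega$-word (or, if $\beta$ is empty and $\alpha$ is already infinite, the argument degenerates and $\alpha$ finite fails — so I may note that when $\beta$ is empty the condition is vacuous anyway, or restrict to the genuinely interesting case).

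The main obstacle I anticipate is making the ``swap of adjacent symbols of distinct processes leaves all local state sequences unchanged'' claim airtight in the presence of the shared-memory blocks in Lines~\ref{Gen2} and~\ref{Gen5}. The resolution is that in the sequential schedules I am using, a process's Line~\ref{Gen3} (send) and Line~\ref{Gen4} (receive) events are purely local — they do not touch $M$ — so reordering them relative to another process's send/receive events changes only the real-time order of the input word, not which shared-memory operations occur or in what order; the processes literally cannot tell. Everything else (which shared-memory operations run, in which order, returning which values) is held fixed by construction. Once that is pinned down, the chain $E_0 \equiv E_1 \equiv \cdots \equiv E_m$ closes, ${\sf P}(E_0) = {\sf P}(E_m)$ since $\sf P$ depends only on local state sequences, and the contradiction $\alpha\beta \in L \not\ni \alpha'\beta$ finishes the proof.
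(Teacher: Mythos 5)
There is a genuine gap, and it is exactly at the step you flagged as the ``main obstacle.'' Your chain argument needs every pair of consecutive executions $E_k, E_{k+1}$ to be indistinguishable, and your two attempts to secure this both fail. If you use the sequential schedules of Claim~\ref{claim:execs} literally, then each symbol is executed together with its whole block (Lines~\ref{Gen1}--\ref{Gen3} or~\ref{Gen4}--\ref{Gen6}), so transposing two adjacent symbols of different processes also transposes their shared-memory accesses in Lines~\ref{Gen2} and~\ref{Gen5}; the processes \emph{can} observe that reordering, so $E_k \equiv E_{k+1}$ need not hold. Your patch --- keep one fixed schedule of all shared-memory blocks across the whole chain and move only the purely-local send/receive events --- requires a single block schedule that realizes every word $\gamma_k\beta$, in particular both $\alpha\beta$ and $\alpha'\beta$. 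Such a schedule need not exist: each send/receive of $p_i$'s $r$-th operation must lie in the window between $p_i$'s $r$-th Line-\ref{Gen2} block and its $r$-th Line-\ref{Gen5} block, and consecutive windows of the same process are separated by its own blocks. Take $\alpha$ to be two complete operations of $p_1$ followed by two complete operations of $p_2$, and $\alpha'$ the shuffle with $p_2$'s two operations first: realizing $\alpha$ forces $p_1$'s second window to start before $p_2$'s first window ends, realizing $\alpha'$ forces $p_2$'s second window to start before $p_1$'s first window ends, and together with the per-process window order this is a cycle. Worse, the overall goal is unattainable, not just your route to it: for a full-information algorithm (every block writes all it knows and snapshots), identical local-state sequences force an identical relative order of the blocks, so for the pair $\alpha,\alpha'$ above there is \emph{no} pair of indistinguishable executions with inputs $\alpha\beta$ and $\alpha'\beta$. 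Hence the plan ``find $E \equiv E'$ with $x(E)=\alpha\beta$, $x(E')=\alpha'\beta$ and compare ${\sf P}$'' cannot work in general.

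The paper's proof gets around precisely this by \emph{not} keeping the whole chain indistinguishable and by invoking $\sf P$-decidability at every intermediate step, not only at the endpoints. Each elementary step is a two-phase move: first all intermediate steps of the affected process $p_i$ are pushed back before the event $v$ --- this may change local states and reported values, but it leaves the input word unchanged, so since that input is (inductively) in $L$, the predicate ${\sf P}$ still holds in the new execution $F$; only then is the purely-local event $v'$ moved before $v$, which is an indistinguishable change ($F \equiv E''$, same reports, so ${\sf P}(E'')$ holds) that alters the input word, and $\sf P$-decidability converts ${\sf P}(E'')$ back into membership of the new input in $L$. Iterating this transfers membership from $\alpha\beta$ to $\alpha'\beta$ directly, one transposition at a time, with the decidability hypothesis re-anchoring the argument after every distinguishable rearrangement. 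That alternation (same input/different states, then same states/different input), and the repeated use of the equivalence $x(E) \in L \Leftrightarrow {\sf P}(E)$ along the chain, is the idea your proposal is missing; without it the indistinguishability you rely on simply is not available.
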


\begin{proof}
Let $V$ be an algorithm that $\sf P$-decides $L$. 
Consider any $x = \alpha \beta \in L$ with $\alpha$ finite. 
By Claim~\ref{claim:execs}, there is an execution $E$ of $V$ such that $x = x(E)$.
Consider any  $\alpha' \in \alpha|1 \shuffle \hdots \shuffle \alpha|n$.
We will argue that there is an execution $E'$ of $V$ such that $x(E') = \alpha' \beta \in L$.
Below, $\ell(y,y')$ denotes the longest common prefix of $y$ and $y'$.

If $|\ell(\alpha, \alpha')|=|\alpha|$, then $\alpha = \alpha'$, and hence $\alpha' \beta \in L$.
Thus, suppose $|\ell(\alpha, \alpha')| < |\alpha|$. The proof of the theorem is based on the next claim:

\begin{claim}
If $|\ell(\alpha, \alpha')| < |\alpha|$, there is an execution $E''$ of $V$ such that $x(E'') \in L$
and $x(E'') = \alpha'' \beta$ with $\alpha'' \in \alpha|1 \shuffle \hdots \shuffle \alpha|n$ such that 
$|\ell(\alpha'', \alpha')| \geq |\ell(\alpha, \alpha')|+1$.
\end{claim}

\begin{proof}[Proof of claim]
Let $\sigma = \ell(\alpha, \alpha')$.
We have $\alpha = \sigma v \tau$ and $\alpha' = \sigma v' \tau'$ for some
symbols $v,v'$ of $\Sigma$ and words $\tau, \tau'$ over $\Sigma$, such that $v \neq v'$ and $\tau \neq \tau'$.
For sake of simplicity, let us assume that $v$ and $v'$ appear only once in $\alpha \beta$ 
and $\alpha' \beta$, respectively.~\footnote{Alternatively, we can mark the symbols of a string with their positions 
in it in order to make them unique.}
Observe that $\alpha' \in \sigma|1, \hdots, \sigma|n$ implies 
$v'$ appears somewhere in $\tau$ and $v$ appears somewhere in $\tau'$.
Let $v \in \Sigma_j$ and $v' \in \Sigma_i$.
We argue that $i \neq j$: we reach a contradiction if $i = j$, 
because then $v$ and $v'$ appear in opposite orders in $\alpha$ and $\alpha'$,
which implies that $\alpha|i \neq \alpha'|i$, contradicting $\alpha' \in \sigma|1, \hdots, \sigma|n$.
Using a similar reasoning, we argue that in the substring of $\alpha$ between $v$ and $v'$, 
there is no $u \in \Sigma_i$: we reach a contradiction if there is such symbols $u$, because
then $u$ appears somewhere in $\tau'$, which contradicts $\alpha' \in \sigma|1, \hdots, \sigma|n$,
as $u$ and $v'$ appear in opposite orders in $\alpha$ and $\alpha'$. 

We now reason about the execution $E$. 
Let us consider the events in $E$ that send or receive symbols $v$ and $v'$ 
(corresponding to Lines~\ref{Gen3} or~\ref{Gen4} in the generic algorithm in Figure~\ref{fig-generic-algo}). 
For simplicity, those events will be denoted $v$ and $v'$. 
The discussion above implies that in $E$:

\begin{enumerate}
\item event $v$ appears before event $v'$,

\item $v$ and $v'$ are events of $p_j$ and $p_i$, respectively, 

\item $p_i \neq p_j$, and

\item there is no other send/receive event of $p_i$ between~$v$ and~$v'$.
\end{enumerate}

Let $H$ be the subsequence of $E$ that starts at $v$ and ends at $v'$.
Observe that $H$ might have steps of $p_i$ corresponding to local or shared memory computations 
(steps that are part of blocks of code in Lines~\ref{Gen1},~\ref{Gen2},~\ref{Gen5} or~\ref{Gen6}).
Consider the execution $F$ obtained by ``moving back'' all those $p_i$'s steps right before $v$.
Observe that asynchrony allows such modification of $E$. 
Moreover, note that indeed $F$ is an execution of $V$, but the state of processes after $v$ 
might differ, due to the fact that some shared memory computations of $p_i$ that appear after $v$ in $E$,
now appear before $v$ in $F$. 
Thus, the values reported in $F$ might be different than the values reported in $E$.
However, the relative order of send/receive events remains the same, thus $x(F) = x(E)$.
Since $x(E) \in L$ and $V$ is assumed to $\sf P$-decide $L$,
it must be that ${\sf P}(F)$ is true.

To conclude the proof, we now modify $F$. Observe that in $F$ there are no steps of $p_i$ between 
$v$ and $v'$. Due to asynchrony, we can move back $v'$ right before $v$. Let $E''$ denote the modified execution.
Note that $E''$ is indeed an execution of $V$ because only a local step of $p_i$ was modified, 
whose exact time of occurrence is immaterial for all processes, including $p_i$ itself. 
Furthermore, all processes pass through the same sequence of states in both executions, 
namely, $F \equiv E'' $, and hence processes report the same values.
As already stated, ${\sf P}(F)$ is true, which implies that ${\sf P}(E'')$ is true as well.
However, $x(E'') \neq x(F)$, as $v$ and $v'$ appear in opposite orders in $x(E'')$ and $x(F)$. 
Since $V$ is assumed to $\sf P$-decide $L$, it must be that $x(E'') \in L$.
Now, due to the single modification made to obtain $E''$ from $F$, 
there must exist $\alpha'' \in \alpha|1 \shuffle \hdots \shuffle \alpha|n$
that is prefix of $x(E'')$. Note that $\sigma v'$ is prefix of $\alpha''$,
recalling that $\sigma = \ell(\alpha, \alpha')$.
Then, $\sigma v'$ is prefix of $\alpha'$ and $\alpha''$,
which implies that $|\ell(\alpha'', \alpha')| \geq |\ell(\alpha, \alpha')|+1$.
Therefore, $E''$ is an execution of $V$ that has the desired properties. 
The claim follows.
\end{proof}

To complete the proof, we repeatedly apply the previous claim a finite number of times to obtain a sequence of executions 
whose respective inputs belong to $L$, until we reach one whose input is precisely $\alpha' \beta$.
The theorem follows.
\end{proof}

It is possible to check that none of the languages $LIN\_REG, SC\_REG$ and $SEC\_COUNT$ 
is real-time oblivious, hence we can use Theorem~\ref{theo:rt-oblivious}, {instantiated with 
${\sf P}$ equal to the predicate in the definition of $\sd$ or $\wde$}, to reprove 
the impossibility results in Lemmas~\ref{lemma:no-wd-lin-sc} and~\ref{lemma:no-sd-evc} and Corollary~\ref{coro:no-sd-lin-sc}, 
except for the case of $WEC\_COUNT$.
Furthermore, it is easy to see that 
$LIN\_LED, SC\_LED$ and $EC\_LED$ are not real-time oblivious (see Appendix~\ref{app:languages}),
and thus they are neither strongly decidable nor weakly decidable, by Theorem~\ref{theo:rt-oblivious}:

\begin{corollary}
\label{coro:no-sd-lin-sc-2}
$LIN\_LED, SC\_LED, EC\_LED \notin \sd$
\end{corollary}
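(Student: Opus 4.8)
The plan is to apply Theorem~\ref{theo:rt-oblivious} (instantiated with $\sf P$ equal to the predicate in the definition of $\sd$) to each of the three languages, so the only thing that actually needs to be established here is that none of $LIN\_LED$, $SC\_LED$, $EC\_LED$ is real-time oblivious. I would treat the three cases essentially uniformly: in each one I exhibit a word $\alpha\beta \in L$ with $\alpha$ finite and an interleaving $\alpha' \in \alpha|1 \shuffle \cdots \shuffle \alpha|n$ such that $\alpha'\beta \notin L$, witnessing the failure of Definition~\ref{def:rt-oblivious}. Since $L \in \sd$ would, via Theorem~\ref{theo:rt-oblivious}, force $L$ to be real-time oblivious, producing such a witness is enough.

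For the concrete witnesses I would reuse the register-style argument from the proof of Lemma~\ref{lemma:no-wd-lin-sc}, transported to the ledger alphabet. Take $n=2$ (and note the construction lifts to larger $n$). For $LIN\_LED$: let $\alpha$ be the finite history in which $p_1$ performs $append(a)$ to completion and then $p_2$ performs $get()$ returning the one-element string $a$, i.e.\ $\alpha = {<^a_1}\,{>_1}\,{<_2}\,{>^a_2}$; and let $\beta$ be any fair completion keeping the history linearizable (e.g.\ alternating $get()$ operations returning $a$). Then $\alpha\beta \in LIN\_LED$, but the interleaving $\alpha' = {<_2}\,{>^a_2}\,{<^a_1}\,{>_1}$ lies in $\alpha|1 \shuffle \alpha|2$ (each local word is unchanged) and has $p_2$'s $get()$ return $a$ before $append(a)$ was even invoked, so no prefix ordering can make it valid for the sequential ledger while respecting real-time: $\alpha'\beta \notin LIN\_LED$. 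For $SC\_LED$ the same $\alpha,\alpha',\beta$ work: sequential consistency still requires a valid sequential witness respecting process order, and in $\alpha'$ the completed $get()$ of $p_2$ returning $a$ forces $append(a)$ to be ordered before it, but $append(a)$ is pending (only invoked, no response) in $\alpha'$ — actually in $\alpha'$ it is complete, so instead I would use $\alpha' = {<_2}\,{>^a_2}\,{<^a_1}\,{>_1}$ and observe that any sequential rearrangement valid for the ledger that contains both operations must place $append(a)$ first, but then $get()$ would have to return $a$ — which it does — wait, that is consistent; so for $SC\_LED$ I instead pick $\alpha$ in which $p_2$'s $get()$ returns the \emph{empty} string before $p_1$'s $append(a)$, together with $p_1$ later doing a $get()$ returning $a$: the point being that swapping the local blocks produces a history where $p_1$'s own later $get()$ contradicts the forced global order. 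I would spell this out carefully in the final version.

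For $EC\_LED$ the argument is analogous but must additionally defeat the eventual-delivery clause: choose $\alpha,\alpha'$ so that in $\alpha'$ some $append$ is forced after a $get$ that has already been \emph{completed} and returned a string missing that record, and then $\beta$ is chosen so that $\alpha$ itself has the violating structure — again via the "$get$ returns $a$ before $append(a)$ is invoked" pattern, which already breaks clause (1) of the definition of eventual consistency since no completion/permutation of $\alpha'$ gives a valid sequential ledger history. The paper defers the detailed check to Appendix~\ref{app:languages}, so in the body I would simply state the three witnesses and cite that appendix, then conclude by Theorem~\ref{theo:rt-oblivious} that $LIN\_LED, SC\_LED, EC\_LED \notin \sd$.

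The main obstacle — really the only subtlety — is getting the $SC\_LED$ witness exactly right: sequential consistency is weaker than linearizability (it drops the real-time constraint), so the naive "read-before-write" swap that kills linearizability does \emph{not} automatically kill sequential consistency, because the sequential witness is free to reorder independent operations. The fix is to build $\alpha$ so that a \emph{single process} observes the inconsistency across two of its own operations (so process order, which sequential consistency \emph{does} respect, is what gets violated after the shuffle), rather than relying on cross-process real-time order. I would make sure the chosen $\alpha$ contains, for $p_1$, an $append$ followed by a $get$, and for $p_2$ a $get$, arranged so that the shuffle forces $p_2$'s value and $p_1$'s later value to be mutually contradictory under any process-order-respecting sequentialization. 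Everything else is a direct instantiation of the already-proved Theorem~\ref{theo:rt-oblivious}.
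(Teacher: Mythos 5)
Your overall route is the same as the paper's: invoke Theorem~\ref{theo:rt-oblivious} (with $\sf P$ the predicate of $\sd$) and exhibit, for each language, a word $\alpha\beta\in L$ and a shuffle $\alpha'$ of $\alpha$ with $\alpha'\beta\notin L$; the paper does exactly this, deferring the witnesses to Appendix~\ref{app:languages}. Your $LIN\_LED$ witness is fine. The gap is in the $SC\_LED$ (and, in its justification, the $EC\_LED$) case. You talked yourself out of the naive ``get before append'' witness for sequential consistency, but it actually works, and for the very reason the paper's appendix uses: $SC\_LED$ is defined by requiring \emph{every finite prefix} of the word to be sequentially consistent. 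In $\alpha' = {<_2}\,{>^a_2}\,{<^a_1}\,{>_1}$ the proper prefix ${<_2}\,{>^a_2}$ contains a \emph{completed} $get$ returning $a$ while no $append(a)$ has even been invoked in that prefix; sequential consistency may complete pending operations and drop the rest, but it cannot invent an $append$, so that prefix is not SC and hence $\alpha'\beta\notin SC\_LED$. The same prefix also violates clause (1) of eventual consistency, which is what really shows $\alpha'\beta\notin EC\_LED$ --- your stated reason (``no completion/permutation of $\alpha'$ gives a valid sequential ledger history'') is false for the full $\alpha'$, since the permutation $append(a)$, $get\!\to\!a$ is perfectly valid; only the proper prefix fails.

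Moreover, the alternative fix you sketch for $SC\_LED$ (arranging things so that ``process order gets violated after the shuffle'') cannot work as described: a shuffle in the sense of Definition~\ref{def:rt-oblivious} leaves every local word $\alpha|i$ unchanged, so process order is automatically preserved, and for a finite history in which all operations are complete, sequential consistency depends only on the local projections and is therefore invariant under shuffles. The only lever available is the prefix mechanism above (which operations are complete versus not yet invoked in some finite prefix), which is precisely how the paper's Appendix~\ref{app:languages} witness --- appends by $p_1,\dots,p_n$ followed by a $get$ by $p_n$ returning $1\cdot 2\cdots n$, with $p_1$'s append shuffled to the end --- simultaneously defeats $LIN\_LED$, $SC\_LED$ and $EC\_LED$. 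With that observation your original uniform witness goes through for all three languages and the corollary follows from Theorem~\ref{theo:rt-oblivious}; without it, your $SC\_LED$ case is left unproved and your $EC\_LED$ justification is incorrect as written.
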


\begin{corollary}
\label{coro:no-wd-lin-sc}
$LIN\_LED, SC\_LED, EC\_LED \notin \wde$
\end{corollary}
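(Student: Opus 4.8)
\textbf{Proof proposal for Corollary~\ref{coro:no-wd-lin-sc}.}

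The plan is to derive this corollary directly from Theorem~\ref{theo:rt-oblivious}, instantiated with $\sf P$ equal to the predicate defining weak decidability $\wde$. By the contrapositive of that theorem, it suffices to exhibit, for each of the three languages $LIN\_LED$, $SC\_LED$, and $EC\_LED$, a witness that the language is \emph{not} real-time oblivious; the impossibility of weak decidability then follows immediately. Thus the entire task reduces to constructing, for each language, a word $\alpha\beta$ in the language with $\alpha$ finite, together with an interleaving $\alpha' \in \alpha|1 \shuffle \cdots \shuffle \alpha|n$ such that $\alpha'\beta \notin L$.

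First I would handle $LIN\_LED$. Take two processes $p_1, p_2$. Let $\alpha$ consist of $p_1$ completing $append(r)$ for some record $r$ (symbols $<^r_1 \, >_1$), followed by $p_2$ completing $get()$ returning the one-element string $r$ (symbols $<_2 \, >^r_2$); then let $\beta$ be any fair well-formed suffix in which all subsequent $get$ operations return strings extending $r$ (for instance, $p_1$ and $p_2$ alternately invoke $get()$ returning $r$ forever). Every finite prefix of $\alpha\beta$ is linearizable for the ledger, so $\alpha\beta \in LIN\_LED$. Now reorder $\alpha$ by swapping the two operations' local projections so that $p_2$'s $get$ precedes $p_1$'s $append$ in real time: $\alpha' = <_2 \, >^r_2 \, <^r_1 \, >_1$, which lies in $\alpha|1 \shuffle \alpha|2$ since $\alpha|1 = <^r_1\,>_1$ and $\alpha|2 = <_2\,>^r_2$ are untouched. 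In $\alpha'$, the $get$ returning $r$ completes entirely before the $append(r)$ is even invoked, so no sequential order respecting real time can have $append(r)$ before the $get$ yet have the $get$ return $r$; hence $\alpha'\beta \notin LIN\_LED$. The same word-pair witnesses non-real-time-obliviousness of $SC\_LED$: I would check that in $\alpha'$ the $get$ returning $r$ must be placed before $append(r)$ by process-order (both operations are non-concurrent and complete), so no valid sequential witness exists, giving $\alpha'\beta \notin SC\_LED$ as well. For $EC\_LED$, I would use the same $\alpha$ but with $\beta$ chosen so that the eventual-delivery clause is satisfied along $\alpha\beta$ (every $get$ eventually contains $r$), and then observe that in $\alpha'$ the completed $get$ returning $r$ with $append(r)$ invoked strictly later already fails clause (1) of eventual consistency applied to the finite prefix $\alpha'$ itself — there is no way to complete operations and permute them into a valid sequential ledger history when a completed $get$ returns a record whose $append$ has not yet been invoked. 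These constructions are spelled out in Appendix~\ref{app:languages}, as noted in the text preceding the corollary.

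Having established that $LIN\_LED$, $SC\_LED$, and $EC\_LED$ are each not real-time oblivious, Theorem~\ref{theo:rt-oblivious} with $\sf P$ the weak-decidability predicate yields that none of them is in $\wde$, which is exactly the statement of the corollary. The main obstacle I anticipate is purely in the $EC\_LED$ case: its definition quantifies over finite prefixes with an eventual-delivery requirement on the infinite tail, so I must be careful that the chosen $\beta$ genuinely keeps $\alpha\beta \in EC\_LED$ (fairness plus every $get$ eventually returning a string containing $r$) while the reordered prefix $\alpha'$ — which is itself a finite prefix of $\alpha'\beta$ — already violates the ``valid permutation exists'' clause (1); the rest is routine bookkeeping about shuffles and projections, which Definition~\ref{def:rt-oblivious} makes mechanical.
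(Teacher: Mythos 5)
Your overall route is the same as the paper's: invoke Theorem~\ref{theo:rt-oblivious} with $\sf P$ the weak-decidability predicate and reduce everything to exhibiting non-real-time-oblivious witnesses (the paper does exactly this, with an $n$-process witness in Appendix~\ref{app:languages}). The $LIN\_LED$ case of your witness is argued correctly. However, your verifications for $SC\_LED$ and $EC\_LED$ contain a genuine error. Sequential consistency does \emph{not} preserve real-time order across processes: in $\alpha' = \; <_2 \, >^r_2 \, <^r_1 \, >_1$, the $get$ of $p_2$ and the $append(r)$ of $p_1$ belong to different processes, so process-order places no constraint between them, and the sequential witness $append(r)$ followed by $get()\!\to\!r$ is perfectly valid. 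Hence $\alpha'$ \emph{is} sequentially consistent, contrary to your claim that ``the $get$ returning $r$ must be placed before $append(r)$ by process-order.'' The same issue affects your $EC\_LED$ argument: clause~(1) asks only for \emph{some permutation} of the operations occurring in the prefix, and $append(r)$ does occur in $\alpha'$ (merely later in real time), so $\alpha'$ satisfies clause~(1) as well. As written, your argument shows nothing for these two languages.

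The construction can be repaired, and the repair is exactly the mechanism the paper's Appendix~\ref{app:languages} witness is built around: since $SC\_LED$ and $EC\_LED$ allow arbitrary reordering of the operations \emph{inside} a prefix, the only way a shuffle can expel a word from these languages is to produce a finite prefix in which a $get$ returns a record whose $append$ is absent from that prefix altogether. With your $\alpha'$, this happens for the strictly shorter prefix $<_2 \, >^r_2$ of $\alpha'\beta$: it contains a completed $get$ returning $r$ but no $append$ at all, so no completion/permutation yields a valid sequential ledger history starting from the empty state; this prefix is therefore neither sequentially consistent nor does it satisfy clause~(1), and since both languages quantify over \emph{every} finite prefix, $\alpha'\beta \notin SC\_LED$ and $\alpha'\beta \notin EC\_LED$. (The paper achieves the same effect by moving $p_1$'s $append$ to the end of the shuffled prefix, so that the sub-prefix ending at the $get$ lacks it.) With that substitution your proof goes through; without it, the $SC\_LED$ and $EC\_LED$ cases rest on a misreading of what sequential consistency and eventual consistency constrain.
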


Observe that Lemmas~\ref{lemma:no-sd-evc} and~\ref{lemma:wd-evc} do not contradict each other about $WEC\_COUNT$:
Lemma~\ref{lemma:wd-evc} and Theorem~\ref{theo:rt-oblivious} imply that $WEC\_COUNT$ is real-time oblivious,
but this property does not suffice to make $WEC\_COUNT$ strongly decidable, as shown in Lemma~\ref{lemma:no-sd-evc},
which is not a contradiction as Theorem~\ref{theo:rt-oblivious} is not a full characterization.

\paragraph{Relation between Theorem~\ref{theo:rt-oblivious} and~\cite{darv, opinions}.}
As discussed in Section~\ref{sec:related-work}, some models for distributed, asynchronous, crash-tolerant runtime verification assume that 
the verified system is static~\cite{opinions} or dynamic but does not change to the next 
state until a runtime verification phase completes~\cite{darv}. 
These assumptions yield strong lower bounds in those works, although they apply only to read/write algorithms.
In contrast, our setting not only assumes full asynchrony but also aims to verify properties
 that may involve real-time order constraints. 
 While the impossibility results in~\cite{darv, opinions}
  already highlight the difficulty of runtime verification, 
it was still unclear how hard this can be in general settings.  
Theorem~\ref{theo:rt-oblivious} proves that, under asynchrony, \emph{all} real-time sensitive properties 
(that is, non-real-time oblivious) are runtime \emph{unverifiable},
regardless of the number of report values, the choice of decidability predicate, or the computational 
 power of the base primitives.
This stands in sharp contrast to~\cite{darv, opinions}, where the absence of real-time constraints permits to
assign a \emph{finite} number~$k$ (\emph{alternation number}) to every property
such that at most $2k+4$ report values (\emph{opinions}) are needed to runtime verify the property,
in those restricted settings.

\section{Timed Adversaries and Predictive Decidability}
\label{sec:timed-adversaries}

Although strong decidability is arguably highly desirable, the results so far suggest that
there is little to do in this direction. 
In general, strong correctness conditions such as 
linearizability and sequential consistency are impossible (Corollaries~\ref{coro:no-sd-lin-sc} and~\ref{coro:no-sd-lin-sc-2}),
eventual versions of the counter and ledger objects are imposible too 
(Lemma~\ref{lemma:no-sd-evc} and Corollary~\ref{coro:no-sd-lin-sc-2}),
and moreover only weak distributed problems can be strongly or weakly decided (Theorem~\ref{theo:rt-oblivious}),
problems with no order constraints between operations of different processes (Definition~\ref{def:rt-oblivious}).

The root of these impossibility results is the big power 
the adversary $\adv$ has to ``mislead'' any algorithm
trying to runtime verify $\adv$'s current behavior. It turns out that such power
can be reduced considerably if $\adv$ is verified ``indirectly'', by means of an adversary $\advt$
obtained from~$\adv$, as shown recently in~\cite{podc23, rv24}. 
Actually, through this indirect verification, strong correctness conditions like
linearizability of \emph{every} object turns ``almost'' strongly decidable.
The key step in this direction is a simple
but powerful transformation that takes $\adv$ and produces a \emph{new distributed service} $\advt$
that ``wraps'' $\adv$ in simple read/write wait-free code that is executed before and after interacting with $\adv$.
The aim of this transformation is to limit the power of $\adv$
by adding information to its responses. 
This extra information, called \emph{view}, basically \emph{timestamps} responses 
of the new distributed service (adversary) $\advt$ that algorithms will  runtime verify.
In this sense, $\advt$ is a version of $\adv$ that produces \emph{timed} histories/words.

We refer the reader to~\cite{podc23} for a detailed discussion about 
the properties of~$\advt$.
In the following two sections
we just provide high-level ideas of the transformation, state its main properties,
and show the algorithm in~\cite{podc23} that almost strongly runtime verifies linearizability.

\subsection{The timed adversary $\advt$}

The transformation from $\adv$ to $\advt$ appears in Figure~\ref{fig-timed-adv}.
In $\advt$, each process $p_i$ keeps in a shared variable $M[i]$ the (unordered) multiset of invocations it has sent so far to $\adv$.
Before sending its current invocation $v_i$ (Line~\ref{AT3}), $p_i$ announces it in $M[i]$ (Line~\ref{AT2}),
and after obtaining a response $w_i$ from $\adv$ (Line~\ref{AT4}), it snapshots all entries in $M$,
storing the union of all of them in a set $view_i$, called \emph{view} (Lines~\ref{AT5}-~\ref{AT6}), 
finally returning the tuple $(w_i,view_i)$ (Line~\ref{AT7}).

\begin{figure}[ht]
\centering{ \fbox{
\begin{minipage}[t]{150mm}
\scriptsize
\renewcommand{\baselinestretch}{2.5} \resetline
\begin{tabbing}
aaaa\=aaa\=aaa\=aaa\=aaa\=aaa\=aaa\=\kill 

{\bf Shared variables in $\advt$:}\\

$~~$ $S[1,\hdots, n]:$ array of read/write registers, each initialized to $\emptyset$\\ \\

{\bf Local persistent variable of $p_i$:}\\

$~~$ $s_i \leftarrow \emptyset$\\ \\

{\bf Local algorithm for process $p_i$:}\\

 \> {\bf When} receive $v_i$ ($\in \Sigma^<_i)$ by $p_i$ {\bf do}\\

\line{AT1} \>\> $s_i \leftarrow s_i \cup \{v_i\}$\\

\line{AT2} \>\> $M[i].write(s_i)$\\

\line{AT3} \>\> Send $v_i$ to the adversary $\adv$\\

\line{AT4} \>\> Receive response symbol $w_i \in \Sigma^>_i$ from the adversary $\adv$\\

\line{AT5} \>\> $snap_i \leftarrow \snap(M)$\\

\line{AT6} \>\> $view_i \leftarrow snap_i[1] \cup \hdots \cup snap_i[n]$\\

\line{AT7} \>\> Send back $(w_i, view_i)$ to $p_i$

\end{tabbing}
\end{minipage}
  }
\caption{The timed adversary $\advt$.}
\label{fig-timed-adv}
}
\end{figure}

From now on, we consider algorithms that follow the generic structure in Figure~\ref{fig-generic-algo},
and interact in Lines~\ref{Gen3} and~\ref{Gen4} with the timed adversary $\advt$.
Thus, responses from $\advt$ are of the form $(w_i,view_i)$.
In any such algorithm $V$, the steps of $V$ and $\advt$ interleave in an execution:
when process $p_i$ sends an invocation $v_i$ to $\advt$ in Line~\ref{Gen3} of Figure~\ref{fig-generic-algo},
it then executes Lines~\ref{AT1} and~\ref{AT2} of Figure~\ref{fig-timed-adv},
and only then it sends $v_i$ to $\adv$ in Line~\ref{AT3}, and $p_i$'s computation continues
once it obtains a response $w_i$ from~$\adv$ in the next line;
when $p_i$ finally completes its code in $\advt$, it resumes its computation in $V$.

For simplicity, and without loss of generality, it is assumed that in every execution $p_i$ sends each
$v_i \in \Sigma_i$  to $\advt$ at most once (alternatively, each invocation symbol in $x(E)$ 
could be marked with its position in $x(E)$ to make it unique).

In an execution $E$ of $V$, the \emph{input} $x(E)$ is the $\omega$-word obtained by projecting
the invocations to and responses from $\advt$ in Lines~\ref{Gen3} and~\ref{Gen4}
of Figure~\ref{fig-generic-algo}, ignoring views in the responses.

Consider any execution $E$ of $V$. 
In $x(E)$, any operation $(v_i,w_i)$ of process $p_i$ has a view $view_i$ that 
$\advt$ sends back to $p_i$ together with~$w_i$ (i.e., $\advt$ sends back $(w_i, view_i)$).
By the design of $\advt$, $view_i$ contains the invocations of \emph{all} operations
 that \emph{precede} operation $(v_i,w_i)$ in $x(E)$, 
and \emph{some} of the operations that are \emph{concurrent} to $(v_i,w_i)$ (see the example in Figure~\ref{fig:execution}).
It turns out that this information in the views suffices 
for the processes to locally obtain a concurrent history 
that is not exactly $x(E)$ but is closely related to it (see Appendix~\ref{app:construction}).
Basically, the history that can be obtained from the views, denoted $x^\sim(E)$, is one in which the operations in $x(E)$ might  ``shrink''.
In the parlance of~\cite{podc23}, $x^\sim(E)$ is an \emph{sketch} of $x(E)$ as it only resembles it.
Importantly, $x^\sim(E)$ is the input of \emph{some} execution $E'$ of $V$ \emph{indistinguishable} from $E$ to \emph{all} processes, 
and hence $x^\sim(E)$ is indeed a possible behavior of~$\advt$. 
The main properties of $\advt$ and the sketch $x^\sim(E)$ obtained from the views it produces can be summarized
as follows (implied by the construction in Section~7 and  Lemma~7.4 in \cite{podc23}):

\begin{figure}[ht]
\begin{center}
\includegraphics[scale=0.275]{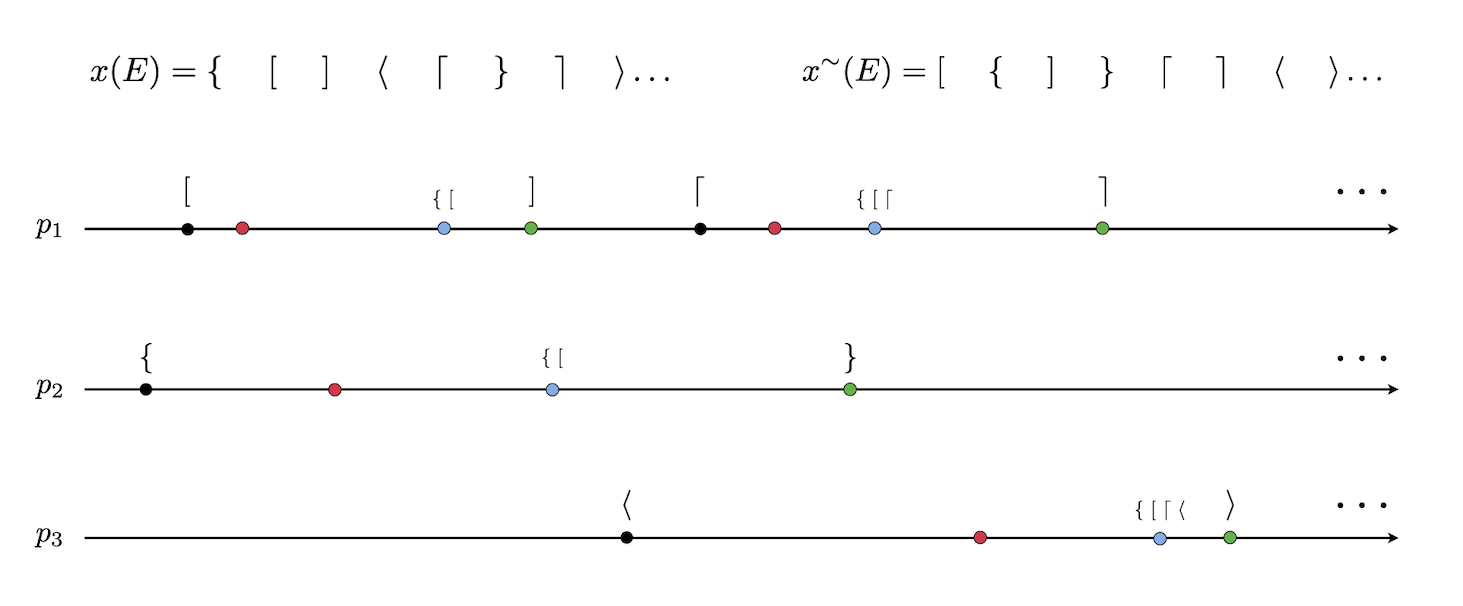}
\caption{\small The figure schematizes a prefix of an execution $E$ of a 3-process algorithm $V_O$ (Figure~\ref{fig-deciding-lin}) 
interacting with adversary $\advt$ (Figure~\ref{fig-timed-adv}).
It only shows send and receive events in $V_O$ (black and green circles, respecitely), and
write and snapshot events in $\advt$ (red and blue circles, respectively). 
Invocations sent and responses received by processes are indicated above the corresponding steps,
as well as the set of invocations obtained by snapshots; views in responses are omitted.
The history $x^\sim(E)$ constructed from the views (see Appendix~\ref{app:construction}) corresponds 
to the history that is obtained by ``moving forward'' each invocation to the next write,
and ``moving backward'' each response to the previous snapshot. 
Thus, history $x^\sim(E)$ is $x(E)$ where some operations might ``shrink''.
}
\label{fig:execution}
\end{center}
\end{figure}

\begin{theorem}
\label{theo:prev-work}
In every execution $E$ of an algorithm $V$ interacting with $\advt$:
\begin{enumerate}
\item 
\label{theo:prev-prop1}
$op \prec_{x(E)} op' \implies op \prec_{x^\sim(E)} op'$, and

\item 
\label{theo:prev-prop2}
there is an execution $E'$ of $V$ such that $E' \equiv E$ and $x(E') = x^\sim(E)$.
\end{enumerate}
\end{theorem}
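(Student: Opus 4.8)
The statement has two parts, and the natural plan is to prove them together by examining the interleaved execution of $V$ and $\advt$ described just before the theorem. Part (1) is essentially a soundness property of the timestamping scheme: a $\prec_{x(E)}$ precedence means the response of $op$ occurs before the invocation of $op'$ in $E$, and I want to show that the ``shrunk'' history $x^\sim(E)$ — obtained by moving each invocation forward to the next $\advt$-write and each response backward to the previous $\advt$-snapshot — still records this precedence. Part (2) says $x^\sim(E)$ is genuinely realizable: there is an execution $E'$ of $V$, indistinguishable from $E$ to every process, whose input is exactly $x^\sim(E)$. The plan is to cite the construction of $x^\sim(E)$ from Appendix~\ref{app:construction} and Lemma~7.4 of~\cite{podc23}, then give the short combinatorial arguments that specialize those facts to our setting.

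**Part (1).** First I would fix the two operations $op$, $op'$ with $op \prec_{x(E)} op'$, so the response step of $op$ (Line~\ref{Gen4}, i.e.\ the point where $\advt$ returns $(w,view)$ in Line~\ref{AT7}) precedes the invocation step of $op'$ (Line~\ref{Gen3}, leading into Line~\ref{AT3}) in the real-time order of $E$. In the construction of $x^\sim(E)$, the response of $op$ is relocated to the $\advt$-snapshot event of $op$ (Line~\ref{AT5}), which occurs \emph{before} $\advt$ returns in Line~\ref{AT7}, hence still before the original response of $op$ in $E$; and the invocation of $op'$ is relocated to the $\advt$-write event of $op'$ (Line~\ref{AT2}), which occurs \emph{after} the original invocation of $op'$ enters $\advt$ in $V$. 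Chaining these: the relocated response of $op$ is at-or-before the original response of $op$, which is before the original invocation of $op'$, which is at-or-before the relocated invocation of $op'$. Therefore the relocated response of $op$ still precedes the relocated invocation of $op'$ in $E$, and since $x^\sim(E)$ orders operations by these relocated events, $op \prec_{x^\sim(E)} op'$. This is the main piece of genuinely new reasoning, but it is just chasing the order of the four event types (send, write, snapshot, receive) around the loop body.

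**Part (2).** Here the plan is to lean directly on~\cite{podc23}. The key observation — stated in the paragraph preceding the theorem — is that the views are constructed so that each process $p_i$ sees exactly the same local state sequence whether we think of the real-time placement of its invocation/response as the $V$-level send/receive or as the $\advt$-level write/snapshot; only the \emph{real-time order among different processes'} events changes, and that order is local and invisible. Concretely, I would define $E'$ by taking $E$ and, for each operation, sliding its $V$-side send step forward to coincide with (just after) the corresponding $\advt$-write and its $V$-side receive step backward to coincide with (just before) the corresponding $\advt$-snapshot; asynchrony permits this because no process ever blocks and the moved steps are interleaved only with steps of other processes or with local/shared computations whose exact timing is immaterial. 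Then $E' \equiv E$ because every process passes through the identical sequence of local states (the blocks in Lines~\ref{Gen2}, \ref{Gen5}, \ref{Gen6} and the $\advt$-internal steps are untouched), and $x(E') = x^\sim(E)$ by the very definition of $x^\sim(E)$ as the history induced by the relocated invocation/response events. I would invoke Lemma~7.4 of~\cite{podc23} to certify that this relocation is legitimate, i.e.\ that the resulting sequence is a valid execution of $V$ interacting with $\advt$.

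**Main obstacle.** The delicate point is not either bullet in isolation but making precise that the relocation in Part (2) yields a \emph{well-defined} execution: one must check that moving a response backward to its snapshot never crosses the matching invocation (sequentiality of each local word is preserved, since the snapshot of an operation follows its own write), and that moving invocations forward does not create a cycle in the real-time order — i.e.\ that the relocated events can still be linearly ordered consistently. Both follow from the fact that, per operation, write $\to$ send-to-$\adv$ $\to$ receive-from-$\adv$ $\to$ snapshot occur in that order inside $\advt$, so the relocated invocation of an operation still precedes its relocated response; and across operations, relocation only ever shrinks intervals, so it cannot manufacture a precedence that reverses an existing one. I expect this is precisely the content already discharged in Appendix~\ref{app:construction} and Lemma~7.4 of~\cite{podc23}, so in the write-up I would state the two properties, give the short order-chasing argument for (1), and for (2) cite the prior construction while spelling out the indistinguishability claim explicitly.
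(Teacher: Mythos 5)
Your proposal matches the paper's treatment: the paper offers no independent proof of this theorem, deferring—exactly as you do—to the view-based construction of $x^\sim(E)$ (Appendix~\ref{app:construction}, Section~7 of~\cite{podc23}) and Lemma~7.4 there, and your order-chasing for part~(1) and step-sliding indistinguishability argument for part~(2) are precisely the intuition the paper records in the caption of Figure~\ref{fig:execution}. One small correction: the Line~\ref{Gen4} receive cannot be slid to \emph{just before} the corresponding snapshot, since the view it delivers is computed from that snapshot (Lines~\ref{AT5}--\ref{AT7}); instead slide it to immediately after the snapshot block, which yields the same induced history $x^\sim(E)$ because the intervening steps are local to the same process and invisible to the others.
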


\subsection{Predictive strong decidability}

As we did in Section~\ref{sec:dist-lang} for the register and the ledger objects, 
for any sequential object $O$, we can define the language $LIN\_O$
with every $\omega$-word such that each finte prefix is linearizable
with respect $O$.\footnote{The only assumption needed is that $O$ is \emph{total},
namely, in each of its states, every operation can be invoked.}

We would like to exploit the properties of the timed adversary $\advt$ in Theorem~\ref{theo:prev-work}
to runtime verify $LIN\_O$. The first step towards this direction is to recall that $\advt$ is nothing else than
a distributed service that ``wraps'' $\adv$, and argue that linearizability of 
$\adv$ and $\advt$ are linked. This is shown in Lemma~\ref{lemma:timed-adv-lin}
(implied by Lemma 7.2 in~\cite{podc23}),
where for simplicity $\adv$ and $\advt$ denote themselves the set of all possible histories 
that the distributed services can exhibit,
and $\adv$ (resp. $\advt$) being linearizable means that every finite prefix of each of its histories is linearizable,
with respect to a given sequential object $O$.

\begin{lemma}
\label{lemma:timed-adv-lin}
$\adv \hbox{ is linearizable } \Longleftrightarrow \advt \hbox{ is linearizable (ignoring views)}$.
\end{lemma}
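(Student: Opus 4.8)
The claim is that $\adv$ is linearizable if and only if $\advt$ is linearizable (ignoring views). I would prove both directions by exploiting the fact that $\advt$ is built by wrapping $\adv$ in read/write code executed only \emph{before} an invocation reaches $\adv$ and \emph{after} the response from $\adv$ is obtained, as in Figure~\ref{fig-timed-adv}. The key observation is that if we fix an execution of $\advt$ and project away the wrapper steps (Lines~\ref{AT1},~\ref{AT2},~\ref{AT5}-~\ref{AT7}), we obtain a valid execution of $\adv$ with \emph{exactly the same} sequence of invocations sent to $\adv$ and responses returned by $\adv$; conversely, every execution of $\adv$ can be embedded inside an execution of $\advt$ by inserting the wrapper steps around each invocation/response pair. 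Thus there is a tight correspondence between histories of $\adv$ and histories of $\advt$, and the only thing that can change is the real-time precedence relation among operations.

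For the direction $\adv$ linearizable $\Longrightarrow$ $\advt$ linearizable: take any finite prefix $\pi^\tau$ of a history of $\advt$ (ignoring views), and let $\pi$ be the corresponding prefix of the underlying history of $\adv$ obtained by deleting the wrapper events. Every operation $op$ in $\advt$ strictly contains the corresponding operation in $\adv$ in real time, because the invocation event at $\adv$ (Line~\ref{AT3}) occurs after $p_i$ receives $v_i$, and the response event back to $p_i$ (Line~\ref{AT7}) occurs after the response from $\adv$ (Line~\ref{AT4}). Hence $op \prec_{\pi^\tau} op' \implies op \prec_\pi op'$, so the real-time partial order of $\pi^\tau$ is a \emph{subset} of that of $\pi$. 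Since $\pi$ is linearizable, there is a sequential witness $S$ valid for $O$ that respects process order and the real-time order of $\pi$; the same $S$ respects the weaker real-time order of $\pi^\tau$ (process order is identical, as the wrapper does not reorder a process's own operations), so $\pi^\tau$ is linearizable. This holds for every finite prefix, so $\advt$ is linearizable.

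For the converse $\advt$ linearizable $\Longrightarrow$ $\adv$ linearizable: given any history $x$ of $\adv$, I would use Theorem~\ref{theo:prev-work} (or rather the underlying correspondence) to realize $x$ as the input $x(E)$ of an execution $E$ of a trivial ``identity'' algorithm interacting with $\advt$ — concretely, an algorithm whose processes simply forward the adversary's invocations and responses. Then $x^\sim(E)$ is a history of $\advt$, and by Theorem~\ref{theo:prev-work}(\ref{theo:prev-prop1}) we have $op \prec_{x(E)} op' \implies op \prec_{x^\sim(E)} op'$, i.e. the real-time order of $x$ is a subset of that of $x^\sim(E)$. Since $\advt$ is assumed linearizable, every finite prefix of $x^\sim(E)$ is linearizable; a linearization of such a prefix respects the real-time order of $x^\sim(E)$ and hence a fortiori the real-time order of the corresponding prefix of $x$, giving linearizability of every finite prefix of $x$. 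Therefore $\adv$ is linearizable. I would then remark that this argument is exactly Lemma~7.2 of~\cite{podc23} restated in the present notation.

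\textbf{Main obstacle.} The delicate point is the converse direction: one must be careful that ``$\advt$ linearizable'' is a statement about \emph{all} histories $\advt$ can exhibit, and one needs that the sketch $x^\sim(E)$ arising from the identity algorithm is genuinely among them — this is where Theorem~\ref{theo:prev-work}(\ref{theo:prev-prop2}), asserting $x^\sim(E)$ is the input of an actual execution of $V$ indistinguishable from $E$, does the real work. A secondary subtlety is handling \emph{pending} operations and \emph{infinite} histories cleanly: since linearizability here is defined prefix-wise on finite prefixes, I would carry out all the real-time-order comparisons at the level of finite prefixes and quantify over all of them, avoiding any issue with completing pending operations at infinity.
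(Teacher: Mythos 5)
Your forward direction coincides with the paper's argument (note the paper does not actually prove Lemma~\ref{lemma:timed-adv-lin} itself, citing Lemma~7.2 of~\cite{podc23}; but it proves the exactly analogous statement for the strong eventual counter, and that proof is the template): each $\advt$-operation contains its inner $\adv$-operation, so the real-time order at the $\advt$ interface is a subset of the order at the $\adv$ interface, per-process sequences are unchanged, and a linearization witnessing a prefix of the inner history also witnesses the corresponding prefix of the wrapped one. Your handling of pending operations, flagged as a secondary subtlety, is the right thing to spell out.

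The converse is where you deviate, and where your elaboration is weaker than your own opening paragraph. The paper argues the contrapositive in a few lines: if $\adv$ has a non-linearizable history $x'$, asynchrony allows an execution in which Lines~\ref{AT1}--\ref{AT3} and Lines~\ref{AT4}--\ref{AT7} are each executed atomically around every interaction (a \emph{tight} execution), so the history of $\advt$ is literally $x'$ and $\advt$ is not linearizable either. This is exactly the embedding you state at the start (``every execution of $\adv$ can be embedded inside an execution of $\advt$ by inserting the wrapper steps around each invocation/response pair''), and once you have it you are done: $x$ is itself a history of $\advt$, so its prefixes are linearizable by hypothesis. Your elaborated converse instead detours through an identity algorithm, the sketch $x^\sim(E)$, and Theorem~\ref{theo:prev-work}, which is both unnecessary and not quite tight as written. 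First, Theorem~\ref{theo:prev-work} does not give the realization of $x$ as $x(E)$; that realization is again the tight-scheduling argument (and it must be justified from the asynchrony of the wrapper, not from Claim~\ref{claim:execs}, since in this lemma $\adv$ is a fixed service rather than one exhibiting all of $\Sigma^\omega$). Second, transferring a linearization of a finite prefix of $x^\sim(E)$ to a finite prefix of $x$ is not automatic: the sketch reorders symbols, so you must check that the prefix of $x^\sim(E)$ you linearize contains no operation that is not even invoked in the chosen prefix of $x$ and that completed operations carry the same responses---doable, but missing from your plan. If you do take this route, observe that in a tight execution $x^\sim(E)=x(E)$, at which point the detour collapses to the paper's direct argument.
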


\begin{figure}[ht]
\centering{ \fbox{
\begin{minipage}[t]{150mm}
\scriptsize
\renewcommand{\baselinestretch}{2.5} \resetline
\begin{tabbing}
aaaa\=aaa\=aaa\=aaa\=aaa\=aaa\=aaa\=\kill 

{\bf Shared Variables:}\\

$~~$  $M[1, \hdots, n]:$ shared array of read/write registers, each initialized to $\emptyset$\\ \\

{\bf Local algorithm for process $p_i$:}\\

 \> $s_i = \emptyset$\\

 \> {\bf while} {\sf true} {\bf do}\\

\line{VL1} \>\> Non-deterministically pick $v_i \in \Sigma^<_i$\\

\line{VL3} \>\> \%\% No communication is needed before sending $v_i$ to $\advt$\\

\line{VL3} \>\> Send $v_i$ to the adversary $\advt$\\

\line{VL4} \>\> Receive $(w_i, view_i)$, $w_i \in \Sigma^>_i$, from the adversary $\advt$\\

\line{VL5} \>\> $s_i \leftarrow s_i \cup \{ (v_i, w_i, view_i) \}$\\

\>\> $M[i].write(s_i)$\\

\>\> $snap_i \leftarrow \snap(M)$\\

\>\> $h_i \leftarrow$ finite history obtained through the triples in $snap_i$, as in~\cite{podc23}\\

\line{VL6} \>\> {\bf if} $h_i$ is linearizable w.r.t. $O$ {\bf then report} $\yes$\\

\>\> {\bf else} {\bf report} $\no$

\end{tabbing}
\end{minipage}
  }
\caption{An algorithm $V_O$ that predictively strongly decides $LIN\_O$.}
\label{fig-deciding-lin}
}
\end{figure}

The lemma implies that the only reason $\advt$ can exhibit a 
non-linearizable behavior is because~$\adv$ is not linearizable, and not because of the extra code in $\advt$.
This is the basis of the ``indirect'' verification strategy mentioned before.
As $\adv$ is a black-box, we will still assume that $\adv$ can exhibit any possible behavior,
as in previous sections, and hence $\advt$ can exhibit any possible behavior too.

Figure~\ref{fig-deciding-lin} contains algorithm $V_O$ in~\cite{podc23} that
almost strongly decides $LIN\_O$, where the reader is referred to Appendix~\ref{app:construction}
for the details how $h_i$ is locally computed in Line~\ref{VL6}. 
Below we use Theorem~\ref{theo:prev-work} to informally
explain the rational behind $V_O$,
in order to motivate a decidability notion,
predictive strong decidability, that formalizes ``almost strongly deciding''.~\footnote{It is already shown in~\cite{podc23}
that linearizability is not strongly decidable against $\advt$, hence a relaxation of strong decidability
is the most feasible alternative.}

Let $E$ be any execution of $V_O$. If $x(E) \notin LIN\_O$, 
then Theorem~\ref{theo:prev-work}(1) implies $x^\sim(E) \notin LIN\_O$:
intuitively, if $x(E)$ is not linearizable, then $x^\sim(E)$, where operations might ``shrink'', is not linearizable either.
Thus, there is a finte prefix $\alpha$ of $x^\sim(E)$ such that every prefix of $x^\sim(E)$
having $\alpha$ as a prefix is not linearizable. This holds due to the fact that
linearizability is a \emph{prefix-closed} property, which roughly speaking means that there is nothing that
can happen in the future that can ``fix'' a non-linearizable prefix.    
Therefore, as $E$ is fair, eventually every process always computes in Line~\ref{VL6}
a prefix of $x^\sim(E)$ that is not linearizable~\footnote{Due to asynchrony, 
$x_i$ might not be exactly a prefix, but a non-linearizable history that is ``similar'' to a non-linearizable prefix.},
hence reporting $\no$ infinitely often. 

Now, if $x(E) \in LIN\_O$, 
$x^\sim(E)$ might or might not be in $LIN\_O$.
If $x^\sim(E) \in LIN\_O$, then, in every iteration, every process locally computes
a finte history that is linearizable, and hence no process reports $\no$ ever.
But if $x^\sim(E) \notin LIN\_O$, every process reports $\no$ infinitely often, 
as argued in the previous paragraph,
which is incorrect to achieve strong decidability.
However, Theorem~\ref{theo:prev-work}(2) implies that there is an execution of $V$
that is indistinguishable to all processes and whose input is precisely~$x^\sim(E)$.
Thus, $x^\sim(E)$ is indeed a behavior that $\advt$ is able to produce. 
Thinking $\advt$ as a distributed service, in this case
processes somehow have used $x^\sim(E)$ to ``predict'' that $\advt$ \emph{is not} linearizable,
although its current behavior $x(E)$ is linearizable.
Observe that only an external global observer is able to determine that the
current behavior of $\advt$ is $x(E)$ and not $x^\sim(E)$, but for the 
processes it is impossible to tell this, and to them 
it is perfectly possible that the current behavior of $\advt$ is $x^\sim(E)$. 

This motivates the next decidability notion~\cite{podc23}[Definition 6.1], 
{a relaxation of} strong decidability in which processes are allowed to make
``mistakes'' when $x(E)$ is in the decided language, as long
as they have a proof that $\advt$ is not correct.

\begin{definition}[Predictive Strong Decidability]
An algorithm $V$ \emph{predictively strongly decides} a language $L$ 
against the timed adversary $\advt$, if in every execution $E$,
\begin{itemize}
\item[]
$x(E) \in L \implies \forall p, \no(E,p) = 0 \vee \big( \exists p, \no(E,p) > 0 \, \wedge \, x^\sim(E)~\notin~L \, \wedge
\big(\exists \hbox{exec. $E'$ of\, $V$ s.t. } \\E'~\equiv~E~\wedge~x(E') = x^\sim(E)\big) \big)$ 

\item[]
$x(E) \notin L \implies \exists p, \no(E,p) > 0$
\end{itemize}

Alternatively, we say that $L$ is \emph{predictively strongly decidable}.
The class of predictively strongly decidable languages is denoted $\psd$.
\end{definition}

Observe that there is no decidability predicate $\sf P$ 
(which only states properties of report values in executions)
such that $\sf P$-decidability (considering $\advt$ instead of $\adv$)
corresponds to predictive strong decidability, as when $x(E) \in L$, 
if a process reports $\no$ in $E$, it requires the existence of 
executions of $V$ satisfying specific properties.

Theorem 8.1 in~\cite{podc23} proves the correctness of algorithm $V_O$, which
implies that in general linearizability is predictively-strongly decidable.

\begin{theorem}
For any sequential object~$O$, its associated linearizability language $LIN\_O \in \psd$.
\end{theorem}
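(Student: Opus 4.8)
The plan is to prove that the algorithm $V_O$ in Figure~\ref{fig-deciding-lin} predictively strongly decides $LIN\_O$, verifying each of the two implications in the definition of $\psd$. The argument I would give essentially formalizes the informal discussion that precedes the statement, leaning crucially on Theorem~\ref{theo:prev-work}. First I would fix an arbitrary execution $E$ of $V_O$ and recall the two key facts: (i) the histories $h_i$ that processes locally compute in Line~\ref{VL6} from the triples $(v_i,w_i,view_i)$ converge, in a suitable sense, to the sketch $x^\sim(E)$, so that in every iteration each process's $h_i$ is a finite history ``similar to'' a prefix of $x^\sim(E)$ (the precise construction lives in Appendix~\ref{app:construction}); and (ii) by Theorem~\ref{theo:prev-work}(\ref{theo:prev-prop1}) the precedence relation only grows when passing from $x(E)$ to $x^\sim(E)$, i.e.\ operations can only ``shrink''.

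For the second implication, suppose $x(E) \notin LIN\_O$. Because linearizability is prefix-closed, there is a finite prefix $\alpha$ of $x(E)$ that is not linearizable, and no extension of it is linearizable. Using Theorem~\ref{theo:prev-work}(\ref{theo:prev-prop1}) and the fact that shrinking operations preserves non-linearizability (tightening the real-time order of an already non-linearizable history cannot make it linearizable, since any valid linearization of the shrunk history would also be one of the original), I would argue $x^\sim(E) \notin LIN\_O$ as well, with a corresponding bad finite prefix. Since $E$ is fair, eventually every process's locally computed $h_i$ incorporates enough triples to contain (a history similar to) that bad prefix, so every process reports $\no$ from some point on; in particular $\exists p,\ \no(E,p) > 0$, as required.

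For the first implication, suppose $x(E) \in LIN\_O$. I would split on whether $x^\sim(E) \in LIN\_O$. If $x^\sim(E) \in LIN\_O$, then since the locally computed $h_i$ are always (similar to) prefixes of $x^\sim(E)$, and linearizability is prefix-closed, each $h_i$ is linearizable, so every process always reports $\yes$ and $\no(E,p) = 0$ for all $p$ --- the first disjunct holds. If instead $x^\sim(E) \notin LIN\_O$, then by the same reasoning as the second implication, eventually all processes report $\no$, so $\exists p,\ \no(E,p) > 0$; moreover $x^\sim(E) \notin L$ directly, and Theorem~\ref{theo:prev-work}(\ref{theo:prev-prop2}) furnishes an execution $E'$ of $V_O$ with $E' \equiv E$ and $x(E') = x^\sim(E)$. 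Hence the second disjunct of the first implication holds. This exhausts the cases and completes the proof.

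The main obstacle I expect is not the high-level case analysis, which is routine, but making precise the ``similarity'' between the locally computed histories $h_i$ and prefixes of $x^\sim(E)$ --- asynchrony means a process may have an incomplete or slightly stale view when it evaluates Line~\ref{VL6}, so $h_i$ is not literally a prefix of $x^\sim(E)$. The careful claim to establish is that any such $h_i$ is itself a history obtainable from $x(E)$ by shrinking operations (equivalently, it refines the precedence order of some prefix of $x(E)$), so that Theorem~\ref{theo:prev-work}(\ref{theo:prev-prop1})-style monotonicity still applies and non-linearizability of the relevant prefix of $x^\sim(E)$ propagates to $h_i$ for all sufficiently late iterations, while linearizability of $x^\sim(E)$ forces every $h_i$ to be linearizable. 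This is exactly where I would invoke the construction and Lemma~7.4 of~\cite{podc23} rather than redo it, citing Theorem~\ref{theo:prev-work} as the packaged statement of what is needed.
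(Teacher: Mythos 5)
Your proposal is correct and follows essentially the same route as the paper, which itself only sketches the rationale (the case split on $x(E)$ and $x^\sim(E)$, prefix-closedness of linearizability, fairness, and the two parts of Theorem~\ref{theo:prev-work}) and defers the detailed correctness of $V_O$ --- in particular the precise relation between the locally computed $h_i$ and $x^\sim(E)$ --- to Theorem~8.1 and Lemma~7.4 of~\cite{podc23}. You correctly identify that very point (that $h_i$ need not be literally a prefix of $x^\sim(E)$) as the only delicate step and handle it exactly as the paper does, by invoking the construction from~\cite{podc23}.
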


As argued in~\cite{podc23}, 
the previous result can be extended to generalizations of linearizability such as 
\emph{set linearizability}~\cite{N94} and \emph{interval linearizability}~\cite{CRR18, CRR23}, which 
are proposed to deal with correctness of implementations  of \emph{inherently} concurrent objects
that scape linearizability. The case of interval linearizability is remarkable as it has been 
shown to be a complete specification formalism, under reasonable assumptions~\cite{CRR18,GLM18}.

\paragraph{Replacing snapshots with collects.}
It is not evident that there is an alternative construction to that in~\cite{podc23}
to obtain $h_i$ in  $V_O$ (Line~\ref{VL5}) so that the aforementioned results hold 
when snapshots are replaced with weaker collects operations in $\advt$ and $V_O$.
Recently, it was shown that indeed such construction exists~\cite{rv24},
however, replacing snapshots with collects makes the construction of $h_i$
and analysis of $V_O$ more complex.

\subsection{Predictive weak decidability}

Although diminished, $\advt$ is still powerful enough to preclude some 
eventual correctness properties to be predictively strongly decidable:

\begin{lemma}
\label{lemma:no-psd-evc}
$WEC\_ COUNT, SEC\_COUNT \notin \psd$.
\end{lemma}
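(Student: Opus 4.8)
The plan is to prove $WEC\_COUNT \notin \psd$ and then deduce the claim for $SEC\_COUNT$ by containment, since $SEC\_COUNT \subset WEC\_COUNT$. For $WEC\_COUNT$, I would argue by contradiction. Suppose some algorithm $V$ predictively strongly decides $WEC\_COUNT$ against $\advt$. The key observation is that $WEC\_COUNT$ is \emph{real-time oblivious} (as noted in the text right after Definition~\ref{def:rt-oblivious}), and moreover the ``shrinking'' operation performed by $\advt$ preserves membership in $WEC\_COUNT$: if $x \in WEC\_COUNT$ and $x^\sim$ is obtained from $x$ by shrinking operations (moving invocations forward and responses backward, as in Figure~\ref{fig:execution}), then $x^\sim \in WEC\_COUNT$ as well. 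Indeed, the three defining properties of $WEC\_COUNT$ refer only to (i) per-process precedence of $inc$ before $read$ within the \emph{same} process, (ii) monotonicity of a process's own $read$ returns, and (iii) an eventual stabilization condition on prefixes — none of these is affected by shrinking operations, because shrinking never changes the order of a process's own operations and never removes any operation. Hence for every execution $E$ of $V$ with $x(E) \in WEC\_COUNT$, we also have $x^\sim(E) \in WEC\_COUNT$, so the ``escape clause'' in the definition of predictive strong decidability (which requires $x^\sim(E) \notin L$) can never be invoked. This means $V$ must satisfy the clean statement: for every execution $E$, $x(E) \in WEC\_COUNT \iff \forall p, \no(E,p) = 0$ — that is, $V$ strongly decides $WEC\_COUNT$ (against $\advt$).

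The next step is to transfer the impossibility from $\adv$ to $\advt$. The difficulty is that Lemma~\ref{lemma:no-sd-evc} is stated for algorithms interacting with $\adv$, not $\advt$. However, the proof of Lemma~\ref{lemma:no-sd-evc} uses only Claim~\ref{claim:execs} — the existence, for every well-formed $\omega$-word $x$, of a sequential failure-free fair execution $E$ with $x(E) = x$ — together with a prefix-indistinguishability argument. Both ingredients remain available against $\advt$: since $\adv$ (and hence $\advt$) can exhibit any behavior in $\Sigma^\omega$, the analogue of Claim~\ref{claim:execs} holds for $\advt$ (one simply runs the $\advt$ wrapper code sequentially as well, interleaved with the construction), and the prefix argument ($F$ a finite prefix where $p_2$ first reports $\no$ on input $x \notin WEC\_COUNT$, then extending $F$ to an $\omega$-word $x' \in WEC\_COUNT$ so that $F$ remains a prefix of some execution $E'$) goes through verbatim, because the values computed and reported by processes in the prefix $F$ depend only on the prefix of the execution, regardless of what happens afterward. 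So the very same contradiction arises: $p_2$ reports $\no$ in an execution whose input is in $WEC\_COUNT$.

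Therefore no algorithm strongly decides $WEC\_COUNT$ against $\advt$, contradicting the conclusion of the first paragraph, and so $WEC\_COUNT \notin \psd$. Finally, since $SEC\_COUNT \subset WEC\_COUNT$, any algorithm predictively strongly deciding $SEC\_COUNT$ would — on the sequential inputs $x$ and $x'$ used above, which lie in $WEC\_COUNT$ and differ in $SEC\_COUNT$-membership exactly as they differ in $WEC\_COUNT$-membership (the constructed words involve no concurrency, so property~(4) of $SEC\_COUNT$ reduces to property of $WEC\_COUNT$ plus the sequential-history constraint, which both satisfy or fail identically) — yield the same contradiction; hence $SEC\_COUNT \notin \psd$. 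I expect the main obstacle to be the first paragraph: carefully verifying that shrinking operations preserve $WEC\_COUNT$-membership, in particular checking the eventual/stabilization property~(3) is robust under shrinking (it is, since shrinking a prefix $\alpha$ can only shrink, never delete, the $inc$ operations counted in $\alpha$, and the suffix condition is about which operations eventually appear, not their real-time extent). A secondary concern is making the analogue of Claim~\ref{claim:execs} for $\advt$ fully rigorous, but this is essentially a routine adaptation of the existing proof.
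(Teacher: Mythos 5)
Your proposal is correct, and it reaches the contradiction by a genuinely (if modestly) different route than the paper. You neutralize the escape clause of predictive strong decidability \emph{globally}: since membership in $WEC\_COUNT$ depends only on the per-process projections, the sketch $x^\sim(E)$ is in $WEC\_COUNT$ whenever $x(E)$ is, so the clause $x^\sim(E)\notin L$ can never be invoked and any algorithm in $\psd$ for $WEC\_COUNT$ would in fact strongly decide it against $\advt$; you then rerun the prefix-extension argument of Lemma~\ref{lemma:no-sd-evc} against $\advt$. The paper instead neutralizes the clause \emph{locally}: it adapts Claim~\ref{claim:execs} to $\advt$ so that the constructed executions are \emph{tight}, meaning $x(E)=x^\sim(E)$, and then contradicts the definition of $\psd$ directly on the single execution $E'$ (its input is in the language, a process reports \no, so $x^\sim(E')\notin L$, contradicting tightness). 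Your route buys a reusable principle---$\psd$ collapses to $\sd$ for any language invariant under the sketch operation---at the price of verifying that invariance (your check of the eventual property~(3) under re-interleaving is the delicate point, and it is right); the paper's route avoids that verification and would also apply to languages that are not sketch-invariant. On $SEC\_COUNT$ you are in fact more careful than the paper: containment only yields $x\notin SEC\_COUNT$, and one must also check $x'\in SEC\_COUNT$ (property~(4)), which you do explicitly while the paper leaves it implicit. One caveat you inherit from the paper's Lemma~\ref{lemma:no-sd-evc}: in the word $x$ used there, $p_1$'s reads return $0$ after $p_1$'s own $inc$, which already violates property~(1); hence if the first \no occurs after $p_1$'s second iteration, $x(F)$ contains such a read and the extension $x'$ is \emph{not} in $WEC\_COUNT$. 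This is repaired by letting $p_1$'s reads return $1$, so that $x$ fails only the eventual property~(3); with that adjustment both your argument and the paper's go through unchanged.
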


\begin{proof}
The proof is similar to that of Lemma~\ref{lemma:no-sd-evc}, with the difference
that views in the responses of $\advt$ need to be taken into account.

For $n=2$, by contradiction, suppose that there is an algorithm $V$ that predictively strongly decides $WEC\_COUNT$,
and consider the word $x \notin WEC\_COUNT$ where $p_1$ executes $add()$, 
and then $p_2$ and $p_1$ alternatively execute infinitely many $read()$ operations returning~0:
$$<^+_1 \hbox{ } >^{\,}_1 \hbox{    } <^{\,}_2 \hbox{ }  >^0_2   \hbox{    }  <^{\,}_1 \hbox{ }  >^0_1
 \hbox{    } <^{\,}_2 \hbox{ }  >^0_2   \hbox{    }  <^{\,}_1 \hbox{ }  >^0_1  \hbox{    } \hdots$$

Clearly $x \notin WEC\_COUNT$. We can easily adapt the proof of Claim~\ref{claim:execs}
to argue that there is an execution $E$ of $V$ such that $x(E) = x$,
where each process
atomically executes Lines~\ref{Gen1}-\ref{Gen3} of $V$ together with Lines~\ref{AT1}-\ref{AT3} of $\advt$,
and atomically executes Lines~\ref{Gen4}-\ref{Gen6} together with  Lines~\ref{AT4}-\ref{AT7} of~$\advt$.
This type of histories of $\advt$ are called \emph{tight} in~\cite{podc23}, and they have the property that 
\emph{inputs are equal to sketches}, namely, $x(E) = x^\sim(E)$. 
Let $F$ be the shortest (finite) prefix of $E$ in which a process reports $\no$.
Without loss of generality, assume that such process reporting $\no$ is $p_2$.
Thus, at the end of $F$, $p_2$ executes its block of code corresponding to Line~\ref{Gen6}, reporting $\no$
(for the first time in $F$).
Let us consider the finite input $x(F)$ in prefix $F$ (namely, the projection of symbols in $\Sigma$).
Observe that $x(F)$ is a finite prefix of $x$ that ends with $<^{\,}_2 \hbox{ }  >^0_2$.
Consider the $\omega$-word $x'$:
$$x(F) \hbox{    }  <^{\,}_1 \hbox{ }  >^1_1 \hbox{    }  <^{\,}_2 \hbox{ }  >^1_2 \hbox{    }  <^{\,}_1 \hbox{ }  >^1_1 \hbox{    }  <^{\,}_2 \hbox{ }  >^1_2 \hdots$$
Note that $x' \in WEC\_COUNT$. By the modified proof of Claim~\ref{claim:execs},
there is an execution $E'$ of $V$ such that $x(E') = x'$,
and moreover, $F$ is prefix of $E'$. Thus, $p_2$ reports $\no$ in~$E'$.
Since (1) $V$ is assumed to predictively strongly decide $WEC\_COUNT$, (2) $x' \in WEC\_COUNT$
and (3) a process reports $\no$ in~$E'$, then it must be that $x^\sim(E') \notin  WEC\_COUNT$.
Here we reach a contradiction because $E'$ is a tight execution, as defined above, and hence 
$x' = x(E') = x^\sim(E') \in WEC\_COUNT$. Thus, $V$ does not predictively strongly decides $WEC\_COUNT$.

Finally, since $SEC\_COUNT \subset WEC\_COUNT$, we also have that $SEC\_COUNT \notin \sd$.
\end{proof}

The previous impossibility result motivates the next predictive version of $\wde$,
which actually allows the existence of algorithms deciding $SEC\_COUNT$, as shown after:

\begin{definition}[Predictive Weak Decidability]
An algorithm $V$ \emph{predictively weakly decides} a language $L$ 
against the timed adversary $\advt$, if in every execution $E$,
\begin{itemize}
\item[]
$x(E) \in L \implies \forall p, \no(E,p) < \infty \vee \big( \exists p, \no(E,p) = \infty \, \wedge \, x^\sim(E)~\notin~L \, \wedge
\big(\exists \hbox{exec. $E'$ of\, $V$ s.t. } \\E'~\equiv~E~\wedge~x(E') = x^\sim(E)\big) \big)$ 

\item[]
$x(E) \notin L \implies \forall p, \no(E,p) = \infty$
\end{itemize}
Alternatively, we say that $L$ is \emph{predictively weakly decidable}.
The class of predictively weakly decidable languages is denoted $\pwd$.
\end{definition}

\begin{lemma}
$\adv \hbox{ is a strong eventual counter } \Longleftrightarrow \advt \hbox{ is a strong eventual counter (ignoring views)}$.
\end{lemma}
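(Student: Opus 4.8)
The plan is to mirror the proof of Lemma~\ref{lemma:timed-adv-lin} (equivalently Lemma~7.2 of~\cite{podc23}), exploiting that $\advt$ merely wraps each operation of $\adv$ in finitely many read/write steps that never alter the response symbol $w_i$ (they only attach the view). For a history $H^\tau$ of $\advt$ write $\pi(H^\tau)$ for the history of $\adv$ obtained by projecting onto the invocations sent in Line~\ref{AT3} and the responses received in Line~\ref{AT4}; this is a well-formed history of $\adv$ whenever $H^\tau$ is, since the wrapper is wait-free and contributes only finitely many steps per operation. Conversely, from a history $H$ of $\adv$ one builds a history $H^\tau$ of $\advt$ with $\pi(H^\tau)=H$ by scheduling the wrapper code around the events of $H$. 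Two elementary facts drive the argument: (i)~the $r$-th operation of $p_i$ in $H^\tau$ wraps the $r$-th operation of $p_i$ in $\pi(H^\tau)$, with the $\advt$-operation's invocation no later than the $\adv$-operation's invocation (Lines~\ref{AT1}--\ref{AT2} precede Line~\ref{AT3}) and the $\advt$-operation's response no earlier than the $\adv$-operation's response (Lines~\ref{AT5}--\ref{AT7} follow Line~\ref{AT4}); and (ii)~corresponding operations have the same type and the same returned value, and the per-process order of operations is identical in $H^\tau$ and $\pi(H^\tau)$.

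First I would verify that clauses~(1), (2) and~(3) of the strong eventual counter definition are invariant under $\pi$ in both directions. Clauses~(1) and~(2) refer only to returned values and to the process order, both preserved by fact~(ii). Clause~(3) is equivalent to the statement ``either there are infinitely many $inc$ operations, or there are $k<\infty$ of them and eventually every $read$ returns $k$''; the total number of $inc$ operations and the returned values are preserved by $\pi$, and ``eventually'' concerns only the tail of the history, so clause~(3) transfers both ways. Hence the only delicate requirement is clause~(4), which says that each $read$ $r$ returns at most the number of $inc$ operations that precede or are concurrent to it, i.e.\ the number of $inc$ operations $c$ with $\mathrm{inv}(c)\le\mathrm{resp}(r)$.

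For the direction $\adv\Rightarrow\advt$ I would argue that going from $\pi(H^\tau)$ to $H^\tau$ can only \emph{enlarge} the set of $inc$ operations that precede or are concurrent to a given $read$ $r$: if $C,R$ wrap $c,r$, then fact~(i) gives $\mathrm{inv}_{H^\tau}(C)\le\mathrm{inv}_{\pi(H^\tau)}(c)$ and $\mathrm{resp}_{H^\tau}(R)\ge\mathrm{resp}_{\pi(H^\tau)}(r)$, so $\mathrm{inv}_{\pi(H^\tau)}(c)\le\mathrm{resp}_{\pi(H^\tau)}(r)$ implies $\mathrm{inv}_{H^\tau}(C)\le\mathrm{resp}_{H^\tau}(R)$. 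Thus the upper bound in clause~(4) only loosens when passing to $H^\tau$; since $\pi(H^\tau)$ is a history of $\adv$ and hence a strong eventual counter by hypothesis, $H^\tau$ satisfies clause~(4) as well, and together with the invariance of~(1)--(3) this gives that every history of $\advt$ is a strong eventual counter. For the direction $\advt\Rightarrow\adv$ the inclusion points the wrong way, so I would instead exploit \emph{tight} histories of $\advt$ (as in the proof of Lemma~\ref{lemma:no-psd-evc} and in~\cite{podc23}): given a history $H$ of $\adv$, schedule each process so that Lines~\ref{AT1}--\ref{AT3} are executed atomically at the instant of the corresponding $\adv$-invocation and Lines~\ref{AT4}--\ref{AT7} atomically at the instant of the corresponding $\adv$-response. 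Then $\mathrm{inv}$ and $\mathrm{resp}$ of each $\advt$-operation coincide exactly with those of the wrapped $\adv$-operation, so once views are erased $H^\tau$ is precisely the history $H$; since by hypothesis every history of $\advt$ with views ignored is a strong eventual counter, $H$ is one too. In both constructions well-formedness is preserved because the wrapper is wait-free and adds only finitely many steps per operation, so fairness, reliability and sequentiality carry over.

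The step I expect to be the main obstacle is clause~(4) in the direction $\advt\Rightarrow\adv$: a plain ``project and pull back'' argument fails there, because shrinking operations may drop $inc$s from the precede-or-concurrent set and so tighten the bound, which is why the proof genuinely needs the freedom to choose the $\advt$-history witnessing $H$, i.e.\ the tight-execution construction; I would also need to check that this schedule is a legal execution of $\advt$ (the snapshot of Line~\ref{AT5} simply reads whatever the atomic $M[\cdot]$-writes of other processes have left, which is harmless). A secondary point is to phrase the reformulation of clause~(3) so that it is literally equivalent to the definition, in particular fixing the reading of ``the suffix $\beta$ has only $read$ operations''.
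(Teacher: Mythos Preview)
Your proposal is correct and follows essentially the same route as the paper: both directions rest on the observation that each $\advt$-operation wraps an inner $\adv$-operation with identical type and response, so clauses~(1)--(3) transfer for free, while clause~(4) is handled for $\Rightarrow$ by the ``stretching only enlarges the precede-or-concurrent set'' argument and for $\Leftarrow$ by building a tight execution in which the wrapper steps are scheduled atomically around the $\adv$ events, making the two histories coincide. The paper phrases the $\Leftarrow$ direction as a contrapositive, but the underlying construction is identical to your tight-history argument.
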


\begin{proof}
First, observe that in a history $x$ of $\advt$, every operation $(v,w)$ has an ``inner'' operation
$(v,w)$ (i.e. with same invocation and response) that is produced by $\adv$. 
Let $x'$ denote the projection with the history of $\adv$ in $x$.
Observe that in $x$ and $x'$ every process executes the same sequence of operations
(namely, $x|_i = x'|_i$, for every process $p_i$), but precedence and concurrence relations might be different.

To prove the $\Rightarrow$ direction, consider any history $x$ of $\advt$. 
The observations made above imply that if $x'$ satisfies the first three properties of the weak eventual counter,
then $x$ does too. As for the fourth property of the strong eventual counter, note that $x$ is basically $x'$ of $\adv$ where
operations might be ``stretched'', as the operations of $x'$ are nested in the operations o $x$. 
This implies that if an operation $(v,w)$ is concurrent to an $op$ operation in $x'$, then
the two operations are concurrent in $x$ too, and if $(v,w)$ is preceded by $op$ in $x'$,
then either that precedence relation is preserved in $x$ or the operations are concurrent. 
This observation implies that if $read$ operations satisfy the fourth property in $x'$, then they satisfy it in $x$.

To prove the $\Leftarrow$ direction, we consider the contrapositive.
Hence, suppose that there is a history $x'$ of $\adv$ that does not satisfy the strong eventual counter properties.
Observe that asynchrony allows a history of $x$ of $\adv$ in which 
(1) $\adv$ exhibits behavior $x$,
(2) in every iteration of every process, Lines~\ref{AT1}-\ref{AT3} are executed atomically,
i.e., one after the other with no step of other processes in between, and
(3) in every iteration of every process, Lines~\ref{AT4}-\ref{AT7} are executed atomically too.
Note that $x = x'$, and thus a history of $\advt$ is not consistent with the strong eventual consistent counter. 
\end{proof}

\begin{lemma}
\label{lemma:couner-in-pwd}
$SEC\_COUNT \in \pwd$.
\end{lemma}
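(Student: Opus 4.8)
The plan is to exhibit an explicit read/write algorithm $V$ interacting with $\advt$ and argue it predictively weakly decides $SEC\_COUNT$. The starting point is the algorithm in Figure~\ref{fig-algo-wec-count} that weakly decides $WEC\_COUNT$ against $\adv$: I would reuse its machinery for the first three properties of the eventual counter essentially verbatim (announcing increments in a shared array $INCS$, snapshotting it, tracking monotonicity of reads per process, and the ``eventual convergence'' clause for the suffix of only-$read$ operations), since these three properties are real-time oblivious and do not rely on views. The new ingredient is handling property (4): a $read$ operation must not return a value larger than the number of $inc$ operations that precede or are concurrent to it. Here the algorithm uses the view attached by $\advt$ to each response: when $p_i$ obtains $(w_i, view_i)$ for a $read$ returning value $r$, it extracts from $view_i$ the set of $inc$-invocations it contains; if $r$ exceeds that count plus (a safe bound on) the increments that could legitimately still be concurrent, $p_i$ flags a violation of property (4) and reports $\no$ from then on, much like the $flag_i$ mechanism for properties (1)--(2). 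I would then wrap the whole algorithm with the transformation of Lemma~\ref{lemma:basic-wad} (or its analogue) so that a single process detecting a violation forces all processes to report $\no$ infinitely often, giving the ``$\forall p$'' in the second implication of predictive weak decidability.

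The key steps, in order, are: (i) describe $V$ precisely, specifying how property (4) is checked locally from $view_i$; (ii) show that if $x(E) \notin SEC\_COUNT$ then some process reports $\no$ infinitely often, and then invoke the propagation wrapper to upgrade this to all processes — splitting into the sub-cases already used in Lemma~\ref{lemma:wd-evc} (violation of one of properties (1)--(3)) plus the new sub-case where only property (4) fails, where I must use Theorem~\ref{theo:prev-work}(\ref{theo:prev-prop1}): since precedence in $x(E)$ is preserved in $x^\sim(E)$, a genuine property-(4) violation in $x(E)$ is witnessed by the views, so the local check fires infinitely often; (iii) show that if $x(E) \in SEC\_COUNT$ then either no process ever reports $\no$, or the ``predictive escape hatch'' is satisfied. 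The escape hatch is where the view-based check for property (4) may produce a false negative: the view $view_i$ only records \emph{some} concurrent increments, so $p_i$ may see a $read$ returning a value it deems too large, even though it is fine in $x(E)$. In that case I claim the locally reconstructed history is exactly (a prefix of) $x^\sim(E)$, which by Theorem~\ref{theo:prev-work}(\ref{theo:prev-prop2}) is the input of an execution $E' \equiv E$; and since the shrinking of operations in $x^\sim(E)$ can only shrink the set of increments concurrent to a given $read$, a property-(4) violation detected from the views really does mean $x^\sim(E) \notin SEC\_COUNT$ — so all three conjuncts of the predictive clause hold.

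The main obstacle I anticipate is step (iii): making precise that the view-based check for property (4) is \emph{sound with respect to $x^\sim(E)$} — i.e., that whenever $p_i$'s local check rejects, the rejection is justified in $x^\sim(E)$ and not merely an artifact of incomplete view information. This requires carefully relating the count of $inc$-invocations visible in $view_i$ to the number of increments that precede-or-are-concurrent-to the $read$ \emph{in $x^\sim(E)$}, using the characterization of $x^\sim(E)$ from Appendix~\ref{app:construction} (invocations moved forward to the next write, responses moved back to the previous snapshot). I would need to show that an $inc$ operation counted against a given $read$ in $x^\sim(E)$ is precisely one whose invocation appears in that $read$'s view, so that the local test ``$r > |\{inc\text{-invocations in } view_i\}|$'' (suitably phrased, accounting for $inc$ operations still pending) is equivalent to a property-(4) violation in $x^\sim(E)$. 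A secondary subtlety is the interaction between the finitely-many-increments case of property (3) and property (4), but this should be handled by the same case analysis as in Lemma~\ref{lemma:wd-evc} together with the observation that with finitely many increments the views eventually stabilize to the full increment set.
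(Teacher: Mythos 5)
Your proposal matches the paper's proof in all essentials: extend the $WEC\_COUNT$ algorithm with a view-based check of property (4) (the test ``returned value $>$ number of $inc$-invocations in the view,'' which by the construction of $x^\sim(E)$ in Appendix~\ref{app:construction} is exactly property (4) evaluated in $x^\sim(E)$), use Theorem~\ref{theo:prev-work}(\ref{theo:prev-prop1}) to show that a genuine violation in $x(E)$ necessarily triggers the check, and Theorem~\ref{theo:prev-work}(\ref{theo:prev-prop2}) to supply the predictive escape hatch when $x(E) \in SEC\_COUNT$ but $x^\sim(E) \notin SEC\_COUNT$. The only differences are cosmetic and harmless: the paper propagates $\no$ by writing the triples $(v_j,w_j,view_j)$ into a shared array so that every process applies the property-(4) check to all operations it sees, whereas you check only your own reads and invoke a Lemma~\ref{lemma:basic-wad}-style wrapper; and your parenthetical ``safe bound on concurrent increments'' should be dropped (as your own step (iii) does), since comparing against the bare view count is both sound for $x^\sim(E)$ and complete for $x(E)$, while any added slack would break completeness.
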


\begin{proof}
Consider algorithm $V$ that appears in Figure~\ref{fig-algo-sec-count}.
Basically, $V$ is a modification of the algorithm in Figure~\ref{fig-algo-wec-count},
that, with the help of the views,
additionally tests if the current behavior of $\advt$ satisfies the fourth property in the definition 
of the strong eventual counter, at the end of the block in its Line~\ref{SEC6}. 
The new code appears in blue.

\begin{figure}[htb]
\centering{ \fbox{
\begin{minipage}[t]{150mm}
\scriptsize
\renewcommand{\baselinestretch}{2.5} \resetline
\begin{tabbing}
aaaa\=aaa\=aaa\=aaa\=aaa\=aaa\=aaa\=\kill 

{\bf Shared variables in $V$:}\\

$~~$  $INCS[1,\hdots,n]:$ shared array of read/write registers, each initialized to $0$\\

$~~$  $M[1, \hdots, n]:$ shared array of read/write registers, each initialized to $\emptyset$\\ \\

{\bf Local algorithm $V_i$ for process $p_i$:}\\

 \> $prev\_read_i \leftarrow 0$\\
 
 \> $prev\_incs_i \leftarrow 0$\\
 
  \> $count_i \leftarrow 0$\\
 
 \> $flag_i \leftarrow \false$\\

 \> $s_i = \emptyset$\\

 \> {\bf while} {\sf true} {\bf do}\\

\line{SEC1} \>\> Non-deterministically pick an invocation symbol $v_i \in \Sigma^<_i$\\

\line{SEC2} \>\> {\bf if} $v_i$ is an invocation to $inc()$ {\bf then} \\

 \>\>\> $count \leftarrow count + 1$\\
 
  \>\>\> $INCS[i].write(count)$ \\

\line{SEC3} \>\> Send $v_i$ to the adversary $\advt$\\

\line{SEC4} \>\> Receive $(w_i, view_i)$, $w_i \in \Sigma^>_i$, from the adversary $\advt$\\

\line{SEC5} \>\> $snap_i \leftarrow \snap(INCS)$\\

\>\> $curr\_incs_i \leftarrow snap_i[1] + \hdots + snap_i[n]$\\

\>\> {\bf if} $w_i$ is a response to $read()$ {\bf then} $curr\_read_i \leftarrow$ returned value in $w_i$\\

\>\> \color{blue} $s_i \leftarrow s_i \cup \{ (v_i, w_i, view_i) \} $\\

 \>\> \color{blue} $M[i].write(s_i)$\\

\>\> \color{blue} $snap'_i \leftarrow \snap(M)$\\

\line{SEC6} \>\> {\bf if} $flag_i == \true$ {\bf then}\\ 

\>\>\> {\bf report} $\no$\\

\>\> {\bf elseif} $curr\_read_i < snap_i[i] \, \vee \, curr\_read_i < prev\_read_i$ {\bf then}\\

\>\>\> $flag_i \leftarrow \true$\\

\>\>\> {\bf report} $\no$\\

\>\> {\bf elseif} $curr\_read_i \, != curr\_incs_i \, \vee \, prev\_incs_i < curr\_incs_i$ {\bf then}\\

\>\>\> {\bf report} $\no$\\

\>\> \color{blue} {\bf elseif} $\exists (v_j,w_j,view_j)$ in $snap'_i$ such that $v_j$ is an invocation to $read()$ and\\ 
\>\> \> \> \color{blue} the returned value in $w_j$ is larger than the number of invocations to $inc()$ in $view_j$ {\bf then}\\

\>\>\> \color{blue} {\bf report} $\no$\\

\>\> {\bf else}\\ 

\>\>\> {\bf report} $\yes$\\

 \>\> $prev\_read_i \leftarrow curr\_read_i$\\
 
 \>\> $prev\_incs_i \leftarrow curr\_incs_i$   

\end{tabbing}
\end{minipage}
  }
\caption{Predictively weakly deciding $SEC\_COUNT$.}
\label{fig-algo-sec-count}
}
\end{figure}

Let $E$ be any execution of $V$. Consider first the case $x(E) \notin SEC\_COUNT$.
As the proof of Lemma~\ref{fig-algo-wec-count} shows, if $x(E)$ does not satisfy one
of the first three properties that define the strong eventual counter, then
every process reports $\no$ infinitely often in $E$.
If $x(E)$ satisfies those properties but fails to satisfy the fourth one,
then again we can argue that all process reports $\no$ infinitely often in $E$.
The reason is that $x^\sim(E)$ does not satisfy the fourth property either
(namely, $x^\sim(E) \notin SEC\_COUNT$),
due to the fact that $x^\sim(E)$ is $x(E)$ with some operations ``stretched'' (see Appendix~\ref{app:construction}),
and thus $x^\sim(E)$ preserves the precedence relations in $x(E)$, by Theorem~\ref{theo:prev-work}(1),
but some operations that are concurrent in $x(E)$ are not concurrent in $x^\sim(E)$.
Since $E$ is fair, eventually a $read$ operation failing to satisfy the fourth property
is written in $M$, and thus eventually every process infinitely often reports $\no$,
due to the fourth condition in Line~\ref{SEC6}.

Now, consider the case $x(E) \in SEC\_COUNT$.
We have two sub-cases, depending whether $x^\sim(E)$ is in $SEC\_COUNT$ or not.
If $x^\sim(E) \notin SEC\_COUNT$, then, as argued above, every process reports $\no$ infinitely often,
and Theorem~\ref{theo:prev-work}(2) shows that there is an execution $E'$ of $V$ such that
$E \equiv E'$ and $x(E') = x^\sim(E)$.
In the second sub-case, $x^\sim(E) \in SEC\_COUNT$, the proof of Lemma~\ref{fig-algo-wec-count} shows
that the first three conditions in Line~\ref{SEC6} hold only finitely any times.
As for the fourth condition in the same line, it never holds in the execution as 
$x^\sim(E) \in SEC\_COUNT$. Therefore, every process reports $\no$ only finitely many times. 

We conclude that $V$ predictively-weakly decides $SEC\_COUNT$, and thus $SEC\_COUNT~\in~\pwd$.
\end{proof}

Therefore, we have the following separation result, where the inclusion of $\psd$ in $\pwd$ directly follows
from the definitions.

\begin{theorem}
\label{theo:psd-in-pwd}
$\psd \subset \pwd$.
\end{theorem}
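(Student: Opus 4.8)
The plan is to establish the strict containment $\psd \subset \pwd$ in two parts: the inclusion and its strictness.

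For the inclusion $\psd \subseteq \pwd$, I would argue directly from the definitions. Suppose $L \in \psd$, witnessed by an algorithm $V$ that predictively strongly decides $L$. I claim the same algorithm $V$ predictively weakly decides $L$. Take any execution $E$. If $x(E) \notin L$, then predictive strong decidability gives $\exists p,\ \no(E,p) > 0$; but since $E$ is failure-free and fair, if some process reports $\no$ at least once it can, after applying the stability transformation of Lemma~\ref{lemma:basic-sd} adapted to $\advt$ (or simply by re-examining $V$'s behavior), report $\no$ infinitely often — more carefully, I would first replace $V$ by the stabilized algorithm guaranteed by (an $\advt$-analogue of) Lemma~\ref{lemma:basic-sd}, so that ``$\exists p, \no(E,p)>0$'' upgrades to ``$\forall p, \no(E,p)=\infty$''. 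That matches the second clause in the definition of $\pwd$. If $x(E) \in L$, predictive strong decidability gives either $\forall p,\ \no(E,p)=0$ (which implies $\forall p,\ \no(E,p) < \infty$, the first disjunct of the $\pwd$ requirement) or the predictive-explanation clause: $\exists p,\ \no(E,p)>0$, $x^\sim(E) \notin L$, and an execution $E' \equiv E$ with $x(E') = x^\sim(E)$. After stabilization the ``$\no(E,p)>0$'' becomes ``$\no(E,p)=\infty$'', and the remaining conjuncts are exactly what $\pwd$ demands. Hence $V$ (stabilized) predictively weakly decides $L$, so $L \in \pwd$. The one technical point to check is that the stability transformation of Lemma~\ref{lemma:basic-sd} is compatible with the timed adversary $\advt$ and preserves the sketch $x^\sim(E)$ and the indistinguishability witness $E'$; this holds because the transformation only adds read/write bookkeeping after the interaction with $\advt$, which cannot change $x(E)$, $x^\sim(E)$, or the indistinguishability class.

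For strictness, I would invoke the separating language already identified in the paper: $SEC\_COUNT \in \pwd$ by Lemma~\ref{lemma:couner-in-pwd}, while $SEC\_COUNT \notin \psd$ by Lemma~\ref{lemma:no-psd-evc}. Together with the inclusion just proved, this yields $\psd \subsetneq \pwd$, which is the theorem.

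The main obstacle is the inclusion direction, specifically making rigorous the ``stabilization'' step for the predictive setting: the original Lemma~\ref{lemma:basic-sd} is stated for $\sd$ against $\adv$, not for $\psd$ against $\advt$, so I would need to spell out (or cite) that the same $FLAG$-based wrapper works verbatim when the underlying service is $\advt$, and crucially that in an execution $E_W$ of the wrapped algorithm $W$, the corresponding stripped execution $E_V$ of $V$ has $x(E_W) = x(E_V)$, $x^\sim(E_W) = x^\sim(E_V)$, and shares the same indistinguishability witnesses — so that the predictive-explanation clause transfers from $V$ to $W$. Once that bookkeeping is in place, the rest is a routine unwinding of the two definitions.
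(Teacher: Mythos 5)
Your proof is correct and follows essentially the same route as the paper: strictness comes from $SEC\_COUNT \in \pwd$ (Lemma~\ref{lemma:couner-in-pwd}) together with $SEC\_COUNT \notin \psd$ (Lemma~\ref{lemma:no-psd-evc}), and the rest is the inclusion $\psd \subseteq \pwd$. The paper merely asserts that this inclusion follows directly from the definitions, whereas you correctly observe that the same algorithm need not satisfy the $\pwd$ clause for $x(E)\notin L$ (which demands $\forall p,\ \no(E,p)=\infty$ rather than $\exists p,\ \no(E,p)>0$), and your $FLAG$-based stabilization wrapper in the style of Lemma~\ref{lemma:basic-sd} --- which only adds read/write bookkeeping after the interaction with $\advt$ and hence preserves $x(E)$, $x^\sim(E)$ and the indistinguishability witness --- is exactly the routine argument needed to make the paper's terse claim rigorous.
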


Finally, we show that some languages remain undecidable 
even if we consider  predictive weak decidability.
Intuitively, the next proof shows that if there is an algorithm that predictively-weakly decides $EC\_LED$,
then one can inductively construct an execution $E$ of the algorithm such that $x(E) \in EC\_LED$
and every process reports infinitely many times $\no$ in $E$ with $x(E) = x^\sim(E)$, which
is a contradiction as the relaxation in the definition of predictive weak decidability can only happen if $x^\sim(E) \notin EC\_LED$.

\begin{lemma}
\label{lemma:counter-no-pwd}
$EC\_LED \notin \pwd$.
\end{lemma}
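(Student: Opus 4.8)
The plan is to argue by contradiction, reusing the diagonalization template from the proof of Lemma~\ref{lemma:no-psd-evc}. Assume that $V$ predictively weakly decides $EC\_LED$ against $\advt$. I will build a \emph{tight} execution $E$ of $V$ --- one in which, exactly as in Lemma~\ref{lemma:no-psd-evc}, every loop iteration runs its pre- and post-interaction blocks atomically, so that $x(E) = x^\sim(E)$ --- such that $x(E) \in EC\_LED$ but infinitely many $\no$ reports occur in $E$, hence some process reports $\no$ infinitely often. These two facts contradict predictive weak decidability: since $x(E) \in EC\_LED$, the definition forces either every process to report $\no$ finitely often, which is false, or some process to report $\no$ infinitely often while $x^\sim(E) \notin EC\_LED$, which is also false since $x^\sim(E) = x(E) \in EC\_LED$.

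The execution $E$ is the increasing union of finite tight prefixes $F_0 \subseteq F_1 \subseteq \cdots$ that I keep \emph{clean}: writing $c_k$ for the number of records $r_1, \ldots, r_{c_k}$ appended so far (all by $p_1$), in $F_k$ the last operation of every process is a $get$ that returned the full list $[r_1, \ldots, r_{c_k}]$; initially $F_0$ is empty and $c_0 = 0$. In stage $k+1$ I first extend $F_k$ by letting $p_1$ invoke and complete $append(r_{c_k+1})$ for a fresh record. I then consider the $\omega$-word $\hat x$ that continues this prefix forever with $p_1$ repeatedly returning $[r_1, \ldots, r_{c_k+1}]$ from $get$ and each other process repeatedly returning the \emph{stale} list $[r_1, \ldots, r_{c_k}]$: clause~(2) of eventual consistency is violated already for the prefix ending with $p_1$'s response to $append(r_{c_k+1})$, since the $get$s of the other processes never return a list containing $r_{c_k+1}$; hence $\hat x \notin EC\_LED$. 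By Claim~\ref{claim:execs} adapted to tight executions, $V$ has a tight execution with input $\hat x$, and since $\hat x \notin EC\_LED$ predictive weak decidability makes every process report $\no$ infinitely often in it. Truncating that execution at the first moment where the number of $\no$ reports so far strictly exceeds the (finite) number in $F_k$ yields $F_{k+1}'$; I finally \emph{heal} by letting each of $p_2, \ldots, p_n$ perform one $get$ returning $[r_1, \ldots, r_{c_k+1}]$, obtaining a clean $F_{k+1}$ with $c_{k+1} = c_k+1$ and strictly more $\no$ reports than $F_k$.

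It then remains to verify that $x(E) \in EC\_LED$. Clause~(2) holds: every record of a finite prefix $\alpha$ equals some $r_m$, and from stage $m$'s healing on, every process's $get$s return a list containing $r_1, \ldots, r_m$ --- the stale lists of later stages contain $r_m$ as well since $c_k \ge m$ for $k \ge m$ --- so eventually every $get$ of $x(E)$ contains all records of $\alpha$. Clause~(1) holds because \emph{every} $get$ in $x(E)$ returns a list $[r_1, \ldots, r_m]$, i.e.\ a prefix of the single sequence $r_1, r_2, \ldots$; consequently the returned lists form a chain under the prefix order and are non-decreasing along each process, so any finite prefix of $x(E)$ can be completed (e.g.\ return the full current list for the possibly-pending $get$) and then permuted into a valid sequential ledger history by ordering the appends by index and inserting each $get$ right after the last record it returns --- and this permutation even respects process order. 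Hence $E$ is a fair, failure-free, tight execution with $x(E) \in EC\_LED$ and infinitely many $\no$ reports, the desired contradiction.

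The main obstacle is the tension built into the construction: forcing a fresh $\no$ in each stage requires making the current prefix look transiently inconsistent (a just-appended record ignored by recent $get$s), whereas the \emph{limit} word must still belong to $EC\_LED$. This is precisely what dictates the healing steps and the invariant that all returned lists are prefixes of one common append sequence; checking clause~(1) of eventual consistency (and that process order is respected) from that invariant, and checking that each stage's continuation $\hat x$ genuinely lies outside $EC\_LED$ so that predictive weak decidability really does force infinitely many $\no$ reports on it, are the points that need the most care.
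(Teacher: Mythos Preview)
Your proposal is correct and follows essentially the same strategy as the paper: build a \emph{tight} execution (so that $x(E)=x^\sim(E)$) as a limit of finite prefixes, where at each stage a fresh $append$ followed by a continuation with stale $get$s falls outside $EC\_LED$ and therefore forces new $\no$ reports, after which a healing step restores all processes to the full list; the limit then lies in $EC\_LED$ while accumulating infinitely many $\no$s, contradicting predictive weak decidability.

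The differences are presentational rather than conceptual. The paper restricts to $n=2$ and inserts an extra intermediate step at each stage: after healing it uses the assumed correctness of $V$ on the healed (correct) execution to locate a prefix beyond which no further $\no$ is reported, and only then appends the next record and the bad continuation. Your version skips that detour---you simply count total $\no$ reports and truncate once the count strictly increases---which is cleaner and works for general $n$. You are also more explicit than the paper about why the limit word satisfies both clauses of $EC\_LED$ (the paper asserts ``the lemma follows'' without spelling this out). One cosmetic point: your ``clean'' invariant asks that \emph{every} process's last operation be a $get$ returning the full list, but your healing step only treats $p_2,\ldots,p_n$; if the truncation happens immediately after $p_1$'s $append$, then $p_1$'s last operation is the $append$. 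This is harmless---none of your later arguments actually use cleanliness of $p_1$---but you may want to either heal $p_1$ as well or relax the invariant accordingly.
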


\begin{proof}
For $n=2$, by contradiction, suppose that there is an algorithm $V$ that predictively weakly decides $EC\_LED$.
The proof consists in inductively constructing an execution $F^\infty$ of $V$ such that 
$x(F^\infty) \in EC\_LED$, $x(F^\infty) = x^\sim(F^\infty)$ and the two processes reports infinitely many times $\no$ in~$F^\infty$,
which contradicts the existence of $V$.

Consider the word $x$ where $p_1$ executes $append(a)$, with $a \in U$,
and then $p_2$ and $p_1$ alternatively execute infinitely many $get()$ operations returning~$\epsilon$:
$$<^a_1 \hbox{ } >^{\,}_1 \hbox{    }  <^{\,}_2 \hbox{ }  >^\epsilon_2 \hbox{    } <^{\,}_1 \hbox{ }  >^\epsilon_1 \hbox{    }  
<^{\,}_2 \hbox{ }  >^\epsilon_2 \hbox{    }  <^{\,}_1 \hbox{ }  >^\epsilon_1 \hbox{    }  <^{\,}_2 \hbox{ }  >^\epsilon_2 \hbox{    } \hdots$$

Clearly, $x$ does not satisfy the second property in the definition of $EC\_LED$ (the first property is satisfied, 
however, as for every finte prefix one can place $<^a_1 \hbox{} >^{\,}_1 $
at the end to obtain a valid sequential ledger history).
Hence, $x \notin EC\_LED$. 

As explained in the proof of Lemma~\ref{lemma:no-psd-evc},
the proof of Claim~\ref{claim:execs} can be easily modified
to argue that there is an execution $E$ of $V$ such that $x(E) = x$,
where each process
atomically executes Lines~\ref{Gen1}-\ref{Gen3} of $V$ together with Lines~\ref{AT1}-\ref{AT3} of $\advt$,
and atomically executes Lines~\ref{Gen4}-\ref{Gen6} together with  Lines~\ref{AT4}-\ref{AT7} of~$\advt$.
As mentioned there, this type of executions have the property that $x(E) = x^\sim(E)$. 

Let $E'$ be any finite prefix of $E$ that ends with $p_2$ executing 
its block of code corresponding to Line~\ref{Gen6}, and each process reports $\no$ at least once in $E'$.
Such prefix exists as $V$ is assumed correct and $x \notin EC\_LED$.
Let us consider the finite input $x(E')$ in prefix $E'$ (namely, the projection of symbols in $\Sigma$).
Note that $x(E')$ is a finite prefix of $x$ that ends with $<^{\,}_2 \hbox{}  >^\epsilon_2$.
Consider the $\omega$-word $x^1$:
$$x(E') \hbox{    }  <^{\,}_1 \hbox{ }  >^{a}_1 \hbox{    }  
<^{\,}_2 \hbox{ }  >^{a}_2 \hbox{    }  <^{\,}_1 \hbox{ }  >^{a}_1 \hbox{    }  <^{\,}_2 \hbox{ }  >^{a}_2 \hbox{    } \hdots$$

Observe that $x^1 \in WEC\_COUNT$. By the modified proof of Claim~\ref{claim:execs},
there is an execution $F^1$ of $V$ such that $x(F^1) = x^1$,
and moreover, $E'$ is prefix of $F^1$ and $x(F^1) = x^\sim(F^1)$.
Thus, we have that $x(F^1) \in EC\_LED$, $x(F^1) = x^\sim(F^1)$ and 
$F^1$ has a finite prefix, $E'$, where each process reports $\no$ at least one time.
Our next goal is to argue that we can modify $F^1$ to apply the previous step
to obtain a similar execution $F^2$ with a finite prefix where each process reports $\no$ \emph{at least two times},
which shows that the construction can be taken to the infinity to obtain an execution $F^\infty$ with the desired properties.

Now, since $V$ is assumed correct and $x(F^1) = x^\sim(F^1)$, it must be that every process 
reports $\no$ finitely many times  in $F^1$.
Let $F'$ be any finite prefix of $F^1$ that has $E'$ as a prefix,
ends with $p_2$ executing 
its block of code corresponding to Line~\ref{Gen6}, and no process reports $\no$ in the suffix after $F'$.
Hence, every process reports $\no$ at least once in $F'$.
Note that $x(F')$ ends with  $<^{\,}_2 \hbox{}  >^a_2$.
For $b \in U$ that does not appear in an invocation to $append$ in $x(F')$,
consider the $\omega$-word $x'$:
$$x(H) <^b_1 \hbox{ } >^{\,}_1 \hbox{    }  <^{\,}_2 \hbox{ }  >^a_2 \hbox{    } <^{\,}_1 \hbox{ }  >^a_1 \hbox{    }  
<^{\,}_2 \hbox{ }  >^a_2 \hbox{    }  <^{\,}_1 \hbox{ }  >^a_1 \hbox{    }  <^{\,}_2 \hbox{ }  >^a_2 \hbox{    } \hdots$$

Note that $x' \notin EC\_LED$. As above, there is an execution $H$ of $V$ such that $x(H) = x'$,
$F'$ is prefix of $H$ and $x(H) = x^\sim(H)$.
We have that every process reports $\no$  infinitely many times in $H$.
We can now repeat the first step of the construction, taking $E'$ as
any finite prefix of $H$ that ends with $p_2$ executing 
its block of code corresponding to Line~\ref{Gen6}, and each process reports $\no$ \emph{at least two times} in $E'$.
The lemma follows. 
\end{proof}

\section{Final Remarks}

For concreteness, we focused on decidability definitions using only $\yes/\no$ values.
However, it is useful to consider definitions with more than two values. 
For example, we can define a 3-value variant of $\wde$ where processes can report $\yes$, $\no$ and $\maybe$,
requiring that if the current behavior of $\adv$ is in the language, then no process reports $\no$ ever,
and otherwise no process reports $\yes$ ever. Thus, a process is allowed to report $\maybe$ 
if currently it does not have conclusive information about the current behavior of $\adv$,
and if a process ever reports $\yes$/$\no$, it is sure that the current behavior is/is not correct. 
This 3-value decidability definition is reminiscent to the 3-value LTL 
that has been used in the past in centralized runtime verification~\cite{BLS11}.
It is easy to adapt the algorithm in Figure~\ref{fig-algo-wec-count} to argue that 
$WEC\_COUNT$ is 3-value $\wde$ decidable (it suffices to change $\yes$ with $\maybe$ in the last block of code). 
A similar 3-value variant of $\pwd$ can be defined, and shown that $SEC\_COUNT$ belongs to this class.
There have been proposed monitors
that allows several report values, with lower bounds shown on the number of report values needed to 
runtime verify some properties (e.g.,~\cite{BLS11,BLS10,BFRR22,FRT20}). 
This is a line of research worth exploring in our setting.

Other questions that we believe are interesting to study are the following.
We conjecture that only \emph{trivial} languages belong to $\sd$. Triviality means
that the languages define distributed problems that can be implemented with no communication
among processes.
We conjecture that the complement of $EC\_LED$ is in $\pwd$. A related more general question is
if there are language that neither them nor their complement belong to $\pwd$.
The timed adversary $\advt$ was obtained using only read/write registers. 
However, it could be the case that more powerful primitives allow the existence of \emph{strictly weaker} 
timed adversaries that make some languages $\psd$- or $\pwd$-decidable (for example, $EC\_LED$).
It is also interesting to explore if timing assumption allows to verify real-time sensitive properties
against $\adv$.

\appendix

\section{Three non real-time oblivious languages}
\label{app:languages}

Consider the following history:
$$x = <^1_1 \hbox{  } >^{}_1 \, <^2_2 \hbox{  } >^{}_2 \,  \hdots \, <^n_n \hbox{  } >^{}_n \, <^{}_n \hbox{  } >^{1 \cdot 2 \cdots n}_n  
\hbox{ } \hdots \hbox{ } <^1_1 \hbox{  } >^{}_1 \, <^2_2 \hbox{  } >^{}_2 \,  \hdots \, <^n_n \hbox{  } >^{}_n \, <^{}_n \hbox{  } >^{1 \cdot 2 \cdots n \cdot 1 \cdot 2 \cdots n}_n\hdots$$

It is not difficult to verify that $x$ is in $LIN\_LED$, $SC\_LED$ and $EC\_LED$.
Consider the prefix 
$\alpha = <^1_1 \hbox{  } >^{}_1 \, <^2_2 \hbox{  } >^{}_2 \,  \hdots \, <^n_n \hbox{  } >^{}_n \, <^{}_n \hbox{  } >^{1 \cdot 2 \cdots n}_n$
of $x$. Observe that the shuffle 
$\alpha' = <^2_2 \hbox{  } >^{}_2 \,  \hdots \, <^n_n \hbox{  } >^{}_n \, <^{}_n \hbox{  } >^{1 \cdot 2 \cdots n}_n  <^1_1 \hbox{  } >^{}_1$
is not linearizable. Also, the prefix $<^2_2 \hbox{  } >^{}_2 \,  \hdots \, <^n_n \hbox{  } >^{}_n \, <^{}_n \hbox{  } >^{1 \cdot 2 \cdots n}_n$ of $\alpha'$
is not sequentially content, and does not follow the first property in the definition of $EC\_LED$.
Hence the history
$x' = \alpha' \, <^1_1 \hbox{  } >^{}_1 \, <^2_2 \hbox{  } >^{}_2 \,  \hdots \, <^n_n \hbox{  } >^{}_n \, <^{}_n \hbox{  } >^{1 \cdot 2 \cdots n \cdot 1 \cdot 2 \cdots n}_n\hdots$
is not in $LIN\_LED$, $SC\_LED$ and $EC\_LED$, which shows that the languages are not real-time oblivious.

\section{From views to histories}
\label{app:construction}

This section explains the construction in~\cite{podc23}[Section 7] that, given an execution $E$ of
an algorithm $V$ interacting with $\advt$, produces a concurrent history $x^\sim(E)$ from the views of operations in $x(E)$.

The construction is based on a property of views directly implied by the snapshot operation:
the views of any two operations are comparable, namely, either they are equal or 
one of them strictly contains the other. 
Recall that it is assumed that each invocation symbol appears at most once in $E$.
The construction is as follows. 
All \emph{distinct} views in $x(E)$ are ordered 
in ascending containment order: $views_1 \subset view_2 \subset \hdots \subset view_\ell \subset view_{\ell+1} \subset \hdots$.
Let $\sigma_0$ denote $\emptyset$.
For each $k = 1, 2, \hdots, $ (in ascending order), $x^\sim(E)$ is iteratively obtained following the
next two steps in order:
\begin{enumerate}
\item For each invocation symbol $v \in view_k \setminus view_{k-1}$,
the invocation $v$ is appended to $x^\sim(E)$; the invocations are appended in any arbitrary order.

\item For each operation $(v,w)$ of $x(E)$ whose view is $view_k$ (i.e. $\advt$ replies $(w,view_k)$ to invocation $v$),
the response $w$ is appended to $x^\sim(E)$; the responses are appended in any arbitrary order.
\end{enumerate}

In the two steps of the construction, 
either a set of invocations or responses are placed in some arbitrary sequential order.
For any of these orders, the resulting history $x^\sim(E)$ has the same 
precedence operation relations. 
Thus, in fact, $x^\sim(E)$ denotes an equivalence class of histories.

In Figure~\ref{fig:execution}, the first iteration of the construction appends to $x^\sim(E)$ first invocations $\{$ and $[$, 
and then responses $]$ and $\}$, as operations $\{ \, \, \}$ and $[ \, \, ]$ have view $\{ \, [$; 
the second iteration appends invocation $\lceil$ and then response~$\rceil$,
as operation $\lceil \, \, \rceil$ has view $\{ \, [ \, \lceil$;
and the third iteration appends invocation $\langle$ and then response $\rangle$.

By construction, the operations that precede or are concurrent to an operation $op$ in $x^\sim(E)$,
are those whose invocations appear in the view of $op$ in $E$.

\begin{acks}
  Armando Castañeda is supported by the research project DGAPA-PAPIIT IN108723.
  Gilde Valeria Rodríguez is the recipient of a PhD fellowship of SECIHTI.
\end{acks}

\bibliographystyle{abbrv}
\bibliography{biblio}

\begin{thebibliography}{10}

\bibitem{AADGMS93}
Y.~Afek, H.~Attiya, D.~Dolev, E.~Gafni, M.~Merritt, and N.~Shavit.
\newblock Atomic snapshots of shared memory.
\newblock {\em J. {ACM}}, 40(4):873--890, 1993.

\bibitem{AlmeidaB19}
P.~S. Almeida and C.~Baquero.
\newblock Scalable eventually consistent counters over unreliable networks.
\newblock {\em Distributed Comput.}, 32(1):69--89, 2019.

\bibitem{ledger}
A.~F. Anta, K.~M. Konwar, C.~Georgiou, and N.~C. Nicolaou.
\newblock Formalizing and implementing distributed ledger objects.
\newblock {\em {SIGACT} News}, 49(2):58--76, 2018.

\bibitem{AFIMP20}
H.~Arfaoui, P.~Fraigniaud, D.~Ilcinkas, F.~Mathieu, and A.~Pelc.
\newblock Deciding and verifying network properties locally with few output
  bits.
\newblock {\em Distributed Comput.}, 33(2):169--187, 2020.

\bibitem{ABD95}
H.~Attiya, A.~Bar{-}Noy, and D.~Dolev.
\newblock Sharing memory robustly in message-passing systems.
\newblock {\em J. {ACM}}, 42(1):124--142, 1995.

\bibitem{BF18}
E.~Bartocci and Y.~Falcone, editors.
\newblock {\em Lectures on Runtime Verification - Introductory and Advanced
  Topics}, volume 10457 of {\em Lecture Notes in Computer Science}.
\newblock Springer, 2018.

\bibitem{BGKS20}
D.~A. Basin, M.~Gras, S.~Krstic, and J.~Schneider.
\newblock Scalable online monitoring of distributed systems.
\newblock In J.~Deshmukh and D.~Nickovic, editors, {\em Runtime Verification -
  20th International Conference, {RV} 2020, Los Angeles, CA, USA, October 6-9,
  2020, Proceedings}, volume 12399 of {\em Lecture Notes in Computer Science},
  pages 197--220. Springer, 2020.

\bibitem{BF16}
A.~Bauer and Y.~Falcone.
\newblock Decentralised {LTL} monitoring.
\newblock {\em Formal Methods Syst. Des.}, 48(1-2):46--93, 2016.

\bibitem{BLS10}
A.~Bauer, M.~Leucker, and C.~Schallhart.
\newblock Comparing {LTL} semantics for runtime verification.
\newblock {\em J. Log. Comput.}, 20(3):651--674, 2010.

\bibitem{BLS11}
A.~Bauer, M.~Leucker, and C.~Schallhart.
\newblock Runtime verification for {LTL} and {TLTL}.
\newblock {\em {ACM} Trans. Softw. Eng. Methodol.}, 20(4):14:1--14:64, 2011.

\bibitem{BFRR22}
B.~Bonakdarpour, P.~Fraigniaud, S.~Rajsbaum, D.~A. Rosenblueth, and C.~Travers.
\newblock Decentralized asynchronous crash-resilient runtime verification.
\newblock {\em J. {ACM}}, 69(5):34:1--34:31, 2022.

\bibitem{darv}
B.~Bonakdarpour, P.~Fraigniaud, S.~Rajsbaum, D.~A. Rosenblueth, and C.~Travers.
\newblock Decentralized asynchronous crash-resilient runtime verification.
\newblock {\em J. {ACM}}, 69(5):34:1--34:31, 2022.

\bibitem{BFRT16}
B.~Bonakdarpour, P.~Fraigniaud, S.~Rajsbaum, and C.~Travers.
\newblock Challenges in fault-tolerant distributed runtime verification.
\newblock In T.~Margaria and B.~Steffen, editors, {\em Leveraging Applications
  of Formal Methods, Verification and Validation: Discussion, Dissemination,
  Applications - 7th International Symposium, ISoLA 2016, Imperial, Corfu,
  Greece, October 10-14, 2016, Proceedings, Part {II}}, volume 9953 of {\em
  Lecture Notes in Computer Science}, pages 363--370, 2016.

\bibitem{buchi}
J.~R. B{\"u}chi.
\newblock On a decision method in restricted second order arithmetic.
\newblock 1990.

\bibitem{CRR18}
A.~Casta{\~{n}}eda, S.~Rajsbaum, and M.~Raynal.
\newblock Unifying concurrent objects and distributed tasks:
  Interval-linearizability.
\newblock {\em J. {ACM}}, 65(6):45:1--45:42, 2018.

\bibitem{CRR23}
A.~Casta{\~{n}}eda, S.~Rajsbaum, and M.~Raynal.
\newblock A linearizability-based hierarchy for concurrent specifications.
\newblock {\em Commun. {ACM}}, 66(1):86--97, 2023.

\bibitem{podc23}
A.~Casta{\~{n}}eda and G.~V. Rodr{\'{\i}}guez.
\newblock Asynchronous wait-free runtime verification and enforcement of
  linearizability.
\newblock In R.~Oshman, A.~Nolin, M.~M. Halld{\'{o}}rsson, and A.~Balliu,
  editors, {\em Proceedings of the 2023 {ACM} Symposium on Principles of
  Distributed Computing, {PODC} 2023, Orlando, FL, USA, June 19-23, 2023},
  pages 90--101. {ACM}, 2023.

\bibitem{CGNM13}
H.~Chauhan, V.~K. Garg, A.~Natarajan, and N.~Mittal.
\newblock A distributed abstraction algorithm for online predicate detection.
\newblock In {\em {IEEE} 32nd Symposium on Reliable Distributed Systems, {SRDS}
  2013, Braga, Portugal, 1-3 October 2013}, pages 101--110. {IEEE} Computer
  Society, 2013.

\bibitem{E-HF18}
A.~El{-}Hokayem and Y.~Falcone.
\newblock Can we monitor all multithreaded programs?
\newblock In C.~Colombo and M.~Leucker, editors, {\em Runtime Verification -
  18th International Conference, {RV} 2018, Limassol, Cyprus, November 10-13,
  2018, Proceedings}, volume 11237 of {\em Lecture Notes in Computer Science},
  pages 64--89. Springer, 2018.

\bibitem{ETQ05}
T.~Elmas, S.~Tasiran, and S.~Qadeer.
\newblock {VYRD:} verifying concurrent programs by runtime refinement-violation
  detection.
\newblock In V.~Sarkar and M.~W. Hall, editors, {\em Proceedings of the {ACM}
  {SIGPLAN} 2005 Conference on Programming Language Design and Implementation,
  Chicago, IL, USA, June 12-15, 2005}, pages 27--37. {ACM}, 2005.

\bibitem{FHR13}
Y.~Falcone, K.~Havelund, and G.~Reger.
\newblock A tutorial on runtime verification.
\newblock In M.~Broy, D.~A. Peled, and G.~Kalus, editors, {\em Engineering
  Dependable Software Systems}, volume~34 of {\em {NATO} Science for Peace and
  Security Series, {D:} Information and Communication Security}, pages
  141--175. {IOS} Press, 2013.

\bibitem{FLP85}
M.~J. Fischer, N.~A. Lynch, and M.~Paterson.
\newblock Impossibility of distributed consensus with one faulty process.
\newblock {\em J. {ACM}}, 32(2):374--382, 1985.

\bibitem{FFY08}
C.~Flanagan, S.~N. Freund, and J.~Yi.
\newblock Velodrome: a sound and complete dynamic atomicity checker for
  multithreaded programs.
\newblock In R.~Gupta and S.~P. Amarasinghe, editors, {\em Proceedings of the
  {ACM} {SIGPLAN} 2008 Conference on Programming Language Design and
  Implementation, Tucson, AZ, USA, June 7-13, 2008}, pages 293--303. {ACM},
  2008.

\bibitem{FRT13}
P.~Fraigniaud, S.~Rajsbaum, and C.~Travers.
\newblock Locality and checkability in wait-free computing.
\newblock {\em Distributed Comput.}, 26(4):223--242, 2013.

\bibitem{FRT20}
P.~Fraigniaud, S.~Rajsbaum, and C.~Travers.
\newblock A lower bound on the number of opinions needed for fault-tolerant
  decentralized run-time monitoring.
\newblock {\em J. Appl. Comput. Topol.}, 4(1):141--179, 2020.

\bibitem{opinions}
P.~Fraigniaud, S.~Rajsbaum, and C.~Travers.
\newblock A lower bound on the number of opinions needed for fault-tolerant
  decentralized run-time monitoring.
\newblock {\em J. Appl. Comput. Topol.}, 4(1):141--179, 2020.

\bibitem{GXJLSBH22}
R.~Ganguly, Y.~Xue, A.~Jonckheere, P.~Ljung, B.~Schornstein, B.~Bonakdarpour,
  and M.~Herlihy.
\newblock Distributed runtime verification of metric temporal properties for
  cross-chain protocols.
\newblock In {\em 42nd {IEEE} International Conference on Distributed Computing
  Systems, {ICDCS} 2022, Bologna, Italy, July 10-13, 2022}, pages 23--33.
  {IEEE}, 2022.

\bibitem{GLM18}
{\'{E}}.~Goubault, J.~Ledent, and S.~Mimram.
\newblock Concurrent specifications beyond linearizability.
\newblock In J.~Cao, F.~Ellen, L.~Rodrigues, and B.~Ferreira, editors, {\em
  22nd International Conference on Principles of Distributed Systems, {OPODIS}
  2018, December 17-19, 2018, Hong Kong, China}, volume 125 of {\em LIPIcs},
  pages 28:1--28:16. Schloss Dagstuhl - Leibniz-Zentrum f{\"{u}}r Informatik,
  2018.

\bibitem{HG05}
K.~Havelund and A.~Goldberg.
\newblock Verify your runs.
\newblock In B.~Meyer and J.~Woodcock, editors, {\em Verified Software:
  Theories, Tools, Experiments, First {IFIP} {TC} 2/WG 2.3 Conference, {VSTTE}
  2005, Zurich, Switzerland, October 10-13, 2005, Revised Selected Papers and
  Discussions}, volume 4171 of {\em Lecture Notes in Computer Science}, pages
  374--383. Springer, 2005.

\bibitem{H91}
M.~Herlihy.
\newblock Wait-free synchronization.
\newblock {\em {ACM} Trans. Program. Lang. Syst.}, 13(1):124--149, 1991.

\bibitem{HW90}
M.~Herlihy and J.~M. Wing.
\newblock Linearizability: {A} correctness condition for concurrent objects.
\newblock {\em {ACM} Trans. Program. Lang. Syst.}, 12(3):463--492, 1990.

\bibitem{thesis-valeria}
G.~V.~R. Jim\'enez.
\newblock {\em Verificaci\'on de la linealizabilidad en tiempo de ejecuci\'on
  (in Spanish)}.
\newblock Master Thesis. Posgrado en Ciencia e Ingenier\'ia de la
  Computaci\'on. UNAM, 2022.

\bibitem{L78}
L.~Lamport.
\newblock Time, clocks, and the ordering of events in a distributed system.
\newblock {\em Commun. {ACM}}, 21(7):558--565, 1978.

\bibitem{L79}
L.~Lamport.
\newblock How to make a multiprocessor computer that correctly executes
  multiprocess programs.
\newblock {\em {IEEE} Trans. Computers}, 28(9):690--691, 1979.

\bibitem{LS09}
M.~Leucker and C.~Schallhart.
\newblock A brief account of runtime verification.
\newblock {\em J. Log. Algebraic Methods Program.}, 78:293--303, 2009.

\bibitem{MB15}
M.~Mostafa and B.~Bonakdarpour.
\newblock Decentralized runtime verification of {LTL} specifications in
  distributed systems.
\newblock In {\em 2015 {IEEE} International Parallel and Distributed Processing
  Symposium, {IPDPS} 2015, Hyderabad, India, May 25-29, 2015}, pages 494--503.
  {IEEE} Computer Society, 2015.

\bibitem{NFBB17}
H.~Nazarpour, Y.~Falcone, S.~Bensalem, and M.~Bozga.
\newblock Concurrency-preserving and sound monitoring of multi-threaded
  component-based systems: theory, algorithms, implementation, and evaluation.
\newblock {\em Formal Aspects Comput.}, 29(6):951--986, 2017.

\bibitem{N94}
G.~Neiger.
\newblock Set-linearizability.
\newblock In J.~H. Anderson, D.~Peleg, and E.~Borowsky, editors, {\em
  Proceedings of the Thirteenth Annual {ACM} Symposium on Principles of
  Distributed Computing, Los Angeles, California, USA, August 14-17, 1994},
  page 396. {ACM}, 1994.

\bibitem{PZ06}
A.~Pnueli and A.~Zaks.
\newblock Psl model checking and run-time verification via testers.
\newblock In J.~Misra, T.~Nipkow, and E.~Sekerinski, editors, {\em FM 2006:
  Formal Methods}, pages 573--586, Berlin, Heidelberg, 2006. Springer Berlin
  Heidelberg.

\bibitem{R13}
M.~Raynal.
\newblock {\em Concurrent Programming - Algorithms, Principles, and
  Foundations}.
\newblock Springer, 2013.

\bibitem{rv24}
G.~V. Rodr{\'{\i}}guez and A.~Casta{\~{n}}eda.
\newblock Towards efficient runtime verified linearizable algorithms.
\newblock In E.~{\'{A}}brah{\'{a}}m and H.~Abbas, editors, {\em Runtime
  Verification - 24th International Conference, {RV} 2024, Istanbul, Turkey,
  October 15-17, 2024, Proceedings}, volume 15191 of {\em Lecture Notes in
  Computer Science}, pages 262--281. Springer, 2024.

\bibitem{RF22}
V.~Roussanaly and Y.~Falcone.
\newblock Decentralised runtime verification of timed regular expressions.
\newblock In A.~Artikis, R.~Posenato, and S.~Tonetta, editors, {\em 29th
  International Symposium on Temporal Representation and Reasoning, {TIME}
  2022, November 7-9, 2022, Virtual Conference}, volume 247 of {\em LIPIcs},
  pages 6:1--6:18. Schloss Dagstuhl - Leibniz-Zentrum f{\"{u}}r Informatik,
  2022.

\bibitem{SS14}
T.~Scheffel and M.~Schmitz.
\newblock Three-valued asynchronous distributed runtime verification.
\newblock In {\em Twelfth {ACM/IEEE} International Conference on Formal Methods
  and Models for Codesign, {MEMOCODE} 2014, Lausanne, Switzerland, October
  19-21, 2014}, pages 52--61. {IEEE}, 2014.

\bibitem{ShapiroPBZ11}
M.~Shapiro, N.~M. Pregui{\c{c}}a, C.~Baquero, and M.~Zawirski.
\newblock Conflict-free replicated data types.
\newblock In X.~D{\'{e}}fago, F.~Petit, and V.~Villain, editors, {\em
  Stabilization, Safety, and Security of Distributed Systems - 13th
  International Symposium, {SSS} 2011, Grenoble, France, October 10-12, 2011.
  Proceedings}, volume 6976 of {\em Lecture Notes in Computer Science}, pages
  386--400. Springer, 2011.

\bibitem{Vogels09}
W.~Vogels.
\newblock Eventually consistent.
\newblock {\em Commun. {ACM}}, 52(1):40--44, 2009.

\end{thebibliography}

\end{document}